\newcommand\Dist {\mathbb{D}}
\newcommand\Sec[1] {\S\ref{#1}}
\newcommand\Eqn[1] {Eqn~(\ref{#1})}
\newcommand\Thm[1] {Thm~\ref{#1}}
\newcommand\Lem[1] {Lem.~\ref{#1}}
\newcommand\Def[1] {Defn~\ref{#1}}
\newcommand\Cor[1] {Cor.~\ref{#1}}
\newcommand\Fig[1] {Fig.~\ref{#1}}
\newcommand\Alg[1] {Alg.~\ref{#1}}
\newcommand\calx {\mathcal{X}}
\newcommand\caly {\mathcal{Y}}
\newcommand\calz {\mathcal{Z}}
\newif\ifqif
    \newcommand\px      {\pi_x}    
    \newcommand\CondYX  {C_{x,y}}
    \newcommand\px   {P_X(x)}
    \newcommand\CondYX  {P_{Y|X}(y|x)}
\pgfplotsset{compat=1.18}
\newcommand{\NFTodo}[1]{\noindent {\color{teal}{Natasha TODO: #1}}}
\renewcommand*\env@matrix[1][*\c@MaxMatrixCols c]{%
  \hskip -\arraycolsep
  \let\@ifnextchar\new@ifnextchar
  \array{#1}}
\begin{document}
\title{Composition Theorems for $f$-Differential Privacy}

\author{Natasha Fernandes\inst{1}
\and
Annabelle McIver\inst{1} \and
Parastoo Sadeghi\inst{2}}
%

\institute{Macquarie University, Sydney \\\email{natasha.fernandes@mq.edu.au, annabelle.mciver@mq.edu.au}
 \and
UNSW, Canberra\\
\email{p.sadeghi@unsw.edu.au}
}

\maketitle              
\begin{abstract}
$f$-differential privacy ($f$-DP) is a recent definition for privacy which can offer improved predictions of  ``privacy loss''. It has been used to analyse specific privacy mechanisms, such as the popular Gaussian mechanism.

In this paper we show how $f$-DP's foundation in statistical hypothesis testing implies equivalence to the channel model of Quantitative Information Flow (QIF). We demonstrate this equivalence as a Galois connection between two partially-ordered sets, namely $f$-DP's trade-off functions, and a class of information channels. This equivalence enables  novel general composition theorems for $f$-DP, supporting improved analysis for complex privacy designs. We apply our results to the popular privacy amplification mechanisms of sub-sampling and purification, to produce novel $f$-DP profiles for these general privacy-enhancing algorithms. 

\keywords{Quantitative information flow  \and semantics for probabilistic programs \and compositional analyses for privacy.}
\end{abstract}
\section{Introduction}



Since the first definitions for privacy were introduced \cite{10.1142/S0218488502001648,Dwork:2006aa,8049725} the principles underlying privacy protections  have become  steadily more refined. It is now generally recognised that a privacy definition should be based on a measurement of ``information leakage'', however defined, and that the measurement should satisfy some version of the ``data processing inequality'',  namely that post-processing after a data release should only improve privacy.
A second important property of a privacy definition is that it should admit good composition laws, because most implementations of privacy algorithms usually comprise a composition of several different privacy-enhancing schemes, as illustrated in \Alg{Purify-1511} (below) \cite{lin2025purifyingapproximatediferentialprivacy}.


The purpose of \Alg{Purify-1511} is to ``boost'' the privacy properties of an input mechanism $M$ applied to a secret value $x$. This requires a fine-grained analysis of the steps in the computation so that the  information leaks due to $M$ on its own can be compared to the overall information leaks when used in conjunction with  \Alg{Purify-1511}. The challenge here is that the only information about $M$  is the privacy definition it satisfies. The traditional approach to verifying such implementations is to develop techniques based on refinement and abstraction, so that abstractions of program components can be analysed efficiently, with the resulting analysis also applicable to more detailed implementations via refinement.


\begin{algorithm}
\begin{algorithmic}\caption{Privacy purification}\label{Purify-1511}
\Require Mechanism $M:{\cal X}\rightarrow{\cal Y}$ satisfying $(\epsilon, \delta)$-differential privacy, Private input $x$, Parameters: $r\in [0,1]$, $\epsilon'{>0}$
\Ensure  Satisfies a ``pure differential privacy constraint''. (See comment after (\ref{dp-1053}).) 
\State $v\gets U[0,1]$; \Comment{Choose a value uniformly from $[0,1]$}
\If {$(v< r)$} \Comment{Hidden probabilistic choice with $r$ bias}
\State  $y\gets M(x)$ 
\Else
\State $y \gets U[{\cal Y}']$ \Comment{Choose a value uniformly from ${\cal Y}'$}
\EndIf
\State $z\gets y + G_{\epsilon'}(0)$; \Comment{Add Geometric perturbation using parameter $\epsilon'$}
\State Output $z$; \Comment{Output sanitised result}
\end{algorithmic}
\end{algorithm}

In this paper we investigate how to do that for privacy properties using a recent notion called \emph{$f$-differential privacy} \cite{10.1111/rssb.12454} ($f$-DP). Similar to the more established $(\epsilon, \delta)$-differential privacy (or $(\epsilon, \delta)$-DP for short), which is based on ``indistinguishability'' between related scenarios, $f$-DP is based on statistical hypothesis testing, which turns out to support more nuanced evaluations of privacy risks than does traditional DP. 
Although $f$-DP has the potential to provide more accurate  privacy assessments, a significant drawback is that it does not appear to be associated with simple composition laws, making it difficult to use in practice, except for specific mechanisms.

Our goal is to show how to enable accurate, general analysis of algorithms wrt.\ $f$-DP by establishing an equivalence between $f$-DP and information channels, using the theory of Quantitative Information Flow (QIF). Here, QIF provides an extensive theory for fine-grained analysis of information flow in programs \cite{Alvim:20a}, and is therefore suited to modelling combinations of general privacy-enhancing schemes. Moreover it supports analysis at different levels of abstraction through its information refinement order.

In this paper we provide a detailed foundational analysis of $f$-DP in terms of its information leakage properties via QIF. With that understanding, we show how $f$-DP admits a number of universal composition laws, supporting  detailed analysis of privacy-preserving algorithms. {\bf Our contributions} are:

\begin{enumerate}
\item We establish a Galois connection (\Thm{GC-1651}) between two partially ordered sets: the set of $f$-DP's trade-off functions under pointwise less-than $({\mathbb F}, \leq)$, and  the set of QIF's two-row information channels ordered by refinement, $({\mathbb C}_2, \sqsubseteq)$.  The Galois connection is defined by two order-preserving mappings ${\cal T}$ and ${\cal C}$ in \Fig{Contributions}. We discover (\Thm{QIF+HT-1213}) a novel relationship between the leakage measurements in QIF and so-called ``hockey-stick divergence'', as indicated by $({\mathbb C}_2, \leq_{\underline{h}})$ in \Fig{Contributions}. 
\item We study (\Sec{comp-defs}) the behaviour of $f$-DP wrt.\ a range of probabilistic program constructors via their interpretation in ${\mathbb C}_2$, including hidden and visible probabilistic choices, probabilistic perturbation and pre-processing, producing a number of new universal composition laws (\Sec{Univ-comps-1028}). 
\item We demonstrate (\Sec{Applications-main}) our techniques on some popular algorithms, including purification (\Alg{Purify-1511}) and sub-sampling (\Alg{Subsample-poisson-2124}).
\end{enumerate}

\vspace{-12pt}
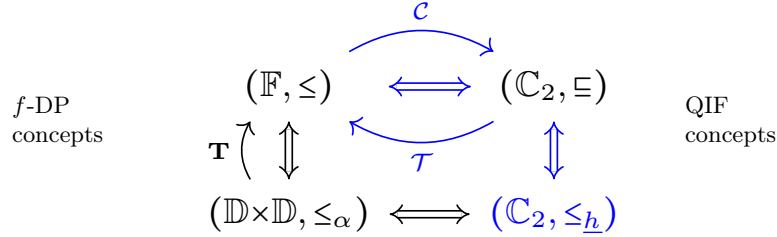
\begin{figure}[!th]
\centering
\begin{minipage}{0.2\textwidth}
$f$-DP \\concepts
\end{minipage}\hspace{12pt}
\begin{minipage}{0.4\textwidth}
\adjustbox{scale=1.5,center}{
\begin{tikzcd}[%
   baseline=0
  ,outer ysep=1px
  ,outer xsep=1px
   ]
    |[minimum width=1.7cm]|  (\mathbb{F}, \leq)
        \arrow[r, bend left, "\mathcal{C}", color=blue]
        \arrow[d, Leftrightarrow] 
        \arrow[r, Leftrightarrow, color=blue]
    & |[minimum width=1.4cm]|(\mathbb{C}_2, \sqsubseteq)
        \arrow[l, bend left, "\cal T", color=blue] 
        \arrow[d, Leftrightarrow, color=blue]  \\
    %
    (\mathbb{D}{\times}\mathbb{D}, \leq_\alpha) 
        \arrow[r, Leftrightarrow] 
        \arrow[u, bend left, shift left=1.2ex, "\textbf{T}"]
    & |[color=blue]|(\mathbb{C}_2, \leq_{\underline{h}}) 
\end{tikzcd}
}
\end{minipage} \hspace{25pt}
\begin{minipage}{0.2\textwidth}
QIF \\concepts
\end{minipage}
\caption{A summary of the relationships between hypothesis testing and quantitative information flow. Our contributions in this paper are highlighted in \textcolor{blue}{blue}.}\label{Contributions}
\end{figure}

\section{Preliminaries: privacy and information flow}




\subsection{Standard Differential Privacy}
A differential privacy mechanism is one that sanitises results of queries to datasets by adding a random perturbation before publication.   We model such mechanisms ${\cal M}$ as a function $\mathcal{D} \to \Dist{\cal Y}$, where $\mathcal{D}$  denotes the set of datasets, and ${\mathbb D}{\cal Y}$ is the set of probability distributions over set ${\cal Y}$. (For simplicity we work under the assumption that ${\cal Y}$ is discrete, showing how to remove it in the appendix.)
A dataset itself consists of a ``set of records''; standard \emph{differential privacy} enforces a constraint on ${\cal M}$ wrt.\ non-negative parameters $(\epsilon, \delta)$, and pairs of ``adjacent datasets''. We say that $D,D'\in {\cal D}$ are \emph{adjacent} ``if they differ by a single record'', which we define here as: $D \subseteq D'$ and $|D'- D|=1$, or vice versa.
Mechanism ${\cal M}$ is said to satisfy $(\epsilon, \delta)$-privacy if, for any $Y\subseteq {\cal Y}$,  and any pair of \emph{adjacent} datasets:
%
\begin{equation}\label{dp-1053}
{\cal M}(D)(Y) ~~\leq~~ e^\epsilon{\cal M}(D')(Y) + \delta~.
\end{equation}
We say that ${\cal M}$ satisfies a {\bf ``pure'' differential privacy} definition if $\delta{=}0$.

We typically visualise this definition as follows. Let ${\cal Y}= \{y_0,\dots, y_n\}$, and  $D_0, D_1$ be a pair of adjacent datasets. We write $p_i={\cal M}(D_0)(\{ y_i\})$ (and $q_i={\cal M}(D_1)(\{ y_i\})$) for the probability that the output $y_i$ is observed when the input to ${\cal M}$ is $D_0$ (or $D_1$). This yields the following ``channel'', $M$ (see \Sec{QIFSec} below):

\begin{equation}\label{M-chan}
M ~~=~~ \begin{bmatrix}
p_0 & p_1 & \dots & p_{n-1} & p_n\\
q_0 & q_1 & \dots & q_{n-1} & q_n
\end{bmatrix}~, ~~~\textit{where}~~~ \sum_i p_i = \sum_i q_i =1~.
\end{equation}

Wlog, assume $M_{D_0}$ is the first row and  $M_{D_1}$ is the second row of this channel. Then, $M$ satisfies $(\epsilon, \delta)$-DP if for all $S\subset \{0,\cdots, n\}$, $p_S \leq e^\epsilon q_S {+}\delta$, and $q_S \leq e^\epsilon p_S {+}\delta$, where $p_S = \sum_{i\in S} p_i$, $q_S = \sum_{i\in S} q_i$.
Not all perturbation methods satisfy $(\epsilon, \delta)$-DP for given parameters. 
When $\epsilon$ and $\delta$ are small, then the probability pairs $(p_i, q_i)$ are more similar to each other, and so it makes it harder to distinguish between the $(D_0, D_1)$ inputs for any observed output. The idea is that making it hard  to distinguish between adjacent datasets, means making it hard  to determine whether any particular record has been exposed.
Notice that $(\epsilon, \delta)$-DP takes a ``worst-case'' approach, in that the level of ``indistinguishability'' between $(D_0, D_1)$ as defined at \Eqn{dp-1053} must hold whatever the output $y$, however unlikely its occurrence.   In common scenarios where a mechanism is repeatedly applied to the same dataset, this worst-case measurement quickly becomes severe. For example, the standard $(\epsilon,\delta)$-DP composition theorem says that ${\cal M}\circ {\cal M}$ satisfies $(2\epsilon, 2\delta)$-privacy, if ${\cal M}$ satisfies $(\epsilon,\delta)$ privacy. It is now recognised that more nuanced definitions of privacy can yield more realistic predictions of  privacy risks, with $f$-DP being a recent proposal \cite{annurev:/content/journals/10.1146/annurev-statistics-112723-034158}.

\subsection{ $f$-differential privacy}\label{f-dp-basics}

$f$-DP uses hypothesis testing as the means to distinguish between inputs.
Let  ${\cal M}(D_0)= p ~~{\cal M}(D_1)=q$; further let $H_0, H_1$ be hypotheses:
%
\[H_0: \text{the input dataset is $D_0$} \qquad H_1: \text{the input dataset is $D_1$}~.
\]
%


\begin{definition}\label{test-1047}
A test is a mapping $\phi: \mathcal{Y} \to [0,1]$ where $\phi(y) = 0$ means $H_0$ is  accepted, $\phi(y) = 1$ means $H_1$ is accepted, and $\phi(y) = c$ means $H_0$ is accepted with probability $c$. 
%
For $p$ and $q$  probability distributions, as above, the \emph{significance level of test} $\phi$ is $\alpha_\phi := \mathbb{E}_{p}[\phi]$. The \emph{power of test} $\phi$ is $1-\beta_\phi := \mathbb{E}_{q}[\phi]$.
\end{definition}

The quantities $\alpha_\phi, \beta_\phi$ are known respectively as Type I and Type II errors, or false negative and false positive rates. It turns out that the most effective test for distinguishing between distributions, in terms of minimising the false positive and negative rates, is the simple \emph{likelihood ratio test}.
The celebrated Neyman-Pearson lemma sets out the details.

\begin{lemma}[Neyman-Pearson \cite{NPL1933}]\label{NM-1511}
Let $p, q$ be distributions as in \Def{test-1047}. 
A test $\phi: \mathcal{Y} \to [0,1]$ is the most powerful at significance level $\alpha$, i.e. $\mathbb{E}_{p}[\phi] = \alpha$, if there are two constants $h \in [0,\infty]$ and $c \in [0,1]$ such that the test has the form:
\begin{equation}\label{np-form-1056}
\phi(y) = \begin{cases}
      1, & \text{if } q_y > h p_y \\
      c & \text{if } q_y = h p_y \\
      0, & \text{if } q_y < h p_y
    \end{cases}~.
    \end{equation}
\end{lemma}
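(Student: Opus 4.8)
The Neyman--Pearson lemma is a classical result, so the plan is to give the standard variational argument rather than anything QIF-specific.

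\medskip

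\noindent\textbf{Plan.} I would prove that a test of the form (\ref{np-form-1056}) is most powerful at its own significance level $\alpha := \mathbb{E}_p[\phi]$. Fix such a threshold test $\phi$ with constants $h$ and $c$, and let $\psi: \mathcal{Y}\to[0,1]$ be any other test with $\mathbb{E}_p[\psi]\leq\alpha$. The goal is to show $\mathbb{E}_q[\psi]\leq\mathbb{E}_q[\phi]$, i.e. that $\phi$ has power at least as large. The key observation is the pointwise inequality
\[
\bigl(\phi(y)-\psi(y)\bigr)\bigl(q_y - h\,p_y\bigr) ~\geq~ 0 \qquad\text{for every } y\in\mathcal{Y}.
\]
This is a short case analysis using the explicit form (\ref{np-form-1056}): when $q_y > h p_y$ we have $\phi(y)=1\geq\psi(y)$, so the first factor is $\geq 0$ and the second is $>0$; when $q_y < h p_y$ we have $\phi(y)=0\leq\psi(y)$, so the first factor is $\leq 0$ and the second is $<0$; and when $q_y = h p_y$ the second factor vanishes. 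In all three cases the product is nonnegative.

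\medskip

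\noindent Summing (or integrating) this inequality over $y$ gives
\[
\sum_y \bigl(\phi(y)-\psi(y)\bigr) q_y ~\geq~ h\sum_y \bigl(\phi(y)-\psi(y)\bigr) p_y,
\]
that is, $\mathbb{E}_q[\phi]-\mathbb{E}_q[\psi] \geq h\bigl(\mathbb{E}_p[\phi]-\mathbb{E}_p[\psi]\bigr)$. Since $h\geq 0$ and $\mathbb{E}_p[\psi]\leq\alpha=\mathbb{E}_p[\phi]$, the right-hand side is $\geq 0$, hence $\mathbb{E}_q[\phi]\geq\mathbb{E}_q[\psi]$, which is exactly the claim that $\phi$ is most powerful. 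The case $h=\infty$ is handled separately (or by a limiting argument): there the test rejects $H_0$ only on the support-difference where $p_y=0$, and one argues directly that no test with $\mathbb{E}_p[\psi]\leq\alpha$ can do better. One should also remark that for every target level $\alpha\in[0,1]$ suitable constants $h,c$ exist, since $\alpha\mapsto\mathbb{E}_p[\phi_{h,c}]$ ranges continuously over $[0,1]$ as $h$ decreases from $\infty$ to $0$ (with $c$ interpolating across the jump at each atom of the likelihood ratio).

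\medskip

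\noindent\textbf{Main obstacle.} The mathematical content is entirely in the pointwise inequality, which is elementary; the only genuinely delicate points are bookkeeping rather than depth. First, the statement as phrased fixes $\alpha$ and asserts existence of $h,c$ — so one must check the existence/surjectivity claim above, not just optimality of a given threshold test; this requires care at atoms of the likelihood ratio where the randomisation constant $c$ does its work. Second, the $h=\infty$ (and $h=0$, $\alpha=1$) boundary cases, and the treatment of points where $p_y=0$ or $q_y=0$, need to be stated cleanly so the product argument still goes through. If the paper's discrete setting is assumed throughout, all sums are finite and no measure-theoretic subtleties arise, so these remarks can be kept brief.
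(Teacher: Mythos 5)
Your proof is correct: the pointwise inequality $(\phi(y)-\psi(y))(q_y-h\,p_y)\geq 0$ followed by summation is the canonical Neyman--Pearson argument, and your handling of the boundary cases ($h=\infty$, atoms of the likelihood ratio requiring the randomisation constant $c$) is appropriate for the paper's discrete setting. Note that the paper itself offers no proof of this lemma --- it is cited directly from Neyman and Pearson's 1933 paper as a classical result --- so there is nothing to compare against beyond confirming that your argument is the standard one and establishes exactly the sufficiency claim as stated.
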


A trade-off function details the relation between Type I/II errors wrt.\  most powerful tests.
\begin{definition}[Trade-off function]\label{tof-1402}
Let $p$, $q$ be as above.  The trade-off function ${\bf{T}}(p,q): [0,1] \to [0,1]$ is defined:
\begin{equation}\label{toff-1527}
{\bf T}(p,q)(\alpha) = \inf_{\phi}\{\beta_\phi: \alpha_\phi \leq \alpha\} ~.
\end{equation}
We say that $(p, q)\leq_\alpha (p', q')$ if ${\bf T}(p, q)(\alpha)\leq {\bf T}(p', q')(\alpha)$ for $0{\leq}\alpha{\leq} 1$.
\end{definition}

  It turns out that, for fixed distributions, ${\bf T}(p,q)$ is convex and satisfies ${\bf T}(p,q)(\alpha)\leq 1{-}\alpha$. The definition of $f$-DP uses \emph{abstract trade-off functions} to describe indistinguishability via hypothesis testing.

\begin{definition}[Abstract trade-off functions]\label{trade-off-1545}
The set of abstract trade-off functions $({\mathbb{F}, \leq})$ defines $f{\in}\mathbb{F}$ if it is a convex function $[0,1]{\rightarrow}[0,1]$, and satisfies $f(\alpha){\leq}(1{-}\alpha)$.  
Abstract trade-off functions are ordered pointwise, i.e\ $f{\leq} f'$ if and only if $f(\alpha){\leq} f'(\alpha)$ for all $0{\leq}\alpha{\leq}1$.

The pointwise maximum  of two abstract trade-off functions is $f\sqcup f'$ (and is a trade-off function). We define the trade-off minimum $f \sqcap f'$ to be $\sqcup\{ g~ |~ g {\leq} f~\textit{and}~ {g\leq f'}\}$. The minimum trade-off function is the constant zero function, and the maximum trade-off function  takes $\alpha$ to $1{-}\alpha$. 
\end{definition}


\begin{definition}[$f$-DP]\label{fpriv-1629}
For $f\in {\mathbb F}$, we say that mechanism $\cal{M}$ 
 satisfies $f$-DP if $f \leq {\bf T}({\cal M}(D), {\cal M}(D'))$, for any pair of adjacent datasets $D, D'$.
\end{definition}


It was shown in \cite{10.1111/rssb.12454}, that a mechanism ${\cal M}$ satisfies $(\epsilon, \delta)$-DP if and only if it  satisfies 
$f_{\epsilon, \delta}$-DP,  where $f_{\epsilon,\delta}$ is depicted in \Fig{Fig-tof-1140}, and  defined:
\begin{equation}\label{fepsdelta}
f_{\epsilon,\delta}(\alpha)= \max\{0,~~ -e^\epsilon\alpha{+} 1{-}\delta, ~~-e^{-\epsilon}\alpha + e^{-\epsilon}(1{-}\delta) \}~.
\end{equation}

\vspace{-12pt}
\begin{figure}[!th]
   \begin{minipage}{0.4\textwidth}
\includegraphics[width=0.9\textwidth]
{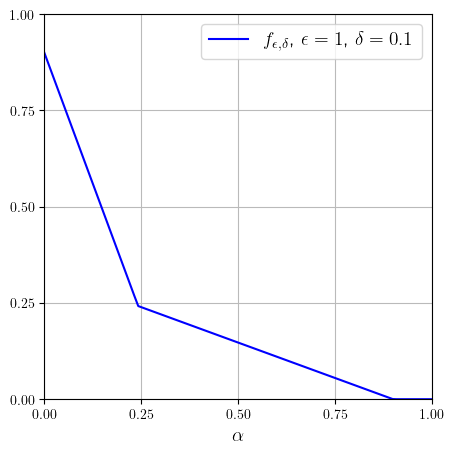}
\end{minipage}
   \begin{minipage}{0.6\textwidth}
Notice that the parameters for standard differential privacy can be read-off from the plot via gradients.
If the gradient $f_{\epsilon,\delta}$ at significance level $\alpha$ is $\epsilon'$, then it means that there is a test that can distinguish between $M_{D_0}$ and $M_{D_1}$ consistent with (pure) $\epsilon'$-DP. If 
there is some $\alpha$ for which the gradient is $0$ or $\infty$ then it means that there is some non-zero probability for which there is a test which exactly distinguishes between $M_{D_0}$ and $M_{D_1}$. For the plot at left this occurs for $1{-}\delta{\leq}\alpha{\leq}1$, where the gradient of $f_{\epsilon,\delta}$ is $0$.  
    \end{minipage}    \caption{Trade-off function $f_{\epsilon, \delta}$.  }\label{Fig-tof-1140}
\end{figure}

\vspace{-12pt}
Unlike standard differential privacy, the details of the trade-off function $f$ make it challenging to determine composition theorems that reflect $f$-DP accurately, not only for the basic composition ${\cal M}\circ {\cal M}$ mentioned above, but for other compositions that typically arise in privacy-enhancing algorithms. 
%
We set out a systematic approach for defining a range of composition theorems for $f$-DP, by demonstrating an equivalence between $({\mathbb F}, \leq)$ and a class of information channels, for which compositions can be straightforwardly defined. 

\subsection{Information channels and QIF}\label{QIFSec}
 QIF~\cite{Alvim:20a} is a framework for quantifying information leaks in programs. It features information channels as the basic model, together with the ``g-leakage'' semantics for assessing associated security risks, depending on the scenario. We set out components of QIF, and summarise its mathematical properties which are needed here.

An information channel $C$ maps inputs (secrets) $x \in \calx$ to observations $y \in \caly$ according to a distribution in $\Dist{\caly}$. In the discrete case, such channels are $\calx{\times}\caly$ matrices $C$ whose row-$x$, column-$y$ element $\CondYX$ is the probability that input $x$ produces observation $y$. The $x$-th row $C_{x,-}$ is thus a discrete distribution in $\Dist{\caly}$. 
For example the channel $M$ at ~\Eqn{M-chan} is displayed as a channel where ${\cal X}$ consists of adjacent datasets $\{ D_0, D_1\}$.

We can use Bayes rule to model an adversary who uses their observations from a channel to (optimally) update their knowledge about the secrets $\calx$. 
Given a prior distribution $\pi : \Dist{\calx}$ (representing an adversary's prior knowledge) and channel $C$, we can compute a joint distribution $J: \Dist(\calx{\times}\caly)$ where $J_{x,y} = \px \CondYX$. Marginalising down columns yields the $y$-marginals $Pr(y) = \sum_x \px \CondYX$ each having a posterior over $\calx$ corresponding to the posterior probabilities $P_{X|y}(x)$, computed as $\nicefrac{J_{x,y}}{Pr(y)}$ (when $Pr(y)$ is non-zero). We denote by $\delta^y$ the posterior distribution $P_{X|y}(X|y)$ corresponding to the observation $y$. The set of posterior distributions and the corresponding marginals can be used to compute the adversary's posterior knowledge after making an observation from the channel.

\begin{definition}[Refinement of channels]\label{ref-1814}
Let $C {\in} \calx \rightarrow \Dist{\caly}$ and $C' {\in} \calx \rightarrow \Dist{\calz}$ be channels; we say $C$ is refined by $C'$ or $C \sqsubseteq C'$ if there is a channel $W{\in} \caly\rightarrow \Dist{\calz}$ such that $C\cdot W= C'$. We call $W$ the witness to the refinement.
\end{definition}
We model information leakage using \emph{gain functions}.

\begin{definition}[Leakage semantics]\label{leak-1819}
 A gain function is a mapping ${\cal A}{\times}\calx\rightarrow {\mathbb R}$, where ${\cal A}$ is a set of actions. Given a gain function $g$, we can define a vulnerability $V_g: \Dist \calx \rightarrow {\mathbb R}$, defined $V_g[\pi]:= \max_{a\in {\cal A}}\sum_{x{\in}\calx}\pi_x{\times}g(a,x)$.
The conditional vulnerability wrt.\ channel $C$ and prior $\pi$ is given by $V_g[\pi {\triangleright} C]:= \sum_{y{\in}\caly}Pr(y)V_g[\delta^y]$.
\end{definition}
We focus on the following class of channels.
\begin{definition}
Let $(\mathbb{C}_2, \sqsubseteq)$ be the set of 2-row channels, ordered by refinement. 
\end{definition}

\noindent {\bf Summary of QIF properties established elsewhere \cite{Alvim:20a}.} Observe that the leakage semantics is based on a generalisation of the notion of ``entropy'' and we can use it to determine how much information is leaked  by comparing the prior vulnerability $V_g[\pi]$ to the posterior vulnerability $V_g[\pi {\triangleright} C]$, with the greater the difference corresponding to a greater amount of  leaked information relative to the gain $g$.  We summarise the leakage properties we need here; more details can be found elsewhere \cite{Alvim:20a}.

\begin{enumerate}[(I)]
\item $V_g[\pi {\triangleright} C]$ is independent of the column labels of $C$; this means that we can re-order the columns of $C$ without changing its leakage semantics.
\item $C \sqsubseteq C'~~ \textit{iff}~~V_g[u {\triangleright} C] \geq V_g[u {\triangleright} C']$, for all gain functions $g$, and $u$ the uniform prior on ${\cal X}$.
\item We can render a  channel $C{\in}{\mathbb C}_2$ as the corresponding ``hyper-distribution'' $[u{\triangleright} C]$ as a convex sum $\sum_{y{\in}{\cal Y}} Pr(y) \delta^y$, where we are considering $\delta^y$ as a 1-summing vector in $[0,1]{\times}[0,1]$. This means that we can depict the posteriors using a Barycentric representation, as illustrated in \Fig{Barycentric-1356}. 
\item It turns out that if $C{\in}{\mathbb C}_2$, having exactly two posteriors (i.e., over two outputs), then $C\sqsubseteq C'$ if and only if all of the posteriors of $[u{\triangleright}C']$ lie in the convex hull of the two posteriors of $[u{\triangleright}C]$.
\end{enumerate}

\begin{figure}
\centering
\begin{minipage}{0.55\textwidth}
\includegraphics[width=0.95\textwidth]{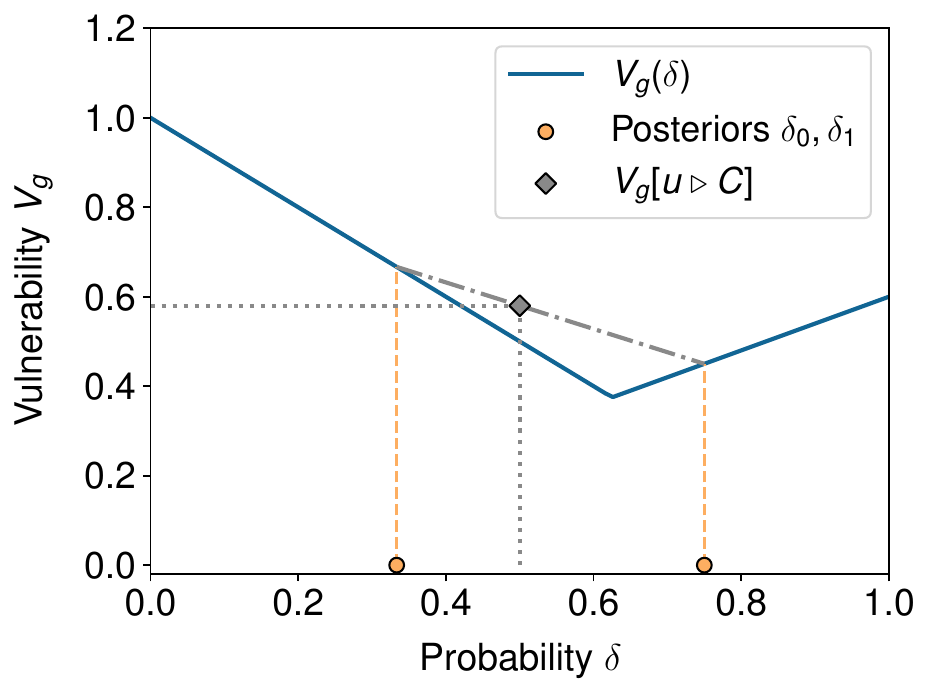}
    \end{minipage}\hfill
    \begin{minipage}{0.45\textwidth}
    \caption{Barycentric representation of $[u {\triangleright}C]$ for $C=\begin{bmatrix} 
2/5 & 3/5\\
4/5 & 1/5\end{bmatrix}$, showing posteriors $(1/3, 2/3)$ and $(3/4, 1/4)$, rendered as orange points on the horizontal axis to indicate the probability of the first component. Observe that
$V_g[u{\triangleright} C]$ corresponds to the intersection of the vertical at the mid-point, and the line connecting $V_g(\delta_0)$ and $V_g(\delta_1)$, as $C$ has only two posteriors.
}\label{Barycentric-1356}
    \end{minipage}
\end{figure}

\section{Modelling $f$-DP in QIF}\label{f-DP-QIF-Sec}
Let $M{\in}{\mathbb C}_2$ be a channel; recall that its two rows denoted by $M_{D_0}$ $M_{D_1}$ define two distributions over output ${\cal Y}$. We can therefore map two row channels to trade-off functions using \Def{tof-1402}. 

\begin{definition}[Distinguishability profile]\label{d1643}
Let $M{\in}{\mathbb C}_2$; define the distinguishability profile as a mapping ${\cal T} : \mathbb{C}_2{\rightarrow} \mathbb{F}$: 
\[
{\cal T}(M):= {\bf T}(M_{D_0}, M_{D_1})~.
\]
\end{definition}

Our aim in this section is to show that indistinguishability properties defined by trade-off functions can be modelled exactly in terms of information leakage properties of ${\mathbb C}_2$. We demonstrate that hypothesis testing at significance $\alpha$ corresponds to a class of $2{\times}2$ channels, and that the Neyman-Pearson Lemma(\ref{NM-1511}) suggests a distinguishing class of gain functions called ``hockey-stick'' functions. 

\subsubsection{\bf ${\cal T}(M)(\alpha)$ defines a refinement:} Recall the channel $M$ at \Eqn{M-chan}; by (I) above we may assume that the columns  are ordered with increasing ratio $q_i/p_i$:
%
\begin{equation}\label{orderedM}
M ~~=~~ \begin{bmatrix}[ccc|ccc]
p_0  & \dots & p_{k-1}~ & ~p_k & \dots  & p_n\\
q_0  & \dots & q_{k-1}~ & ~q_k& \dots  & q_n
\end{bmatrix}
\end{equation}
The Neyman-Pearson Lemma says that the most effective test to achieve a given significance level $\alpha$ is defined by some $h{\geq}0$; just above we have indicated a line separating the $k{-}1$'th column from the $k$'th column corresponding to such a test, where to the right of the line we have $q_r/p_r{\geq} h$, and to the left we have $q_l/p_l{<}h$. But now, referring to \Def{test-1047} the corresponding significance level $\alpha=\sum_{r\geq k}p_r$, and power ${\cal T}(M)(\alpha)=\sum_{l{<}k}q_l$. We can summarise these observations using the refinement $M^\alpha= M\cdot R^h$, where the the post-processing channel is given by:

\[R^h_{i,0} = \begin{cases}1, \quad & \text{if}\quad  q_i \geq h p_i \\
0, \quad &\text{else}.
\end{cases} \qquad \qquad R^h_{i,1} = \begin{cases}0, \quad & \text{if}\quad  q_i \geq h p_i \\
1, \quad &\text{else}~.
\end{cases}\]

More generally, we have the following direct definition.

\begin{definition}[Trade-off channel]\label{toc-1503}
Given a channel $M{\in} \mathbb{M}_2$, define the trade-off channel at significance level $\alpha$ to be:
\[
M^\alpha ~~: ~~=~~\begin{bmatrix}
1{-}\alpha ~~&~~ \alpha\\
{\cal T}(M)(\alpha) ~~&~~ 1{-}{\cal T}(M)(\alpha)
\end{bmatrix}
\]
\end{definition}

For a given $h$, we can compute the error probability for the test it defines. 

\begin{definition}[Error function]\label{err-1828}
The error probability $\alpha$ of $M$ at level $h$ is:
\[
\alpha = \textit{err}_M(h)~~=~~ \sum_{q_i-hp_i \geq 0} p_i~~\textit{and}~~ {\cal T}(M)(\textit{err}_M(h))~~=~~1{-}\sum_{q_i-hp_i \geq 0} q_i~.
\]
\end{definition}
\subsubsection{Test at level $h$ defines a gain function:} Next, for $h{\geq}0$ we can define a class of ``hockey-stick'' gain functions, so-called because they give rise to vulnerabilities that resemble a hockey stick, as illustrated in \Fig{HS-ref-805}, below.

\begin{definition}[Hockey-stick gain]\label{hsf-1254}
Given $h{\geq} 0$, we define the hockey-stick gain function $\underline{h}$:
\[
\underline{h}(a_1, d)~~:=~~ 1~~ \textit{if}~~ d=D_1~~\textit{else}~~-h~,~~~~\underline{h}(a_0, d)~~:=~~ 0~~ \textit{if}~~ d{\in}\{D_0, D_1 \}~.
\]
The associated vulnerability $V_{\underline{h}}$ is called a ``hockey-stick'' vulnerability.
\end{definition}

Finally, we can use hockey-stick functions to define a partial order on channels which, we will see below in \Thm{QIF+HT-1213}, enables us to prove an equivalence between trade-off functions and channels.

\begin{definition}[Hockey-stick order]\label{h-less-1810}
We define the hockey-stick order on channels: we say  $C \leq_h M$ whenever $V_{\underline{h}}[u \triangleright C] \leq V_{\underline{h}}[u \triangleright M]$, for all $h{\geq}0$.
\end{definition}
\vspace{-2mm}
\subsection{Trade-off functions, hockey sticks and refinement}
We illustrate, briefly, the concepts we have so far defined.
%
%
Consider channels $C, M {\in} \mathbb{C}_2$, and  recall that each  defines a trade-off function $f_C= {\cal T}(C), f_M={\cal T}(M) \in \mathbb{F}$. For a given $\alpha$, we construct the trade-off channels $C^\alpha, M^\alpha$, as per \Def{toc-1503}. An example is shown below for $\alpha = 0.1$. 
{
\renewcommand{\arraystretch}{1.5}
\setlength{\arraycolsep}{1.2ex}
\[
    M^\alpha = \begin{bmatrix}
        \nicefrac{9}{10} & \nicefrac{1}{10} \\
        \nicefrac{4}{5} & \nicefrac{1}{5}
    \end{bmatrix} \qquad
    C^\alpha = \begin{bmatrix}
        \nicefrac{9}{10} & \nicefrac{1}{10} \\
        \nicefrac{1}{2} & \nicefrac{1}{2}
    \end{bmatrix}
\]
}

In our example, $f_C(\alpha) < f_M(\alpha)$. This implies that $C^\alpha \sqsubset M^\alpha$, as depicted in \Fig{HS-ref-805} (below).
%
%
As noted in \Fig{Barycentric-1356}, computing $V_g[u{\triangleright}M]$, where $M$ has only two columns corresponds to a simple construction on the Barycentric representation. Applied here to hockey-stick functions and $M^\alpha, C^\alpha$, we can see clearly that for $C^\alpha \sqsubset M^\alpha$, the construction shows that
%
$V_{\underline{h}}
[u{\triangleright}C^\alpha] \geq V_{\underline{h}}[u{\triangleright}M^\alpha]$, indicated in \Fig{HS-ref-805} by the grey point on the orange diagonal line (corresponding to $V_{\underline{h}}[u{\triangleright}C^\alpha]$) lying above the grey point on the blue diagonal line ($V_{\underline{h}}[u{\triangleright}M^\alpha]$).

Whilst \Fig{HS-ref-805} illustrates the idea that hockey-stick gain functions characterise refinement of $2{\times}2$ channels, \Fig{Refinement-whs} shows how these observations can be transferred to refinement more generally in ${\mathbb C}_2$. The plots show two equivalent methods for computing $V_{\underline{h}}[u{\triangleright}C]$: on the left each posterior is evaluated and then averaged by their marginal, on the right the averaging happens first; by linearity the final values are the same. The trick here is to note that provided that the $h$ corresponds to the most powerful Neyman-Pearson test for the given $\alpha$, the averaging on the right corresponds to the refinement to $C^\alpha$. Therefore we can deduce that
$ V_{\underline{h}}[u \triangleright C] = \frac{1}{2} \left( \sum_{i=k}^{n} q_i - h p_i\right)$ which is also equal to $\frac{1}{2}\left(1{-}f_C(\alpha){-}h \alpha\right)$, where $\alpha =\textit{err}_C(h)$, which in turn is also equal to $V_{\underline{h}}[u \triangleright C^\alpha]$. 



\begin{figure}[!th]\label{fig:hockey_sticks-1}
\centering
    \begin{minipage}{0.6\textwidth}
\includegraphics[width=0.9\textwidth]{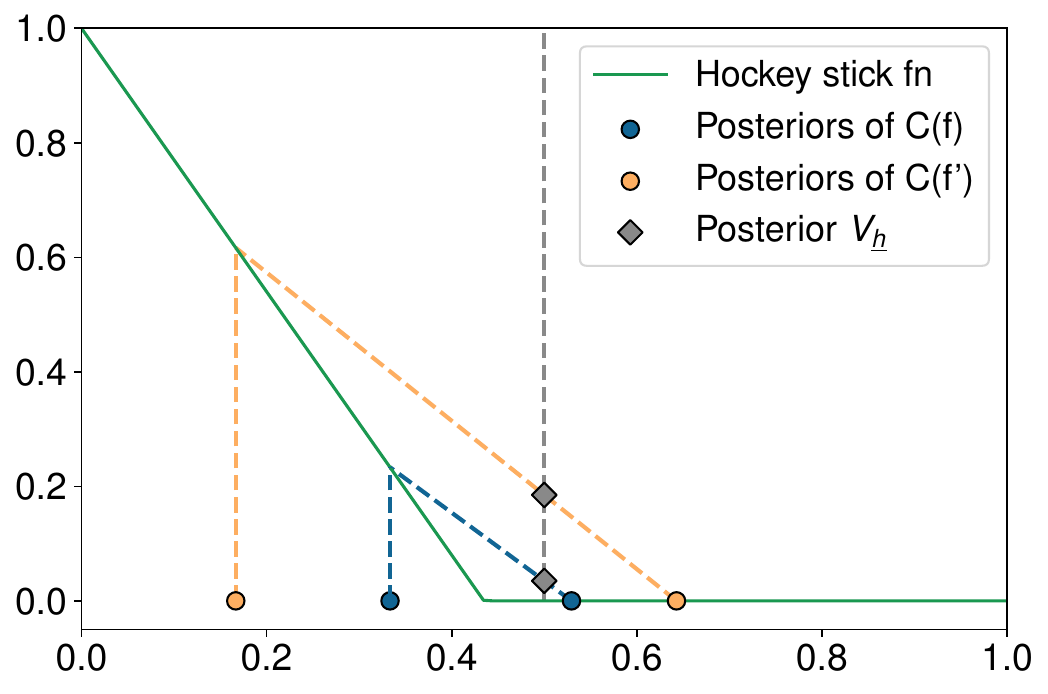}
    \label{fig:hockey_graphs}
    \end{minipage}\hfill
    \begin{minipage}{0.4\textwidth}
    \caption{Illustration of refinement: The posteriors (orange points) of $C^\alpha$ lie outside the posteriors (blue points) of $M^\alpha$ indicating by (IV) that $C^\alpha \sqsubseteq M^\alpha$. For every hockey stick function (green line), the orange diagonal line will lie above (or on) the blue diagonal line, indicating that $V_{\underline{h}}[u{\triangleright}C^\alpha] \geq V_{\underline{h}}[u{\triangleright}M^\alpha]$ for any $h$. The grey diamonds correspond to the particular $V_{\underline{h}}$ values for $C^\alpha$ and $M^\alpha$ in this example.}\label{HS-ref-805}
    \end{minipage}
\end{figure}

\begin{figure}[!th]
\includegraphics[width=\textwidth]{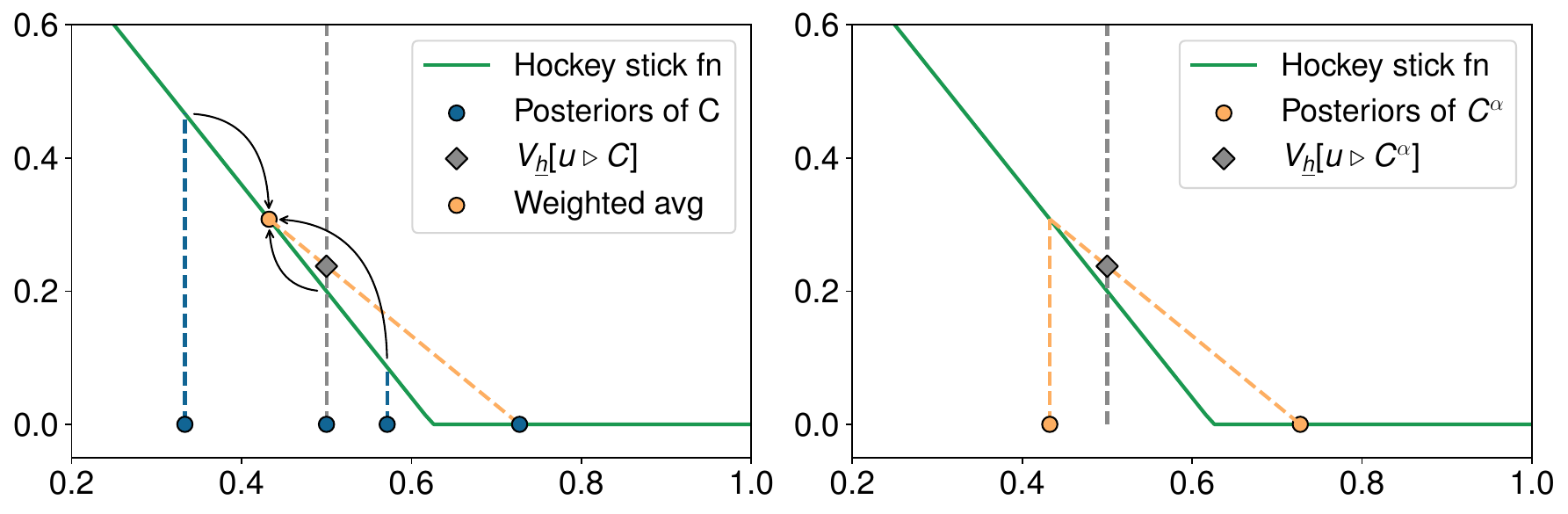}
\caption{Two equivalent methods for computing $V_{\underline{h}}[u{\triangleright}C]$: the left plot first computes $V_{\underline{h}}[\delta]$ for each (blue) posterior of $C$, and then averages, the right plot first averages the (blue) posteriors (equivalent to taking a refinement) and then computes $V_{\underline{h}}[\delta']$ on the two (orange) results.}\label{Refinement-whs}
\end{figure}

The next theorem (proved in the appendix) shows that these ideas illustrated in \Fig{HS-ref-805} and \Fig{Refinement-whs} hold in general, in particular that  hockey stick vulnerabilities are sufficient to characterise refinement in ${\mathbb C}_2$.



\begin{theorem}[Hypothesis testing in QIF]\label{QIF+HT-1213}
For $M, C{\in} \mathbb{C}_2$, the following holds:

\begin{enumerate}
\item If ${\cal T}(C)\leq {\cal T}(M)$, then $C^\alpha \sqsubseteq M^{\alpha}$~.
\item If $M^\alpha \sqsubset C^\alpha$, then there exists an $h$ such that we have $V_{\underline{h}}[u \triangleright C] < V_{\underline{h}}[u \triangleright M]$~, 
\item $C \leq_h M$ iff $M \sqsubseteq C$.
\end{enumerate}

\end{theorem}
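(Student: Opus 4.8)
The plan is to treat the three parts in turn, exploiting throughout that $C^\alpha$ and $M^\alpha$ share their first row $(1{-}\alpha,\alpha)$, so that refinement between them is governed by a single scalar, the trade-off value at $\alpha$. \emph{For part~(1):} under the uniform prior, Bayes' rule and \Def{toc-1503} give the two posteriors (the probability of $D_0$) of $C^\alpha$ as $\frac{1-\alpha}{1-\alpha+{\cal T}(C)(\alpha)}$ and $\frac{\alpha}{\alpha+1-{\cal T}(C)(\alpha)}$, and likewise for $M^\alpha$ with ${\cal T}(M)(\alpha)$. Since a trade-off function satisfies ${\cal T}(C)(\alpha)\le 1{-}\alpha$, the first posterior is $\ge\frac{1}{2}$ and the second is $\le\frac{1}{2}$; moreover the first is decreasing and the second increasing in the value ${\cal T}(C)(\alpha)$. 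Hence ${\cal T}(C)\le{\cal T}(M)$ forces the closed interval spanned by the two posteriors of $M^\alpha$ to lie inside that spanned by those of $C^\alpha$, and QIF property~(IV) yields $C^\alpha\sqsubseteq M^\alpha$. (If ${\cal T}(C)(\alpha)=1{-}\alpha$ then $C^\alpha$ collapses to the all-rows-equal top channel and ${\cal T}(M)(\alpha)=1{-}\alpha$ as well, so $M^\alpha=C^\alpha$.)

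\emph{For part~(2):} the same posterior computation read in reverse shows that for channels of this special shape $M^\alpha\sqsubseteq C^\alpha$ iff ${\cal T}(C)(\alpha)\ge{\cal T}(M)(\alpha)$ --- alternatively, the inequality follows from property~(II) applied to the gain $\underline{1}$, using $V_{\underline{1}}[u\triangleright C^\alpha]=\frac{1}{2}(1-{\cal T}(C)(\alpha)-\alpha)$. As equal trade-off values make $C^\alpha$ and $M^\alpha$ the same matrix, the hypothesis $M^\alpha\sqsubset C^\alpha$ forces ${\cal T}(C)(\alpha)>{\cal T}(M)(\alpha)\ge 0$. I would then establish the identity
\[ V_{\underline{h}}[u\triangleright C]\;=\;\tfrac{1}{2}\sum_i\max(0,\,q_i-hp_i)\;=\;\tfrac{1}{2}\sup_{\alpha'\in[0,1]}(1-{\cal T}(C)(\alpha')-h\alpha') , \]
obtained by evaluating $V_{\underline{h}}$ (\Def{leak-1819}, \Def{hsf-1254}), rewriting the sum as $\max_S(q_S-hp_S)$ over column subsets $S$, and --- allowing randomised tests --- recognising $\sup\{q_S:p_S=\alpha'\}=1-{\cal T}(C)(\alpha')$ from \Def{tof-1402}. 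Since ${\cal T}(C)$ is convex, non-increasing, with ${\cal T}(C)(1)=0$ and ${\cal T}(C)(\alpha)>0$, there is an $h>0$ with $-h$ a subgradient of ${\cal T}(C)$ at $\alpha$, and for such $h$ the supremum above is attained at $\alpha'=\alpha$, so
\[ V_{\underline{h}}[u\triangleright C]=\tfrac{1}{2}(1-{\cal T}(C)(\alpha)-h\alpha)<\tfrac{1}{2}(1-{\cal T}(M)(\alpha)-h\alpha)\le V_{\underline{h}}[u\triangleright M] , \]
the strict middle inequality being exactly ${\cal T}(M)(\alpha)<{\cal T}(C)(\alpha)$ and the last step keeping only the $\alpha'=\alpha$ term of the supremum for $M$. (Boundary $\alpha$ needs a word: $\alpha=1$ cannot occur, as it would require ${\cal T}(C)(1)>0$; $\alpha=0$ is handled by instead taking $h$ large, using right-continuity of ${\cal T}(C)$ at $0$.)

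\emph{For part~(3):} the direction $M\sqsubseteq C\Rightarrow C\le_h M$ is immediate, since refinement gives $V_g[u\triangleright M]\ge V_g[u\triangleright C]$ for every gain function by property~(II), in particular for every $\underline{h}$. For the converse I would show directly that $C\le_h M$ implies the convex (Choquet) order $[u\triangleright C]\preceq_{\mathrm{cx}}[u\triangleright M]$, whence $M\sqsubseteq C$ by property~(II). Indeed, rewriting the part~(2) identity as $V_{\underline{h}}[u\triangleright C]=(1{+}h)\,\mathbb{E}_{[u\triangleright C]}[\max(0,\frac{1}{1+h}-t)]$, the hypothesis $C\le_h M$ says precisely that $\mathbb{E}_{[u\triangleright C]}[\max(0,c-t)]\le\mathbb{E}_{[u\triangleright M]}[\max(0,c-t)]$ for all $c\in(0,1]$; since both posterior distributions have barycentre $\frac{1}{2}$ (uniform prior), and since $\{\max(0,c-t):c\in(0,1]\}$ together with the affine functions spans the cone of convex functions on $[0,1]$ modulo a term of constant expectation, this bootstraps to $\mathbb{E}_{[u\triangleright C]}[\phi]\le\mathbb{E}_{[u\triangleright M]}[\phi]$ for \emph{every} convex $\phi$, i.e.\ $[u\triangleright C]\preceq_{\mathrm{cx}}[u\triangleright M]$; as each $V_g$ restricted to $\Dist\{D_0,D_1\}$ is a convex function of the posterior, property~(II) then finishes. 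As a by-product this gives the structural lemma that, for two-row channels, $M\sqsubseteq C\iff{\cal T}(M)\le{\cal T}(C)$ pointwise (the forward direction being data-processing for trade-off functions); with that lemma one can instead route the converse through parts~(1)--(2): $M\not\sqsubseteq C$ yields an $\alpha$ with $C^\alpha\sqsubset M^\alpha$, and part~(2) with $C,M$ interchanged supplies a witnessing $h$.

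\emph{Main obstacle.} The delicate step is the use of the part~(2) identity: one needs a single $h$ with $-h\in\partial{\cal T}(C)(\alpha)$ (equivalently, $h$ realising the Neyman--Pearson test at level $\alpha$), so that $V_{\underline{h}}[u\triangleright C]$ actually \emph{equals} $\frac{1}{2}(1-{\cal T}(C)(\alpha)-h\alpha)$ rather than merely bounding it below --- the strictness of $M^\alpha\sqsubset C^\alpha$, which forces ${\cal T}(C)(\alpha)>0$, is exactly what makes such an $h$ available and positive. For part~(3) the crux is the structural lemma ${\cal T}(M)\le{\cal T}(C)\Rightarrow M\sqsubseteq C$ --- that a one-parameter family of hockey-stick vulnerabilities already determines refinement of arbitrary two-row channels --- whose only subtle ingredient is that fixing the posterior barycentre at $\frac{1}{2}$ lets $\{\max(0,c-t):c\in(0,1]\}$ substitute for the full space of convex test functions.
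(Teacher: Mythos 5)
Your proposal is correct, and while parts (1) and (2) land on essentially the paper's argument, part (3) takes a genuinely different route. For (1) you use the nesting of posterior intervals plus QIF property (IV), whereas the paper writes down an explicit $2\times 2$ post-processing witness and checks its entries lie in $[0,1]$; these are equivalent, and yours is the more conceptual reading of the same fact. For (2) your variational identity $V_{\underline{h}}[u\triangleright C]=\tfrac12\sup_{\alpha'}(1-{\cal T}(C)(\alpha')-h\alpha')$ together with the choice $-h\in\partial{\cal T}(C)(\alpha)$ is exactly the paper's choice of the maximal $h$ with $\mathit{err}_C(h)\leq\alpha$ (including the tie-breaking case $h=q_{i^*}/p_{i^*}$), dressed in convex-analysis language; the chain $V_{\underline{h}}[u\triangleright C]=V_{\underline{h}}[u\triangleright C^\alpha]<V_{\underline{h}}[u\triangleright M^\alpha]\leq V_{\underline{h}}[u\triangleright M]$ is identical. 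The real divergence is the converse of (3): the paper deduces $M^\alpha\sqsubseteq C^\alpha$ for all $\alpha$ from the contrapositive of (2) and then concludes $M\sqsubseteq\min_\alpha C^\alpha=C$, which leans on the nontrivial Appendix~B machinery establishing that the greatest lower bound $\min_\alpha C^\alpha$ exists and recovers $C$. Your stop-loss/convex-order argument --- that the hockey-stick vulnerabilities at barycentre $\tfrac12$ generate the cone of convex test functions, hence $C\leq_h M$ forces the Choquet order and thus $V_g[u\triangleright C]\leq V_g[u\triangleright M]$ for every $g$ --- bypasses the glb construction entirely and is self-contained given property (II); what it buys is independence from Lemma~\ref{wd-1650}, at the cost of importing the standard equal-means characterisation of convex order. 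One caution: your alternative route (b) for the converse of (3), via ``$M\not\sqsubseteq C$ yields an $\alpha$ with $C^\alpha\sqsubset M^\alpha$,'' presupposes the structural lemma ${\cal T}(M)\leq{\cal T}(C)\Rightarrow M\sqsubseteq C$ that is precisely what is being proved, so it cannot stand alone; you correctly flag it as a by-product rather than the primary argument.
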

As a corollary, we have that ${\cal T}$ is monotone.
\begin{corollary}\label{corollay1123}
If $C \sqsubseteq M$, then ${\cal T}(C) \leq {\cal T}(M)$.
\end{corollary}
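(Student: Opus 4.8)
\noindent The plan is to prove the contrapositive of \Cor{corollay1123}: assuming ${\cal T}(C)\not\leq{\cal T}(M)$, I will derive $C\not\sqsubseteq M$, using parts~(2) and~(3) of \Thm{QIF+HT-1213} to turn a failure of the trade-off inequality into a failure of refinement. So, fix a significance level $\alpha$ with ${\cal T}(C)(\alpha)>{\cal T}(M)(\alpha)$; write $t:={\cal T}(C)(\alpha)$ and $s:={\cal T}(M)(\alpha)$, so $t>s$, and pass to the trade-off channels $C^\alpha,M^\alpha\in{\mathbb C}_2$ of \Def{toc-1503}.

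\emph{Step 1: the strict refinement $M^\alpha\sqsubset C^\alpha$.} The channels $C^\alpha$ and $M^\alpha$ share the first row $(1{-}\alpha,\alpha)$ and have second rows $(t,1{-}t)$ and $(s,1{-}s)$; since $s,t\leq1{-}\alpha$ (a trade-off function satisfies $f(\alpha)\leq1{-}\alpha$), the columns of each are already in increasing likelihood-ratio order. Computing the uniform-prior hyper-distribution of each and rendering the posteriors on the unit interval as in \Fig{Barycentric-1356}, the two posteriors of $M^\alpha$ are $\frac{\alpha}{1-s+\alpha}$ and $\frac{1-\alpha}{1-\alpha+s}$, straddling $\frac12$, and those of $C^\alpha$ are $\frac{\alpha}{1-t+\alpha}$ and $\frac{1-\alpha}{1-\alpha+t}$; since $t>s$, the interval spanned by the posteriors of $C^\alpha$ is strictly contained in that spanned by the posteriors of $M^\alpha$. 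By property~(IV) this containment gives $M^\alpha\sqsubseteq C^\alpha$, while its strictness (the endpoints of $M^\alpha$'s interval lie outside $C^\alpha$'s) rules out $C^\alpha\sqsubseteq M^\alpha$; hence $M^\alpha\sqsubset C^\alpha$. (If $t=1{-}\alpha$ then $C^\alpha$ is the non-informative channel, the $\sqsubseteq$-top, and $C^\alpha\sqsubseteq M^\alpha$ would force $M^\alpha$ non-informative too, i.e.\ $s=1{-}\alpha$, contradicting $s<t$; so the conclusion holds in this degenerate case as well.)

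\emph{Step 2: transport back through \Thm{QIF+HT-1213}.} From $M^\alpha\sqsubset C^\alpha$, part~(2) yields some $h\geq0$ with $V_{\underline{h}}[u\triangleright C]<V_{\underline{h}}[u\triangleright M]$, so it is \emph{not} the case that $V_{\underline{h}}[u\triangleright M]\leq V_{\underline{h}}[u\triangleright C]$ for all $h$; in the language of \Def{h-less-1810} this says $M\not\leq_h C$. Applying part~(3) with the two channels swapped (so that it reads ``$M\leq_h C$ iff $C\sqsubseteq M$'') converts $M\not\leq_h C$ into $C\not\sqsubseteq M$, which is exactly the contrapositive we wanted.

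\emph{Main obstacle.} I expect the delicate point to be Step~1, and specifically obtaining \emph{strict} refinement $M^\alpha\sqsubset C^\alpha$ rather than merely $M^\alpha\sqsubseteq C^\alpha$, since part~(2) of \Thm{QIF+HT-1213} has a strict hypothesis; this forces one to track the posterior intervals precisely and to dispatch the degenerate non-informative case separately. A more algebraic alternative that avoids the $2{\times}2$ geometry is to use the identity $V_{\underline{h}}[u\triangleright D]=\frac12\sup_{\beta\in[0,1]}\bigl(1-{\cal T}(D)(\beta)-h\beta\bigr)$ implicit in the discussion around \Fig{Refinement-whs} (the supremum being attained at the Neyman--Pearson-optimal $\beta=\textit{err}_D(h)$), combine it with part~(3) to rewrite $C\sqsubseteq M$ equivalently as $\inf_{\beta}\bigl({\cal T}(M)(\beta)+h\beta\bigr)\geq\inf_{\beta}\bigl({\cal T}(C)(\beta)+h\beta\bigr)$ for all $h\geq0$, and then appeal to convex-conjugate (biconjugation) duality for non-increasing convex functions on $[0,1]$ to conclude ${\cal T}(M)\geq{\cal T}(C)$ pointwise.
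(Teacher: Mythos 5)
Your proposal is correct and follows essentially the same route as the paper's proof: locate an $\alpha$ where the trade-off inequality fails, deduce the strict refinement $M^\alpha\sqsubset C^\alpha$, and invoke \Thm{QIF+HT-1213}(2) to produce a hockey-stick $h$ witnessing the failure of $C\sqsubseteq M$. The only differences are cosmetic --- you argue by contrapositive rather than contradiction, spell out the Barycentric justification of $M^\alpha\sqsubset C^\alpha$ that the paper leaves implicit, and close via part~(3) where the paper appeals directly to the soundness of refinement for gain functions.
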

\begin{proof}
Suppose, for contradiction that there is some $\alpha$ with ${\cal T}(C)(\alpha) > {\cal T}(M)(\alpha)$; this means that $M^\alpha \sqsubset C^\alpha$. Thus, by \Thm{QIF+HT-1213}(2), we can find $h$ such that $V_{\underline{h}}[u \triangleright C] < V_{\underline{h}}[u \triangleright M]$, contradicting the refinement assumption.
\end{proof}

\section{Equivalence of two-row channels and trade-off functions}\label{Galois-sec}

We show that $({\mathbb C}_2, \sqsubseteq)$ and $({\mathbb F}, \leq)$ are equivalent. We begin by defining a converse to \Def{d1643}.

\begin{definition}[Least $f$-private channel]\label{glb-1642}
Define a mapping ${\cal C}: \mathbb{F}\rightarrow {\mathbb C}_2$ as:
\[
{\cal C}(f) ~:=~ \min_{\alpha{\in}[0,1]} \begin{bmatrix}
1{-}\alpha & \alpha\\
f(\alpha) & 1{-}f(\alpha)
\end{bmatrix} ~,
\]
where $\min$ is the greatest lower bound operator in $({\mathbb C}_2, \sqsubseteq)$.
\end{definition}

\begin{lemma}[${\cal C}$ is well-defined]\label{wd-1650} Given any trade-off function $f\in {\mathbb F}$,  ${\cal C}(f)$ in \Def{glb-1642} is a well-defined channel in ${\mathbb C}_2$. (Proof in the appendix.)
\end{lemma}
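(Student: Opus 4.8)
The plan is to show that the family of channels $\{M_\alpha : \alpha \in [0,1]\}$, where $M_\alpha = \begin{bmatrix} 1{-}\alpha & \alpha \\ f(\alpha) & 1{-}f(\alpha)\end{bmatrix}$, has a greatest lower bound in $(\mathbb{C}_2, \sqsubseteq)$, and that this glb is itself a genuine two-row channel (i.e.\ a stochastic matrix, up to column reindexing and merging). First I would recall the Barycentric / hyper-distribution picture from property (IV): a two-row channel $C$ is determined up to refinement-equivalence by the set of posteriors of $[u \triangleright C]$, viewed as points in $[0,1]$ (the probability of the first row given the observation), together with their marginal weights; and $C \sqsubseteq C'$ iff the posteriors of $C'$ all lie in the convex hull (an interval) of the posteriors of $C$. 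So the glb of a family, if it exists, corresponds to the \emph{smallest interval} of posteriors containing the posterior-sets of every $M_\alpha$ --- equivalently, a two-posterior channel whose two posteriors are the extreme points, with weights chosen so that the average posterior equals $u$ (the prior), which is forced since every $M_\alpha$ has uniform input marginal under $u$ and refinement preserves the average posterior.

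The key steps, in order: (1) Compute the two posteriors of $M_\alpha$ under the uniform prior: with $J_{x,y} = \tfrac12 (M_\alpha)_{x,y}$, the column-$0$ marginal is $\tfrac12(1{-}\alpha+f(\alpha))$ with posterior first-coordinate $\tfrac{1-\alpha}{1-\alpha+f(\alpha)}$, and similarly for column $1$. (2) As $\alpha$ ranges over $[0,1]$, collect all these posterior values and let $L$ (resp.\ $R$) be the infimum (resp.\ supremum) of this set in $[0,1]$; here I would use convexity of $f$ and the constraints $f(\alpha) \le 1-\alpha$, $f(0) \le 1$ to argue these bounds are attained (or are limits realised in the closure) --- convexity of $f$ makes the map $\alpha \mapsto$ posterior monotone/well-behaved on each side, so the extreme posteriors occur at the endpoints $\alpha \to 0$ and $\alpha \to 1$, giving $L$ and $R$ explicitly. (3) Define the candidate channel $\hat C$ to be the two-column channel whose two posteriors are $L$ and $R$ with weights making the $u$-average equal to $(\tfrac12,\tfrac12)$; check it is a valid stochastic matrix in $\mathbb{C}_2$. (4) Verify $\hat C \sqsubseteq M_\alpha$ for every $\alpha$ via property (IV) (each posterior of $M_\alpha$ lies in $[L,R]$ by construction), and that any lower bound $C'$ of the whole family must have all its posteriors inside $[L,R]$ (since it refines, e.g., the $M_\alpha$ achieving $L$ and the one achieving $R$), hence $C' \sqsubseteq \hat C$; so $\hat C = {\cal C}(f)$.

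The main obstacle I expect is step (2)--(3): carefully justifying that the infimum/supremum of the posterior values over the open-ended range of $\alpha$ is actually realised by a bona fide channel --- i.e.\ ruling out degenerate behaviour at the endpoints where a column marginal could vanish (when $1-\alpha+f(\alpha)=0$, impossible since $f\ge 0$ forces this only at $\alpha=1, f(1)=0$, a removable case) and confirming that the ``smallest enclosing interval'' really does correspond to a channel with uniform input marginal rather than an object outside $\mathbb{C}_2$. This is where convexity of $f$ does the real work: it guarantees the posterior-as-a-function-of-$\alpha$ is monotone on the relevant sub-interval(s), so the glb is a \emph{finite} two-posterior channel and the $\min$ in \Def{glb-1642} is well-defined. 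The remaining verifications (stochasticity, the two refinement directions) are then routine given properties (II)--(IV).
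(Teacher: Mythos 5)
There is a genuine gap here, and it is fatal to the approach rather than a fixable detail: you assume that the greatest lower bound of the family $\{f^\alpha\}_{\alpha\in[0,1]}$ can be realised as a \emph{two}-posterior channel whose posteriors are the extreme points $L$ and $R$ of the union of all posteriors. Such a channel is certainly \emph{a} lower bound of the family (by property (IV)), but it is almost never the \emph{greatest} one, and your step (4) does not establish that it is: a lower bound of the family need not itself be a two-posterior channel, so property (IV) gives you no handle on it. Concretely, take $f=f_{\epsilon,\delta}$ with $\delta>0$. The channel $f^{0}$ has a column with posterior $0$, and $f^{\alpha}$ for $\alpha\geq 1-\delta$ has a column with posterior $1$, so your interval is $[L,R]=[0,1]$ and $\hat C$ degenerates to the identity channel. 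But the actual glb is the four-column canonical channel $C_{\epsilon,\delta}$ displayed at \Eqn{Canonical-epsdelta}: it is a lower bound of the family that does \emph{not} refine to the identity, so $\hat C$ is not greatest; and ${\cal T}(\hat C)$ would be the constant-zero trade-off function rather than $f_{\epsilon,\delta}$, contradicting the Galois connection the lemma is meant to support. The underlying point is that the glb must remember \emph{every} facet of $f$ --- one column per facet point, and a continuum of posteriors when $f$ is not piecewise linear --- whereas collapsing to the two outermost posteriors discards all the intermediate convex structure.

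The paper's proof is correspondingly more involved. It first constructs $A\min R$ for two incomparable $2{\times}2$ channels explicitly as a \emph{three}-column channel that keeps the extreme column of each operand and inserts a middle column to restore row-stochasticity, proving by a convex-hull argument that this is the glb; it then extends to finite families by induction, peeling off the $\lessdot$-minimal column at each stage (\Alg{MIN-1139}); and it handles arbitrary families by taking directed limits of the finite sub-minima, using continuity of $\sqsubseteq$ (which is also why the fully general case requires passing to abstract channels). If you want to keep your barycentric viewpoint, the correct statement is that the glb's hyper-distribution is the supremum of the hypers $[u\triangleright f^\alpha]$ in the mean-preserving-spread (convex) order, which is supported on \emph{all} the facet posteriors of $f$, not merely the two extreme ones.
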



As an example, if $\alpha_0 < \alpha_1$ then the greatest lower bound is given by:
\[
\begin{bmatrix}
1{-} \alpha_0 & \alpha_0\\ 
f(\alpha_0) & 1{-} f(\alpha_0)
\end{bmatrix}\min
\begin{bmatrix}
1{-} \alpha_1 & \alpha_1\\ 
f(\alpha_1) & 1{-} f(\alpha_1)
\end{bmatrix} ~~ =~~ 
\begin{bmatrix}
1{-} \alpha_1 & \alpha_1{-}\alpha_0   &\alpha_0\\ 
f(\alpha_1) &  f(\alpha_0){-}f(\alpha_1)&1{-} f(\alpha_0)
\end{bmatrix}~.
\]
In the appendix we show how this definition extends more generally in ${\mathbb C}_2$. Meanwhile, we have the following Galois connection.
\begin{theorem}[Galois connection]\label{GC-1651}
Let $f{\in}\mathbb{F}$ and $M{\in} \mathbb{C}_2$, then:
\begin{equation}
f \leq {\cal T}(M)  ~~ \Leftrightarrow ~~ {\cal C}(f) \sqsubseteq M~.
\end{equation}
\begin{proof}(Sketch)
Assume that $f\leq {\cal T}(M)$, then by \Thm{QIF+HT-1213} part (1), 
\begin{equation}\label{refinemnet-alpha1820}
f^\alpha ~~=~~\begin{bmatrix}
1{-}\alpha & \alpha\\
f(\alpha) & 1{-}f(\alpha)
\end{bmatrix} ~~~ \sqsubseteq~~~
\begin{bmatrix}
1{-}\alpha & \alpha\\
{\cal T}(M)(\alpha) & 1{-}{\cal T}(M)(\alpha)
\end{bmatrix}~~=M^\alpha~~, 
\end{equation}
hence ${\cal C}(f) \sqsubseteq {\cal C}({\cal T}(M))= M$~. See Appendix \Sec{App-B} for the last equality.

Now, suppose that ${\cal C}(f)  \sqsubseteq M$ (so that for all hockey-stick functions we must have $V_{\underline {h}}[u\triangleright M]{\leq}V_{\underline {h}}[u \triangleright {\cal C}(f)]$). Assume by contradiction that there is some $\alpha$ such that $f(\alpha) {>} {\cal T}(M)(\alpha)$. 
This means that 
$M^\alpha \sqsubset f^\alpha$, with the refinement being strict. Then using \Thm{QIF+HT-1213} part (2),  there must be an $h$ such that
\[
V_{\underline {h}}[u \triangleright {\cal C}(f)] < V_{\underline {h}}[u \triangleright M]~.
\]
which is a contradiction of the assumption ${\cal C}(f)  \sqsubseteq M$. 
\end{proof}
\end{theorem}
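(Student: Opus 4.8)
The plan is to prove the two implications of the equivalence separately. In both directions the essential engine is \Thm{QIF+HT-1213}, and the argument also uses two structural facts about the round trips, namely that ${\cal C}$ and ${\cal T}$ are monotone and that they are mutually inverse, i.e. ${\cal C}({\cal T}(M)){=}M$ in $(\mathbb{C}_2,\sqsubseteq)$ and ${\cal T}({\cal C}(f)){=}f$ in $(\mathbb{F},\leq)$. First I would record the elementary ingredients. Write $f^\alpha$ for the $2{\times}2$ channel with top row $(1{-}\alpha,\alpha)$ and bottom row $(f(\alpha),1{-}f(\alpha))$, so that $M^\alpha$ of \Def{toc-1503} equals $({\cal T}(M))^\alpha$ and ${\cal C}(f){=}\min_\alpha f^\alpha$. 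For such channels, decreasing the bottom-left entry spreads the two posteriors of $[u{\triangleright}f^\alpha]$ further apart on the barycentric axis, so by property (IV) of \Sec{QIFSec}, $f(\alpha)\leq g(\alpha)$ implies $f^\alpha\sqsubseteq g^\alpha$ (this is the $\alpha$-pointwise instance of \Thm{QIF+HT-1213}(1), since $f^\alpha$ and $g^\alpha$ depend only on the trade-off values at $\alpha$, and also uses $(f^\alpha)^\alpha{=}f^\alpha$ because ${\bf T}$ of the channel $f^\alpha$ has value $f(\alpha)$ at $\alpha$). Since the greatest-lower-bound operator is monotone, taking $\min$ over $\alpha$ then gives ${\cal C}(f)\sqsubseteq{\cal C}(g)$, so ${\cal C}$ is monotone; ${\cal T}$ is monotone by \Cor{corollay1123}. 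I would also note two ``easy halves'': $M\sqsubseteq M^\alpha$ for every $\alpha$, because $M^\alpha$ is the post-processing $M{\cdot}R^h$ built just before \Def{toc-1503}, so $M$ is a lower bound of $\{M^\alpha\}_\alpha$ and hence $M\sqsubseteq\min_\alpha M^\alpha={\cal C}({\cal T}(M))$; dually, ${\cal C}(f)\sqsubseteq f^\beta$ for every $\beta$, so by monotonicity of ${\cal T}$ and the value computation above, ${\cal T}({\cal C}(f))(\beta)\leq{\bf T}(f^\beta)(\beta)=f(\beta)$, i.e. ${\cal T}({\cal C}(f))\leq f$.

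\textbf{($\Rightarrow$)} Assume $f\leq{\cal T}(M)$. Then $f(\alpha)\leq{\cal T}(M)(\alpha)$ for every $\alpha$, so by the $2{\times}2$ comparison above $f^\alpha\sqsubseteq M^\alpha$; monotonicity of $\min$ yields ${\cal C}(f)=\min_\alpha f^\alpha\sqsubseteq\min_\alpha M^\alpha={\cal C}({\cal T}(M))$. It then remains only to collapse ${\cal C}({\cal T}(M))$ to $M$: combining the easy half $M\sqsubseteq{\cal C}({\cal T}(M))$ with the converse inclusion ${\cal C}({\cal T}(M))\sqsubseteq M$ (Appendix \Sec{App-B}) gives ${\cal C}({\cal T}(M))=M$, hence ${\cal C}(f)\sqsubseteq M$.

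\textbf{($\Leftarrow$)} Assume ${\cal C}(f)\sqsubseteq M$. By monotonicity of ${\cal T}$ (\Cor{corollay1123}), ${\cal T}({\cal C}(f))\leq{\cal T}(M)$; combining the easy half ${\cal T}({\cal C}(f))\leq f$ with the converse $f\leq{\cal T}({\cal C}(f))$ (Appendix \Sec{App-B}, where convexity of $f$ is what makes the lower envelope of the points $(\alpha,f(\alpha))$ equal to $f$ itself) gives ${\cal T}({\cal C}(f))=f$, so $f\leq{\cal T}(M)$ as required. Equivalently, one can run the contradiction argument of the sketch: if $f(\alpha){>}{\cal T}(M)(\alpha)$ for some $\alpha$, then $M^\alpha\sqsubset f^\alpha={\cal C}(f)^\alpha$ strictly, and \Thm{QIF+HT-1213}(2) produces an $h$ with $V_{\underline{h}}[u{\triangleright}{\cal C}(f)]<V_{\underline{h}}[u{\triangleright}M]$, contradicting ${\cal C}(f)\sqsubseteq M$ via property (II).

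\textbf{Main obstacle.} The per-significance-level channel comparisons, the monotonicity of ${\cal C}$ and ${\cal T}$, and the two ``easy halves'' are all routine once \Thm{QIF+HT-1213} and property (IV) are available. The genuine content — and the only place the proof leaves the present section — is the pair of converse inclusions ${\cal C}({\cal T}(M))\sqsubseteq M$ and $f\leq{\cal T}({\cal C}(f))$: the first says every posterior of $M$ is already recovered among the posteriors generated by $M$'s own trade-off channels $\{M^\alpha\}$, and the second (dual) says that the greatest lower bound $\min_\alpha f^\alpha$ does not distinguish more finely than $f$ allows — exactly where convexity of trade-off functions and \Lem{wd-1650} (that ${\cal C}(f)$ really is a channel) enter. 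I would therefore devote most of the effort to those two identities, taking care over the boundary cases $\alpha\in\{0,1\}$ and over columns carrying zero probability.
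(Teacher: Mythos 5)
Your proposal is correct and follows essentially the same route as the paper: the forward direction is identical (pointwise $f^\alpha\sqsubseteq M^\alpha$ via \Thm{QIF+HT-1213}(1), monotonicity of $\min$, then the appendix identity ${\cal C}({\cal T}(M))=M$), and for the converse you include the paper's own contradiction argument via \Thm{QIF+HT-1213}(2) alongside a repackaged variant through \Cor{corollay1123}. The only caution is that your preferred converse route leans on $f\leq{\cal T}({\cal C}(f))$, which the appendix does not explicitly establish (it only proves ${\cal C}({\cal T}(M))=M$ for finite $M$), but you correctly flag this as the remaining work and the direct contradiction argument you also give sidesteps it.
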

As a corollary, we have that the greatest lower bound (glb) operator in ${\mathbb C}_2$ corresponds to the lattice minimum of ${\mathbb F}$.
\begin{corollary}\label{Samemins-1040}
Let $C, C' \in {\mathbb C}_2$. Then ${\cal T}(C \min C') = {\cal T}(C) \sqcap {\cal T}(C')$.
\begin{proof}
 Since $C\min C' \sqsubseteq C, C'$, it follows by \Cor{corollay1123} that ${\cal T}(C \min C')\leq {\cal T}(C)$ and ${\cal T}(C \min C')\leq {\cal T}(C')$. Hence, ${\cal T}(C \min C')\leq {\cal T}(C)\sqcap {\cal T}(C')$.  

 Next, since ${\cal T}(C)\sqcap {\cal T}(C') \leq {
 \cal T}(C), {
 \cal T}(C')$, by \Thm{GC-1651}, we must have ${\cal C}({\cal T}(C)\sqcap {\cal T}(C'))\sqsubseteq C, C'$, and therefore by the glb property,  ${\cal C}({\cal T}(C)\sqcap {\cal T}(C'))\sqsubseteq C\min C'$ also. Finally, appealing to \Thm{GC-1651} again, it follows that ${\cal T}(C)\sqcap {\cal T}(C') \leq {\cal T}(C \min C')$.
\end{proof}
\end{corollary}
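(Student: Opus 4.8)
This is a standard ``the two sides of a Galois connection preserve the relevant meets'' argument, and I would prove the two inequalities separately.

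For the direction $\mathcal{T}(C\min C')\leq \mathcal{T}(C)\sqcap\mathcal{T}(C')$, I would simply invoke monotonicity of $\mathcal{T}$. Since $C\min C'$ is the greatest lower bound of $C$ and $C'$ in $(\mathbb{C}_2,\sqsubseteq)$, we have $C\min C'\sqsubseteq C$ and $C\min C'\sqsubseteq C'$, so \Cor{corollay1123} gives $\mathcal{T}(C\min C')\leq\mathcal{T}(C)$ and $\mathcal{T}(C\min C')\leq\mathcal{T}(C')$. By the definition of $\sqcap$ as the greatest lower bound in $(\mathbb{F},\leq)$ (\Def{trade-off-1545}), this yields $\mathcal{T}(C\min C')\leq\mathcal{T}(C)\sqcap\mathcal{T}(C')$.

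For the reverse direction I would use the Galois connection \Thm{GC-1651} twice, threading through $\mathcal{C}$. Write $g=\mathcal{T}(C)\sqcap\mathcal{T}(C')\in\mathbb{F}$; this is a genuine trade-off function by \Def{trade-off-1545}, and $\mathcal{C}(g)\in\mathbb{C}_2$ is well-defined by \Lem{wd-1650}. Since $g\leq\mathcal{T}(C)$ and $g\leq\mathcal{T}(C')$, the Galois connection gives $\mathcal{C}(g)\sqsubseteq C$ and $\mathcal{C}(g)\sqsubseteq C'$. Because $\min$ is the greatest lower bound in $(\mathbb{C}_2,\sqsubseteq)$, it follows that $\mathcal{C}(g)\sqsubseteq C\min C'$. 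Applying the Galois connection once more — now in the direction ${\cal C}(g)\sqsubseteq C\min C' \Rightarrow g\leq\mathcal{T}(C\min C')$ — delivers $\mathcal{T}(C)\sqcap\mathcal{T}(C')\leq\mathcal{T}(C\min C')$, and the two inequalities together give the claimed equality.

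I do not expect a genuine obstacle here: every ingredient (monotonicity of $\mathcal{T}$, well-definedness of $\mathcal{C}$, the Galois connection, and the universal properties of $\min$ in $\mathbb{C}_2$ and $\sqcap$ in $\mathbb{F}$) has already been established. The only point requiring a little care is purely bookkeeping — making sure that $g=\mathcal{T}(C)\sqcap\mathcal{T}(C')$ really lies in $\mathbb{F}$ so that $\mathcal{C}$ may be applied to it, and that the glb property of $\min$ is being invoked with respect to the correct order. If one wanted to avoid $\mathcal{C}$ entirely, an alternative is to characterize $\mathcal{T}(C\min C')$ directly via hockey-stick vulnerabilities using \Thm{QIF+HT-1213}(3), but the adjunction-based argument above is shorter and self-contained.
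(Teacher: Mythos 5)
Your proposal is correct and follows essentially the same route as the paper's own proof: the first inequality via monotonicity of $\mathcal{T}$ (\Cor{corollay1123}), and the reverse via two applications of the Galois connection (\Thm{GC-1651}) together with the glb property of $\min$. The extra bookkeeping you note (that $\mathcal{T}(C)\sqcap\mathcal{T}(C')\in\mathbb{F}$ so that $\mathcal{C}$ applies) is implicit in the paper but worth making explicit as you do.
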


\subsection{Finite channels and piecewise linear trade-off functions.}
It turns out that when $M$ has a finite number of columns, ${\cal T}(M)$ is \emph{piecewise linear}, which means that the domain $[0,1]$ can be split into finitely many disjoint sub-intervals, such that ${\cal T}(M)$ is linear on each sub-interval.
We denote the subset of piecewise linear trade-off functions by ${\mathbb F}^{PL}$; for each such $f$, there are correspondingly a finite set of increasing $\alpha_i$, ($0{\leq} i{\leq} n$) such that $f$ is linear on the sub-intervals $[\alpha_i, \alpha_{i{+}1}]$. We call these $\alpha_i$ ``facet'' points of $f$. Next, we can show directly that when $\alpha_i {\leq}\alpha{\leq}\alpha_{i+1}$ that $(f^{\alpha_i}\min f^{\alpha^{i+1}})\sqsubseteq f^{\alpha}$ (where $f^\alpha$ is as at \Eqn{refinemnet-alpha1820}), and therefore we deduce that  ${\cal C}(f)$ is determined by the facet points for $f$. This then supports \Alg{channel-alg-1734} for computing ${\cal C}(f)$ by forming the greatest lower bound of increasing $\alpha_i$, i.e.:
$
{\cal C}(f) = (f^{\alpha_0} \min f^{\alpha_1})\min f^{\alpha_2})\cdots \min f^{\alpha_n}))~.
$


%
%

\begin{algorithm}
\begin{algorithmic}\caption{Computing the channel $\mathcal{C}(f)$ for a given $f{\in}{\mathbb F}^{PL}$}\label{channel-alg-1734}
\Require $f\in {\mathbb F}$; $0=\alpha_0{<}\dots {<}\alpha_{n}=1$ correspond to the facet points of $f$
\Ensure Channel $\begin{bmatrix}C_{-0} & \dots C_{-n} \end{bmatrix}\in {\mathbb C}_2$ such that $C$ is equal to $\mathcal{C}(f)$
\State $C_{-,n} \gets \begin{bmatrix} 
\alpha_{0}\\
1{-}f(\alpha_{0})
\end{bmatrix}$ \Comment{Compute the last column of of ${\cal C}(f)$}
\State $i\gets 0$;
\While{$i<n$} 
\State $C_{-,n{-}i{-}1} \gets \begin{bmatrix} 
\alpha_{i+1}{-}\alpha_{i}\\
f(\alpha_i){-}f(\alpha_{i+1})
\end{bmatrix}$ \Comment{Compute the $n{-}i{-}1$'th column}
\State $i\gets i+1;$
\EndWhile
\end{algorithmic}
\end{algorithm}

To see \Alg{channel-alg-1734} in action,  recall the trade-off function $f_{\epsilon,\delta}$ depicted in \Fig{Fig-tof-1140}, which we observe is in ${\mathbb F}^{PL}$ with facet points $\alpha_0{=}0, ~\alpha_1{=}\frac{1-\delta}{e^\epsilon+1}, ~\alpha_2{=}1{-}\delta, ~\alpha_3=1$.  \Alg{channel-alg-1734} computes the columns of ${\cal C}(f_{\epsilon, \delta})$ in order of increasing $\alpha_i$ as follows:
\[
\begin{bmatrix}
1{-}(1{-}\delta)~~&
(1{-}\delta){-} (\frac{1-\delta}{e^\epsilon+1})~~ &~~ 
 \frac{1-\delta}{e^\epsilon+1}{-}0 ~~&0\\
 f_{\epsilon,\delta}(1{-}\delta){-}f_{\epsilon,\delta}(1)    ~~&~~ 
f_{\epsilon,\delta}(\frac{1-\delta}{e^\epsilon+1}){-}f_{\epsilon,\delta}(1{-}\delta)~~&~~
f_{\epsilon,\delta}(0){-}f_{\epsilon,\delta}(\frac{1-\delta}{e^\epsilon+1})  ~~&1{-}f_{\epsilon,\delta}(0)
\end{bmatrix}~,
\]
yielding,

\begin{equation}\label{Canonical-epsdelta}
{\cal C}(f_{\epsilon,\delta})= C_{\epsilon, \delta} ~~=~~ 
\begin{bmatrix}
 \delta~~&{(1{-}\delta)e^\epsilon}/{1{+}e^\epsilon} ~~ &~~ {(1{-}\delta)}/{1{+}e^\epsilon} ~~&0\\
0~~&~~ {(1{-}\delta)}/{1{+}e^\epsilon} ~~&~~ {(1{-}\delta)e^\epsilon}/{1{+}e^\epsilon} ~~&\delta
\end{bmatrix}~.
\end{equation}


By \Thm{GC-1651} this is the greatest lower bound in the $\sqsubseteq$ order for channels in ${\mathbb C}_2$ that satisfy $f_{\epsilon, \delta}$-DP. Thus we have the following corollary:
\begin{corollary}[Canonical $(\epsilon,\delta)$ channel]\label{cor:eps:delta:refined:by:M}
Let $M{\in}{\mathbb C}_2$. Then $M$ satisfies $(\epsilon, \delta)$-DP if and only if $C_{\epsilon,\delta} \sqsubseteq M$.
\end{corollary}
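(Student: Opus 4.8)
The plan is to apply the Galois connection of \Thm{GC-1651} directly, using the fact established in the surrounding discussion that ${\cal T}(C_{\epsilon,\delta}) = f_{\epsilon,\delta}$. Recall first the classical fact (cited from \cite{10.1111/rssb.12454}) that a mechanism satisfies $(\epsilon,\delta)$-DP if and only if it satisfies $f_{\epsilon,\delta}$-DP, which by \Def{fpriv-1629} unwinds to: $M$ satisfies $(\epsilon,\delta)$-DP iff $f_{\epsilon,\delta} \leq {\cal T}(M_{D_0},M_{D_1}) = {\cal T}(M)$ (the two orderings of the rows give the two inequalities $p_S\leq e^\epsilon q_S+\delta$ and $q_S \leq e^\epsilon p_S +\delta$, both captured since ${\bf T}$ is symmetric in the sense needed, or one appeals to the max over both orderings). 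So the statement to prove reduces to the chain of equivalences
\[
M \text{ satisfies } (\epsilon,\delta)\text{-DP} \iff f_{\epsilon,\delta} \leq {\cal T}(M) \iff {\cal C}(f_{\epsilon,\delta}) \sqsubseteq M \iff C_{\epsilon,\delta} \sqsubseteq M,
\]
where the middle step is exactly \Thm{GC-1651} instantiated at $f = f_{\epsilon,\delta}$, and the last step is the identification ${\cal C}(f_{\epsilon,\delta}) = C_{\epsilon,\delta}$ from \Eqn{Canonical-epsdelta}.

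**The bulk of the work** is therefore the computation ${\cal C}(f_{\epsilon,\delta}) = C_{\epsilon,\delta}$, which is carried out just above the statement via \Alg{channel-alg-1734}. To make the proof self-contained I would (i) note that $f_{\epsilon,\delta}$ from \Eqn{fepsdelta} is piecewise linear with exactly the facet points $\alpha_0 = 0$, $\alpha_1 = \tfrac{1-\delta}{e^\epsilon+1}$, $\alpha_2 = 1-\delta$, $\alpha_3 = 1$ — this is read off from the three linear pieces $-e^\epsilon\alpha + 1-\delta$, $-e^{-\epsilon}\alpha + e^{-\epsilon}(1-\delta)$, and the constant $0$, checking where consecutive pieces cross; (ii) invoke the fact (stated in \Sec{comp-defs}'s preamble, that ${\cal C}(f)$ for piecewise-linear $f$ is the iterated glb $(f^{\alpha_0}\min f^{\alpha_1}\min\cdots\min f^{\alpha_n})$ determined by the facet points) to justify \Alg{channel-alg-1734}; and (iii) substitute the facet values of $f_{\epsilon,\delta}$ into the column formulas of the algorithm, evaluating $f_{\epsilon,\delta}(0) = 1-\delta$, $f_{\epsilon,\delta}(\tfrac{1-\delta}{e^\epsilon+1}) = \tfrac{(1-\delta)e^\epsilon}{e^\epsilon+1}$ (wait — one checks $-e^\epsilon\cdot\tfrac{1-\delta}{e^\epsilon+1} + 1-\delta = \tfrac{1-\delta}{e^\epsilon+1}$, and this must also equal the other line's value there), $f_{\epsilon,\delta}(1-\delta) = 0$, $f_{\epsilon,\delta}(1) = 0$, to land on the matrix displayed in \Eqn{Canonical-epsdelta}. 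This is routine arithmetic and I would present it compactly rather than belabour it.

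**The one genuine subtlety**, and the step I expect to be the main obstacle, is the directionality and the role of the two row-orderings. The classical $(\epsilon,\delta)$-DP condition is \emph{symmetric} in $D_0,D_1$ (\Eqn{dp-1053} must hold for adjacent pairs in both directions), whereas ${\cal T}(M)$ as defined via \Def{d1643} fixes $M_{D_0}$ as the null and $M_{D_1}$ as the alternative. One must check that $f_{\epsilon,\delta} \leq {\cal T}(M_{D_0},M_{D_1})$ already encodes both directions — which it does, because $f_{\epsilon,\delta}$ is its own "dual" (the trade-off function is symmetric under swapping $\alpha \leftrightarrow \beta$ along the lines, equivalently $f_{\epsilon,\delta} = f_{\epsilon,\delta}^{-1}$ as is visible from the two symmetric linear pieces in \Eqn{fepsdelta}), so ${\bf T}(p,q) \geq f_{\epsilon,\delta} \iff {\bf T}(q,p) \geq f_{\epsilon,\delta}$. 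Alternatively, and more cleanly, one observes that $C_{\epsilon,\delta}$ in \Eqn{Canonical-epsdelta} is itself symmetric under simultaneously swapping its two rows and reversing the column order, so $C_{\epsilon,\delta}\sqsubseteq M$ is insensitive to which adjacent dataset is called $D_0$; hence the equivalence with the symmetric $(\epsilon,\delta)$-DP condition is consistent. I would spell out this symmetry remark explicitly, since it is the only place where the "folklore" identification of $f_{\epsilon,\delta}$-DP with $(\epsilon,\delta)$-DP interacts nontrivially with our asymmetric definition of ${\cal T}$. With that in hand, the corollary follows by concatenating \Thm{GC-1651} with \Eqn{Canonical-epsdelta}.
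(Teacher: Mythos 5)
Your proposal is correct and follows essentially the same route as the paper: the equivalence $(\epsilon,\delta)$-DP $\iff f_{\epsilon,\delta}\leq{\cal T}(M)$ from Dong et al., the Galois connection of \Thm{GC-1651}, and the identification ${\cal C}(f_{\epsilon,\delta})=C_{\epsilon,\delta}$ computed via \Alg{channel-alg-1734} just above the statement. Your added remark on the symmetry of the two row-orderings is a sensible elaboration of a point the paper leaves implicit (and note your intermediate value of $f_{\epsilon,\delta}$ at $\alpha_1$ should be $\tfrac{1-\delta}{e^\epsilon+1}$, as your own parenthetical check confirms), but it does not change the argument.
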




\section{Compositions of channels define compositions for $f$-DP}\label{comp-defs}

With \Thm{GC-1651}, we can now obtain composition rules for $f$-DP, by using compositions defined on channels \cite{Alvim:20a}. We  do this for typical compositions used to implement or analyse privacy mechanisms.

\subsection{Parallel composition}
A typical scenario for analysis is repeated application of a mechanism ${\cal M}$ to the same dataset.  This assumes that the output of ${\cal M}\circ {\cal M}(D)$ is a pair$(y_0, y_1)$, one for each (independent) application of ${\cal M}$. 
In ${\mathbb C}_2$, this corresponds to parallel composition: if  $C: {\cal X}\rightarrow{\cal Y}$  and $M: {\cal X}\rightarrow{\cal Z}$, 
then the parallel composition $C\parallel M$ outputs a pair from ${\cal Y}{\times}{\cal Z}$ as follows \cite{Alvim:20a}:
\begin{equation}\label{parallel-eqn}
(C\parallel M)_{x,(y,z)}=  C_{x,y}{\times} M_{x, z} ~.
\end{equation}

%
%
We can obtain an exact privacy profile for parallel by using \Thm{GC-1651} to express channels as the $\min$ of their trade-off channels (\Def{toc-1503}), and then \Cor{Samemins-1040} which says that $\min$ in ${\mathbb C}_2$ corresponds to $\sqcap$ in ${\mathbb F}$.  Let ${\cal T}(C)\geq f$, and ${\cal T}(M) \geq f'$,   then: 
%

\begin{equation}\label{parallel-1842}
{\cal T}(C\parallel M) ~\geq~ \bigsqcap_{\alpha, \alpha'} {\cal T}\left(\begin{bmatrix} 1{-}\alpha & \alpha\\
f(\alpha) & 1{-}f(\alpha)\end{bmatrix} ||  
 \begin{bmatrix} 1{-}\alpha' & \alpha'\\
f'(\alpha') & 1{-}f'(\alpha')\end{bmatrix} \right)~.
\end{equation}


As an example, 
\Fig{Composition-1426}, illustrates how the $f$-DP rule gives a better measurement for privacy loss than does the standard $(\epsilon,\delta)$-DP composition rule.

\begin{figure}
\centering
\begin{minipage}{0.6\textwidth}
\includegraphics[width=0.9\textwidth]{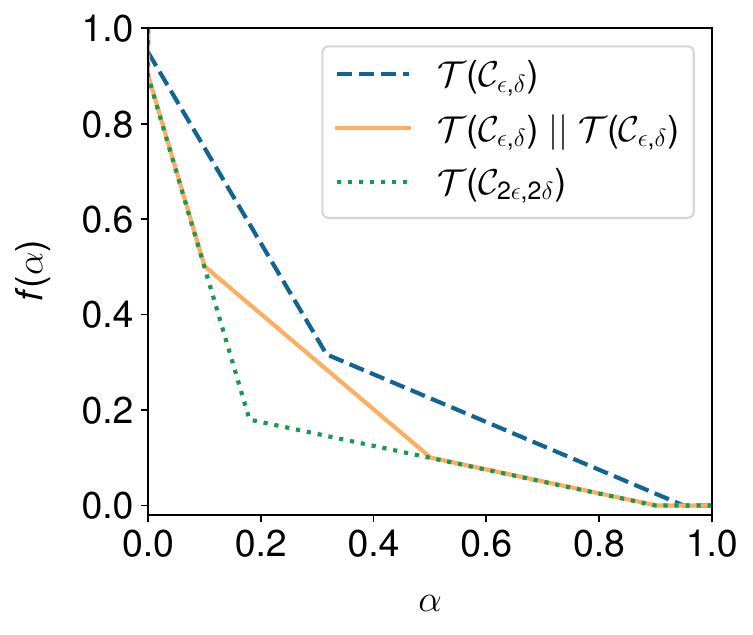}
\end{minipage}\hfill
\begin{minipage}{0.4\textwidth}
\caption{Detailed privacy profiles for composition. Notice how orange plot lies above the green plot, and has a gradient of $-1$ for $\alpha$ between $0.1$ and $0.4$ (approx.) indicating no privacy leakage for these tests, unlike the green plot which corresponds to the standard DP-composition, giving an over approximation for privacy loss. }\label{Composition-1426}
\end{minipage}
\end{figure}

\subsection{Visible probabilistic choice}\label{vis choice}
Visible probabilistic choice models the situation where the
choice between applying $M$ or $C$ is made randomly, but which one was applied can be deduced from the output. This occurs when the outputs of the two channels are drawn from non-intersecting sets.
Given channels $C:{\cal X}\rightarrow {\cal Y},M:{\cal X}\rightarrow {\cal Z}$, where ${\cal Y}\cap {\cal Z}= \phi$ we define the \emph{visible} probabilistic choice $C \leftindex_r{\oplus} M$ as follows:
\[
\begin{bmatrix}
p_0 & \dots  & p_n\\
q_0  & \dots  & q_n
\end{bmatrix}
\leftindex_r{\oplus}
\begin{bmatrix}
p'_0 & \dots  & p'_m\\
q'_0  & \dots  & q'_m
\end{bmatrix} ~~~ =~~~
\begin{bmatrix}
rp_0 & \dots  & rp_n & (1{-}r)p'_0 & \dots  & (1{-}r)p'_m\\
rq_0  & \dots  & rq_n & (1{-}r)q'_0  & \dots  & (1{-}r)q'_m
\end{bmatrix} ~.
\] 

In the (abstract) channel at right, all columns are derived from either $C$ or $M$, but scaled by $r$ or $1{-}r$ depending on whether the column originated from $C$ or $M$. It turns out that the visible probabilistic choice composition rule for $f$-DP is determined by the error function \Def{err-1828}. As before, let ${\cal T}(C)\geq f$, and ${\cal T}(M) \geq f'$,  and let
 $\alpha= \textit{err}_{{\cal C}(f)}(h)$, and  $\alpha' = \textit{err}_{{\cal C}(f')}(h)$, then:
\begin{equation}\label{vis-choice}
{\cal T}(C \leftindex_r\oplus M)(r{\times}\alpha+(1{-}r){\times}\alpha') ~\geq~ r{\times}f(\alpha) + (1{-}r){\times}f'(\alpha')~.
\end{equation}


As an example, we notice that the canonical channel $C_{\epsilon, \delta}$ is actually the visible probabilistic choice of two channels, i.e. $C_{\epsilon, \delta}= C_{\infty, 0} \leftindex_\delta\oplus~  C_{\epsilon, 0}$, where $C_{\infty,0}$ is the extreme channel that reveals exactly what the inputs are, and $C_{\epsilon, 0}$ is known as a ``pure random response channel''. Using (\ref{vis-choice}) we note that
for $h{\geq} 0$, we have $\textit{err}_{C_{\infty,0}}(h)=0$ and ${\cal T}(C_{C_{\infty,0}})(0)=0$, 
thus  we have immediately that ${\cal T}(C_{\epsilon, \delta})(\delta{\times}0+(1{-}\delta){\times}\alpha)={\cal T}(C_{\epsilon, \delta})((1{-}\delta)\alpha)=(1{-}\delta){\cal T}(C_{\epsilon,0})(\alpha)$.
    \Fig{Vis choice}, in the appendix, illustrates more examples. 

\subsection{Hidden probabilistic choice}


Hidden probabilistic choice is similar to visible probabilistic choice in that the choice between applying $M$ or $C$ is made randomly, but \emph{unlike} visible choice it \emph{cannot} be determined by looking at the outputs which one was applied. This situation occurs in implementations such as \Alg{Purify-1511}, as described in \Sec{Applications-main}.



For example, assume that $C,M$ both have $n{+}1$ columns with the columns labelled with outputs ${\cal Y}= \{y_0,\dots y_n\}$. The hidden probabilistic choice with bias $r$ produced by combining $C$ and $M$ in this way is:
\[
\begin{bmatrix}
p_0 & \dots  & p_n\\
q_0  & \dots  & q_n
\end{bmatrix}
\leftindex_r{\boxplus}
\begin{bmatrix}
p'_0 & \dots  & p'_n\\
q'_0  & \dots  & q'_n
\end{bmatrix} ~~~ =~~~
\begin{bmatrix}
rp_0 {+} (1{-}r)p'_0& \dots  & rp_n{+} (1{-}r)p'_n\\
rq_0{+} (1{-}r)q'_0  & \dots  & rq_n{+} (1{-}r)q'_n 
\end{bmatrix} ~.
\]
Interestingly, the $f$-DP composition rule for hidden probabilistic choice does not have a direct definition because hidden probabilistic choice is sensitive to the precise outputs of the two mechanisms, and this information is not recorded in trade-off functions. However it is still the case that we can compute the indistinguishability profile for algorithms that use hidden probabilistic choice, by applying ${\cal T}$ to the channel composition, as in right-hand-side channel above.

\subsection{Pre-processing as a composition}
Our final composition is pre-processing, which arises when the inputs to a privacy mechanism are processed in some way before being presented to the mechanism. This is the case for the common example of sub-sampling in machine learning applications \cite{Balle_Barthe_Gaboardi_2020}.
Given a mechanism $M$, a pre-processing process $P$  is applied before applying the mechanism $M$.  As a channel, this is modelled as a pre-matrix multiplication $P\cdot M$, thus its privacy profile becomes ${\cal T}(P \cdot M)$.



\section{Universal Properties of Compositions}\label{Univ-comps-1028}

Finally we study relationships between privacy profiles of the different operators.

\begin{theorem}[Composition theorems]\label{composition-results-1310}
The following inequalities hold:
\begin{enumerate}
\item (Parallel composition) ${\cal T}(C\parallel D)\leq {\cal T}(C) \sqcap {\cal T}(D)$
\item (Visible choice) ${\cal T}(C) \sqcap {\cal T}(D) \leq {\cal T}(C \leftindex_p\oplus D) \leq {\cal T}(C) \sqcup {\cal T}(D)$
\item (Pre-processing) ${\cal T}(C) \leq {\cal T}(C \cdot Q)$
\item (Visible and hidden choice) ${\cal T}(C \leftindex_p \oplus D) \leq {\cal T}(C \leftindex_p \boxplus D)$
\item Let $\mathcal Y_1, \mathcal Y_2$ be two disjoint sets. Let $D,E: \mathcal X \to \mathcal Y_1$ and $C: \mathcal X \to \mathcal Y_2$. Then 
\[{\cal T}((C \leftindex_p \oplus D)\leftindex_r \boxplus E) = {\cal T}(C \leftindex_{rp} \oplus (D \leftindex_{\frac{r(1-p)}{r(1-p)+(1-r)}} \boxplus E))\]
\begin{proof}
All results follow from well-known channel refinements and \Thm{GC-1651} and \Cor{corollay1123}. For example, (1) follows since $C\parallel D\sqsubseteq C,D$. For (2) we have, 
\[
(C \min D) = (C \min D) \leftindex_p\oplus  (C \min D)\sqsubseteq C \leftindex_p\oplus D~, 
\]
with the last inequality following from the observation that $C\min D \sqsubseteq C, D$, and $\leftindex_p\oplus$ is a monotone operator. The remaining inequalities follow similarly.
\end{proof}
\end{enumerate}
\end{theorem}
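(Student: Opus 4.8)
The plan is to prove the five items of \Thm{composition-results-1310} by translating each claimed inequality on trade-off functions into a refinement statement on the corresponding two-row channels, invoking the Galois connection \Thm{GC-1651} and the monotonicity corollary \Cor{corollay1123} to pass back and forth. The general schema is: to show ${\cal T}(A)\leq {\cal T}(B)$ it suffices by \Cor{corollay1123} to show $A\sqsubseteq B$; and to relate ${\cal T}$ applied to a composite with the lattice operations $\sqcap,\sqcup$ we additionally use \Cor{Samemins-1040} (which identifies $\min$ in ${\mathbb C}_2$ with $\sqcap$ in ${\mathbb F}$). So the real content reduces to a handful of purely channel-theoretic refinement facts, most of which are standard in QIF.

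For (1), parallel composition always leaks at least as much as either component, i.e.\ $C\parallel D\sqsubseteq C$ and $C\parallel D\sqsubseteq D$ (the witness is the marginalising channel that discards one output coordinate); hence ${\cal T}(C\parallel D)\leq {\cal T}(C)$ and ${\cal T}(C\parallel D)\leq {\cal T}(D)$, and taking the greatest lower bound in $({\mathbb F},\leq)$ gives ${\cal T}(C\parallel D)\leq {\cal T}(C)\sqcap {\cal T}(D)$. For (3), pre-processing is post-composition on the secret side, and $C\sqsubseteq C\cdot Q$ whenever $Q$ is... actually pre-processing decreases leakage, so $C\cdot Q \sqsubseteq C$ is false in general; the correct direction is that $C \sqsubseteq C\cdot Q$ need not hold either --- rather one uses that data-processing on inputs can only help privacy, which at the channel level is the statement ${\cal T}(C)\leq {\cal T}(C\cdot Q)$ directly, or equivalently that $C\cdot Q$ is refined by something built from $C$; I would instead argue this one by the hypothesis-testing characterisation, noting that any test distinguishing the two rows of $C\cdot Q$ pulls back through $Q$ to a (randomised) test on the rows of $C$ with the same significance and power, so ${\bf T}(C_{D_0},C_{D_1})\leq {\bf T}((C\cdot Q)_{D_0},(C\cdot Q)_{D_1})$ pointwise. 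For (2), the lower bound $C\min D\sqsubseteq C\leftindex_p\oplus D$ is exactly the computation shown in the excerpt's proof sketch (using $C\min D = (C\min D)\leftindex_p\oplus(C\min D)$ and monotonicity of $\leftindex_p\oplus$), which by \Cor{corollay1123} and \Cor{Samemins-1040} yields ${\cal T}(C)\sqcap {\cal T}(D)={\cal T}(C\min D)\leq {\cal T}(C\leftindex_p\oplus D)$; the upper bound follows because $C\leftindex_p\oplus D$ is refined by $C$ and by $D$ (merging the disjoint output blocks of one side back to a single symbol is a valid witness), whence ${\cal T}(C\leftindex_p\oplus D)\leq {\cal T}(C)\sqcup{\cal T}(D)$ --- here one must be slightly careful that $\sqcup$ in ${\mathbb F}$ is the pointwise max and that refinement by both $C$ and $D$ gives an upper bound, not a lower bound, so the $\sqcup$ (not $\sqcap$) is correct. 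For (4), visible choice reveals strictly more than hidden choice: the hidden composite $C\leftindex_p\boxplus D$ is obtained from $C\leftindex_p\oplus D$ by the post-processing channel that forgets which block an output came from and merges the two copies of each output label, so $C\leftindex_p\oplus D\sqsubseteq C\leftindex_p\boxplus D$, giving the claimed inequality by \Cor{corollay1123}.

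Item (5) is the genuinely new one and where I expect the work to lie. It is an \emph{equality} of trade-off functions, so monotonicity alone is not enough; the cleanest route is to exhibit a column-permutation/relabelling bijection showing the two composite channels are literally the same channel up to reordering columns (QIF property (I) says reordering columns does not change leakage semantics, hence preserves ${\cal T}$). Concretely, on the left $(C\leftindex_p\oplus D)\leftindex_r\boxplus E$ first forms the visible choice of $C$ (weight $p$) and $D$ (weight $1{-}p$) over the disjoint sets ${\cal Y}_2$ and ${\cal Y}_1$, then hidden-mixes the whole thing with $E$ (over ${\cal Y}_1$) at bias $r$; because $C$'s block sits over ${\cal Y}_2$ which is disjoint from ${\cal Y}_1$, the hidden mix only actually collides on the ${\cal Y}_1$-block, turning the $D$-part into a hidden mix of $D$ and $E$, while the $C$-block survives untouched but with its weight scaled to $rp$. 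Working out the surviving weights on each side: on the left the ${\cal Y}_2$-columns carry total mass $rp$ and the ${\cal Y}_1$-columns carry $r(1{-}p)+(1{-}r)$, split internally between $D$ and $E$ in ratio $r(1{-}p):(1{-}r)$; on the right $C\leftindex_{rp}\oplus(D\leftindex_{s}\boxplus E)$ with $s=\frac{r(1-p)}{r(1-p)+(1-r)}$ puts mass $rp$ on the $C$-block and mass $1{-}rp = r(1{-}p)+(1{-}r)$ on the hidden mix of $D$ and $E$, split in ratio $s:(1{-}s) = r(1{-}p):(1{-}r)$ --- these match exactly, so the two channels coincide columnwise and \emph{a fortiori} ${\cal T}$ of them agree. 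The main obstacle is purely bookkeeping: one must verify that the disjointness hypothesis ${\cal Y}_1\cap{\cal Y}_2=\emptyset$ is used in precisely the right place (it is what prevents the outer hidden mix from collapsing any $C$-columns with any $E$-columns), and confirm the normalisation constant $s$ is the unique value making the internal $D{:}E$ ratios agree. No deep QIF machinery is needed for (5) beyond property (I); the risk is an arithmetic slip in the weight accounting, which I would guard against by checking the total mass sums to $1$ on both sides and by spot-checking the degenerate cases $r=1$ (recovers $C\leftindex_p\oplus D$ on both sides) and $p=1$ (recovers $C\leftindex_r\boxplus E$ on the left and $C\leftindex_r\oplus(\text{hidden mix that is just }E\text{-free})$... which should again match).
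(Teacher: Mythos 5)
Your overall strategy---reduce each inequality to a channel refinement and pass through \Cor{corollay1123}, \Cor{Samemins-1040} and \Thm{GC-1651}---is the paper's strategy, and your treatments of (1), (4) and (5) are correct; in particular your column accounting for (5), including the identity $r(1-p)+(1-r)=1-rp$ that makes the inner and outer weights match, is exactly the verification the paper leaves implicit. Your (3) reaches a correct conclusion by a valid pull-back-of-tests argument, but you talk yourself out of the simpler route for no reason: $C\sqsubseteq C\cdot Q$ holds by definition of refinement (\Def{ref-1814}, with witness $W=Q$), so (3) is immediate from \Cor{corollay1123}.

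The genuine gap is the upper bound in (2). You justify ${\cal T}(C\leftindex_p{\oplus}D)\leq{\cal T}(C)\sqcup{\cal T}(D)$ by claiming that $C\leftindex_p{\oplus}D$ is refined by both $C$ and $D$. That refinement is false in general: take $C$ the identity channel and $D$ a constant channel on a disjoint output; then $C\leftindex_p{\oplus}D$ has a non-distinguishing column carrying mass $1-p$, and no post-processing of it can recover the perfectly distinguishing $C$, so $C\leftindex_p{\oplus}D\not\sqsubseteq C$. Worse, if your claimed refinements did hold, \Cor{corollay1123} would give ${\cal T}(C\leftindex_p{\oplus}D)\leq{\cal T}(C)$ and ${\cal T}(C\leftindex_p{\oplus}D)\leq{\cal T}(D)$, hence ${\cal T}(C\leftindex_p{\oplus}D)\leq{\cal T}(C)\sqcap{\cal T}(D)$, which together with your own lower bound would force equality ${\cal T}(C\leftindex_p{\oplus}D)={\cal T}(C)\sqcap{\cal T}(D)$ --- false in the same example, where the left side is $1-p$ at $\alpha=0$ while the right side is the zero function. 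The upper bound needs a different argument: for any $\alpha$, run the optimal level-$\alpha$ test for $C$ on the ${\cal Y}$-block of outputs and the optimal level-$\alpha$ test for $D$ on the ${\cal Z}$-block; the composite test has significance $p\alpha+(1-p)\alpha=\alpha$ and Type~II error $p\,{\cal T}(C)(\alpha)+(1-p)\,{\cal T}(D)(\alpha)\leq\max\bigl({\cal T}(C)(\alpha),{\cal T}(D)(\alpha)\bigr)$, whence ${\cal T}(C\leftindex_p{\oplus}D)(\alpha)\leq({\cal T}(C)\sqcup{\cal T}(D))(\alpha)$. Equivalently, one can use the fact that the vulnerability of a visible choice is the $p$-convex combination of the component vulnerabilities.
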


We also have the following monotonicity results, which again follow from \Thm{GC-1651}, \Cor{corollay1123} and refinement properties of channels \cite{Alvim:20a}.

\begin{theorem}[Monotonicity results]\label{Refinements-1523}
Let $C\sqsubseteq C'$ and $Q\sqsubseteq Q'$, then the following refinements hold:

\begin{enumerate}
\item (Parallel Composition) ${\cal T}(C\parallel Q) \leq {\cal T}(C'\parallel Q)$
\item (Visible choice) ${\cal T}(C \leftindex_p\oplus Q) \leq {\cal T}(C' \leftindex_p\oplus Q)$
\item (Pre-processing) ${\cal T}(C \cdot Q)\leq {\cal T}(C \cdot Q')$
\end{enumerate}
\end{theorem}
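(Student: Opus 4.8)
The plan is to reduce all three monotonicity statements to the single fact (from \Cor{corollay1123} and \Thm{GC-1651}) that ${\cal T}$ is monotone with respect to $\sqsubseteq$, together with the observation that each of the three channel operators --- parallel composition $\parallel$, visible choice $\leftindex_p\oplus$, and post-processing (matrix multiplication on the right) $C{\cdot}Q$ --- is itself monotone in each argument with respect to $\sqsubseteq$. Granting the latter, the argument in every case is a one-liner: from $C\sqsubseteq C'$ we get $C\parallel Q\sqsubseteq C'\parallel Q$ (resp.\ $C\leftindex_p\oplus Q\sqsubseteq C'\leftindex_p\oplus Q$), and hence ${\cal T}(C\parallel Q)\leq{\cal T}(C'\parallel Q)$ by \Cor{corollay1123}; for (3), from $Q\sqsubseteq Q'$ we get $C{\cdot}Q\sqsubseteq C{\cdot}Q'$ and again apply \Cor{corollay1123}.

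So the real content is the monotonicity of the channel operators under refinement, and the cleanest route to that is via the witness characterisation of refinement (\Def{ref-1814}) rather than the leakage characterisation. For post-processing (case 3): if $Q\sqsubseteq Q'$ with witness $W$, i.e.\ $Q{\cdot}W = Q'$, then $(C{\cdot}Q){\cdot}W = C{\cdot}Q'$, so the same $W$ witnesses $C{\cdot}Q\sqsubseteq C{\cdot}Q'$. For parallel composition (case 1): if $C\sqsubseteq C'$ with witness $W_1$ (so $C{\cdot}W_1 = C'$), then the witness for $C\parallel Q \sqsubseteq C'\parallel Q$ is $W_1 \otimes I$ acting on the product output space, since $(C\parallel Q)\cdot(W_1\otimes I)$ sends $x$ to the product of $(C{\cdot}W_1)$'s $x$-row with $Q$'s $x$-row, which is exactly the $x$-row of $C'\parallel Q$; this is a direct check using the defining equation~(\ref{parallel-eqn}). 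For visible choice (case 2): if $C{\cdot}W_1 = C'$, the witness for $C\leftindex_p\oplus Q\sqsubseteq C'\leftindex_p\oplus Q$ is the block-diagonal channel $W_1\oplus I$ (block $W_1$ on the ${\cal Y}$-columns, identity on the ${\cal Z}$-columns), since post-processing respects the disjoint-block structure and the scalar factors $p, 1{-}p$ are untouched.

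All of these witness constructions are standard QIF facts (the monotonicity of $\parallel$, $\leftindex_p\oplus$ and $\cdot$ under $\sqsubseteq$ is recorded in \cite{Alvim:20a}), so the proof can legitimately cite them; writing out the block/tensor witnesses is the only mildly technical step, and it is routine. I do not anticipate a genuine obstacle here --- the one thing to be careful about is case (2), where one must confirm that the disjointness of ${\cal Y}$ and ${\cal Z}$ is preserved after applying the witness $W_1$ (it is, because $W_1$ only relabels within the ${\cal Y}$-block), so that the result is again a genuine visible choice. With that checked, the three statements follow immediately by composing ``operator monotone under $\sqsubseteq$'' with ``${\cal T}$ monotone under $\sqsubseteq$''.
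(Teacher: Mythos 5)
Your proposal is correct and follows essentially the same route as the paper, which proves this theorem in one line by combining the monotonicity of ${\cal T}$ (\Cor{corollay1123}, via \Thm{GC-1651}) with the monotonicity of $\parallel$, $\leftindex_p\oplus$ and channel multiplication under $\sqsubseteq$, citing the latter as standard refinement properties from the QIF literature. The only difference is that you additionally write out the explicit refinement witnesses ($W$, $W_1\otimes I$, and the block-diagonal witness), which the paper leaves to the cited reference; these constructions are all correct.
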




We end this section by demonstrating a canonical representation for a common form of trade-off function, namely symmetric, piecewise linear functions.

\begin{definition}[Symmetric, piecewise linear trade-off functions]\label{Sym-trade-1214}
A trade-off function is symmetric if $f(\alpha)= f\circ f(\alpha)$, for all $0{\leq}\alpha{\leq}1$. It is finite, piecewise linear  if it is linear almost everywhere, except for  a finite number of facet points.  
\end{definition}

Recall $f_{\epsilon, \delta}$ which is piecewise linear and symmetric; we noted above that there is a canonical representation of ${\cal C}(f_{\epsilon, \delta})$ as a visible probabilistic choice over channels $C_{\epsilon, 0}$ and $C_{\infty, 0}$. It turns out that this is true generally for all symmetric, piecewise linear trade-off functions.

\begin{lemma}[Canonical representation for symmetric trade-off functions]\label{Can-rep-1220}
Let $f$ be a finite, symmetric trade-off function, with $N{+}1$ facet points. Then there are $\epsilon_0, \dots \epsilon_{\lfloor N/2\rfloor}$ reals such that ${\cal C}(f)$ is a visible probabilistic choice over $C_{\epsilon_i,0}$.
\begin{proof}
Direct consequence of \Alg{channel-alg-1734}, using the symmetric condition of $f$. In particular, let $M$ be the result of applying \Alg{channel-alg-1734}. This means that $M$ has  $N{+}1$ columns and two rows, such that $M_{x,n}= M_{1{-}x, N-n}$. We define $\epsilon_n= M_{0,n}/M_{1,n}$ for $0{\leq} n {\leq} N/2$, and observe that the two columns at $M_{-n}, M_{-(N{-}n)}$ correspond to $C_{\epsilon_n, 0}$, scaled by (visible) probability $M_{0,n}{+}M_{0,(N{-}n)}$.
\end{proof}
\end{lemma}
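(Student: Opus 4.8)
The plan is to compute ${\cal C}(f)$ explicitly with \Alg{channel-alg-1734}, read its columns off the facet data of $f$, use the symmetry of $f$ to show that the columns occur in mirror-image pairs, and recognise each such pair as a scaled pure random-response channel $C_{\epsilon,0}$; this exhibits $M{:=}{\cal C}(f)$ as a visible probabilistic choice over the $C_{\epsilon_i,0}$.

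Concretely, write $\alpha_0{=}0{<}\alpha_1{<}\dots{<}\alpha_N{=}1$ for the facet points of $f$ and $\beta_i{:=}f(\alpha_i)$. Running \Alg{channel-alg-1734} produces the column $(\alpha_0,\,1{-}\beta_0)=(0,\,1{-}\beta_0)$ together with, for $0{\le}j{\le}N{-}1$, the columns $(\alpha_{N-j}{-}\alpha_{N-1-j},\ \beta_{N-1-j}{-}\beta_{N-j})$; only the first of these can vanish (precisely when $\beta_0{=}1$), and in that case I discard it, re-indexing so that $M$ has columns $0,\dots,N$. Thus column $n$ of $M$ records, scanning the convex decreasing graph of $f$ from its right-hand end leftwards, the $n$-th pair $(\textit{horizontal extent},\ \textit{vertical drop})$ of a facet segment, while the final column records the vertical rise of the graph at $\alpha{=}0$ from height $1$ up to $\beta_0$.

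The crux is the \emph{column-palindrome identity} $M_{x,n}=M_{1-x,\,N-n}$ for all $x{\in}\{0,1\}$ and $0{\le}n{\le}N$, which I claim the symmetry of $f$ forces. Geometrically: reflecting the graph of $f$ in the diagonal $\beta{=}\alpha$ reverses the order of the facet segments and interchanges ``horizontal extent'' with ``vertical drop'' throughout — which on the column data just described is exactly the map $M_{x,n}\mapsto M_{1-x,N-n}$ — and it carries the vertical increment at $\alpha{=}0$ to the horizontal increment of $f$ along $\beta{=}0$ (the flat part of $f$) and back; since $f$ is symmetric this reflection fixes the graph of $f$, and by \Def{glb-1642} and \Alg{channel-alg-1734} the channel ${\cal C}(f)$ is determined by that graph (only by its facet data), so $M$ is fixed by the reflection. (Equivalently and more algebraically: the map $\sigma$ that swaps a channel's two rows and reverses its columns is an order-automorphism of $(\mathbb{C}_2,\sqsubseteq)$ — swapping rows commutes with post-processing, and reversing columns leaves leakage, hence $\sqsubseteq$, unchanged since the leakage semantics ignores column order — and $\sigma$ permutes, up to $\sqsubseteq$-redundant members, the family of $2{\times}2$ channels with rows $(1{-}\alpha,\alpha)$ and $(f(\alpha),1{-}f(\alpha))$ whose greatest lower bound is ${\cal C}(f)$, hence fixes it.) Making this airtight — tracking the possibly-degenerate first column, the flat portion of $f$, and precisely how the symmetry condition of \Def{Sym-trade-1214} pins down the facet points — is the main obstacle; everything else is routine.

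Granting the identity, for $0{\le}n{\le}\lfloor N/2\rfloor$ pair column $n$ with column $N{-}n$ (when $N$ is even, the central column $N/2$ is paired with itself). By the identity the restriction of $M$ to the columns $\{n,N{-}n\}$ has rows $(M_{0,n},M_{1,n})$ and $(M_{1,n},M_{0,n})$, both summing to $s_n{:=}M_{0,n}{+}M_{1,n}$, so this $2{\times}2$ block equals $s_n\,C_{\epsilon_n,0}$ with $e^{\epsilon_n}{:=}M_{0,n}/M_{1,n}$ (taking $\epsilon_n{=}\infty$ when $M_{1,n}{=}0$, giving $C_{\infty,0}$). Since \Alg{channel-alg-1734} lists columns in order of increasing ratio $M_{1,\cdot}/M_{0,\cdot}$ (the negated slopes of the convex $f$), the palindrome makes these ratios reciprocal within each pair, so on the left half $M_{0,n}{\ge}M_{1,n}$, whence $\epsilon_n{\ge}0$; for the central column, when present, $M_{0,N/2}{=}M_{1,N/2}$, i.e.\ it is a scaled copy of $C_{0,0}$ ($\epsilon_{N/2}{=}0$). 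Finally, summing $\sum_n M_{0,n}{=}1$ over the pairs and using $M_{0,N-n}{=}M_{1,n}$ yields $\sum_{n\le\lfloor N/2\rfloor}s_n{=}1$ (the central column counted once). As the paired columns carry pairwise disjoint output labels, concatenating these scaled sub-channels exhibits $M={\cal C}(f)$ exactly as the visible probabilistic choice over $C_{\epsilon_0,0},\dots,C_{\epsilon_{\lfloor N/2\rfloor},0}$ with visible biases $s_0,\dots,s_{\lfloor N/2\rfloor}$, which is the required representation.
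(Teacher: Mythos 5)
Your proof follows essentially the same route as the paper's: run \Alg{channel-alg-1734}, derive the column-palindrome identity $M_{x,n}=M_{1-x,\,N-n}$ from the symmetry of $f$, and pair column $n$ with column $N{-}n$ to obtain a scaled copy of $C_{\epsilon_n,0}$. You are in fact somewhat more careful than the paper --- you discard the degenerate zero column that arises when $f(0)=1$, check that the visible biases sum to $1$, and correctly take $e^{\epsilon_n}=M_{0,n}/M_{1,n}$ where the paper's proof writes $\epsilon_n=M_{0,n}/M_{1,n}$.
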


We note finally that many of our analyses can be usefully carried out using the canonical mechanisms satisfying $f$-DP, since the monotonicity results of \Thm{Refinements-1523} imply a (tight) lower bound for the class of mechanisms satisfying the given $f$-DP constraint.  
As usual, refinement ensures that a property holds generally.

\section{Implementations of privacy-enhancing mechanisms}\label{Applications-main}
\subsubsection{Privacy purification} 
We can now use the operators in \Sec{comp-defs} to describe the privacy semantics of \Alg{Purify-1511} in ${\mathbb C}_2$, and then use \Thm{GC-1651} to compute a tight privacy profile. Recall that the algorithm takes as input a mechanism $M$ that we assume satisfies some $f_{\epsilon, \delta}$-differential privacy specification, and the objective is to  ensure that the profile of the output corresponds to a pure differential privacy constraint. In this case it means that ${\cal T}(\Alg{Purify-1511})$ has gradient always bounded away from $0$ and $\infty$. This is equivalent to $f_{\epsilon', 0}\leq {\cal T}(\Alg{Purify-1511})$, for some $\epsilon'{\geq} 0$. We first show how to model \Alg{Purify-1511} using channel compositions.

We observe that the first six lines of \Alg{Purify-1511} correspond to a probabilistic choice with bias $r$ i.e.\ $M$ (is applied to $x$) with probability $r$, or, with probability $1{-}r$, $x$ is ignored and a random value is reported. This choice is a combination of hidden and visible choice depending on overlap of  ${\cal Y}$  and ${\cal Y'}$.
\footnote{We observe that for well-definedness ${\cal Y'}$ must be bounded.}
When they overlap entirely, this is a hidden choice; when they overlap partially then it is a combination of visible and hidden choice. For example, if we assume that ${\cal Y'}\subset {\cal Y}$, then we can rewrite $M$ as $M_1\leftindex_p\oplus M_2$, where the output of $M_1$ is ${\cal Y'}$; we can then use \Thm{composition-results-1310}(5). to compute the trade-off function of $M\leftindex_r\boxplus U[{\cal Y'}]$.
  


The final line of the algorithm, before the output is effectively a post-processing by a Geometric perturbation, which we denote by $G_{\epsilon'}$. 
Overall we can model the effect of the purification algorithm as $(M \leftindex_r\boxplus U[{\cal Y}'])\cdot G_{\epsilon'}$.

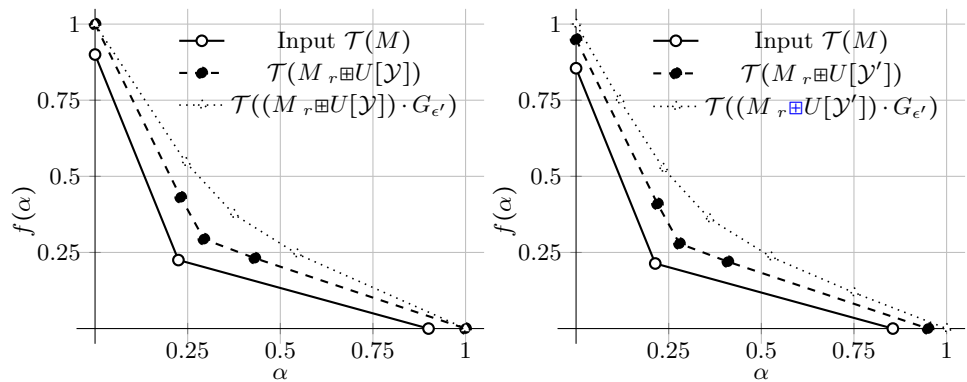
\begin{figure}[!th]\label{fig:hockey_sticks}
\begin{tikzpicture}
  \begin{axis}[
    width=7cm,
    axis lines=middle,
    axis line style={-}, 
    xmin=-0.05, xmax=1.05,
    ymin=-0.05, ymax=1.05,
    xtick={0,0.25,0.5,0.75,1},
    ytick={0,0.25,0.5,0.75,1},
    grid=both,
    xlabel={$\alpha$},
    ylabel={$f(\alpha)$},
    xlabel style={
      at={(axis description cs:0.5,0)},
      anchor=north,
      yshift=-6pt
    },
    ylabel style={
      at={(axis description cs:0,0.5)},
      anchor=east,
      xshift=-20pt,rotate=90
    },
    enlargelimits=false,
    clip=false,
    legend style={
      at={(0.97,0.97)}, 
      anchor=north east,
      draw=none,
      fill=none
    }
  ]

      \addplot[thick, mark=*, mark options={fill=white}] coordinates {
      (0,0.9)
      (0.225,0.225)
      (0.9,0)
    };
    \addlegendentry{Input ${\cal T}(M)$ }
    \addplot[ thick, dashed, mark=*] coordinates {
      (0,1)
      (0.23125,0.43125)
      (0.29375,0.29375)
      (0.43125,0.23125)
      (1,0)
    };
    \addlegendentry{${\cal T}(M\leftindex_r\boxplus U[{\cal Y}])$ }

    \addplot[semithick, dotted, mark=triangle*, mark options={fill=white}] coordinates {
      (0,1)
      (0.2453,0.5453)
      (0.3750,0.3750)
      (0.5453,0.2453)
      (1,0)
    };
    \addlegendentry{${\cal T}((M\leftindex_r\boxplus U[{\cal Y}])\cdot G_{\epsilon'})$}    
  \end{axis}
\end{tikzpicture}%
%
\begin{tikzpicture}
  \begin{axis}[
    width=7cm,
    axis lines=middle,
    axis line style={-}, 
    xmin=-0.05, xmax=1.05,
    ymin=-0.05, ymax=1.05,
    xtick={0,0.25,0.5,0.75,1},
    ytick={0,0.25,0.5,0.75,1},
    grid=both,
    xlabel={$\alpha$},
    ylabel={$f(\alpha)$},
    xlabel style={
      at={(axis description cs:0.5,0)},
      anchor=north,
      yshift=-6pt
    },
    ylabel style={
      at={(axis description cs:0,0.5)},
      anchor=east,
      xshift=-20pt,rotate=90
    },
    enlargelimits=false,
    clip=false,
    legend style={
      at={(0.97,0.97)}, 
      anchor=north east,
      draw=none,
      fill=none
    }
  ]
      \addplot[thick, mark=*, mark options={fill=white}] coordinates {
      (0,0.855)
      (0.2137,0.2137)
      (0.855,0)
    };
    \addlegendentry{Input ${\cal T}(M)$ }
    \addplot[ thick, dashed, mark=*] coordinates {
      (0,0.95)
      (0.2197,0.4097)
      (0.2791,0.2791)
      (0.4097,0.2197)
      (0.95,0)
    };
    \addlegendentry{${\cal T}(M \leftindex_r\boxplus U[{\cal Y'}])$}
  
    \addplot[semithick, dotted, mark=triangle*, mark options={fill=white}] coordinates {
      (0,1)
      (0.1176,0.7507)
      (0.2352,0.5265)
      (0.3605,0.3605)
      (0.5265,0.2352)
      (0.7507,0.1176)
      (1,0)
    };
    \addlegendentry{${\cal T}((M \leftindex_r{\textcolor{blue}\boxplus}{U[{\cal Y'}])\cdot G_{\epsilon'}})$}
 \end{axis}
\end{tikzpicture}

 \caption{Results at left assume ${\cal Y}'= {\cal Y}$, and at right that ${\cal Y'}\subset{\cal Y}$}\label{purify-left-right}.   
\end{figure}

\vspace{-12pt}
In \Fig{purify-left-right} we consider the two scenarios when ${\cal Y}$ and ${\cal Y}'$ do, or when they don't, coincide. We assume that $M=C_{\epsilon, \delta}$, and that each of the four columns of $M$ are selected uniformly by $U[{\cal Y}]$. The left-hand plot shows the profiles ${\cal T}(M$), ${\cal T}(M\leftindex_r\boxplus U[{\cal Y}])$, and ${\cal T}((M\leftindex_r\boxplus U[{\cal Y}])\cdot G_{\epsilon'})$. We see that both steps of \Alg{Purify-1511} improve the privacy of $M$; but when ${\cal Y}={\cal Y}'$, the final post-processing is not necessary for purification since ${\cal T}((M\leftindex_r\boxplus U[{\cal Y}])$ has finite, non-zero gradient.

This contrasts with the right-hand plot where we assume that ${\cal Y'}\subset {\cal Y}$. In this case the plot ${\cal T}(M\leftindex_r{\textcolor{blue}\boxplus} U[{\cal Y}'])$  crosses the horizontal axis at a point strictly less than 1 so that there is a (non-trivial) gradient of $0$. 
This means that the final post-processing step is crucial to ensure that the the overall sanitisation satisfies the desired pure differential privacy property. 

Interestingly, if we know that the input mechanism $M$ commutes with the final line of \Alg{Purify-1511}
using our composition rule, we can deduce the following:
\[
G_{\epsilon'} \sqsubseteq M \cdot G_{\epsilon'} \sqsubseteq ((M\leftindex_r\boxplus U[{\cal Y}])\cdot G_{\epsilon'}~,
\]
and therefore $f_{\epsilon',0}={\cal T}(G_{\epsilon'})\leq {\cal T}(((M\leftindex_r\boxplus U[{\cal Y}])\cdot G_{\epsilon'})$, leading to the ensures clause of \Alg{Purify-1511} being tightened to ``Satisfies $f_{\epsilon',0}$-DP''. This happens when $M$ is a simple perturbation, i.e.\ $x \gets x+ \mu(0)$, where $\mu$ is some probability distribution.

\subsubsection{Sub-sampling}
Sub-sampling is a technique to implement private training in machine learning \cite{steinke2022compositiondifferentialprivacy}, and \Alg{Subsample-poisson-2124} is an example.  A private dataset $D$ is to be used for training using a privacy-preserving process $M$. Rather than using the entire dataset $D$, only a sample $d\subseteq D$ is used as input to $M$.  The while-loop in \Alg{Subsample-poisson-2124} implements probabilistic sampling known as ``Poisson sampling'', where each record in $D$ is included in the sample $d$ with probability $\gamma$. 

\begin{algorithm}
\begin{algorithmic}\caption{Privacy-preserving with sub-sampling}\label{Subsample-poisson-2124}
\Require Mechanism $M:{\cal D}\rightarrow{\cal Y}$, Private dataset $D$; Parameters: $\gamma\in [0,1]$
\Ensure $z$ is a sanitised output for $D$, with better privacy than $M$ without sub-sampling. 
\State $i, d \gets |D|, \phi$;
\While ~ $i>0$
\State $v\gets U[0,1]$; \Comment{Choose a value uniformly from $[0,1]$}
\If {$(v< \gamma)$} \Comment{ Probabilistic choice with $\gamma$ bias, used as a pre-processor}
\State  $d\gets d \cup D_i$ 
\EndIf
\State $i\gets i{-}1$;
\EndWhile
\State $z\gets M(d)$; \Comment{Apply the mechanism to the sample $d$}
\State Output $z$; \Comment{Output sanitised result}
\end{algorithmic}
\end{algorithm}

\begin{figure}[!th]
    \begin{minipage}{0.5\textwidth}
\begin{tikzpicture}
  \begin{axis}[
    width=7cm,
    axis equal image,
    axis lines=left,
    xlabel={$\alpha$},
    ylabel={$f(\alpha)$},
    xmin=0, xmax=1,
    ymin=0, ymax=1,
    grid=both,
    legend pos=north east,
    xtick={0,0.25,0.5,0.75,1},
    ytick={0,0.25,0.5,0.75,1}
  ]

  \addplot[
    color=blue,
    thick,
    mark=*,
    mark options={fill=blue}
  ]
  coordinates {
    (0, 0.9)
    (0.24198, 0.24198)
    (0.9, 0)
  };
  \addlegendentry{ ${\cal T}(M)$
  }

  \addplot[
    color=red,
    thick,
    mark=square*,
    mark options={fill=red}
  ]
  coordinates {
    (0, 0.98)
    (0.24198, 0.654)
    (0.9, 0.08)
    (1, 0)
  };
  \addlegendentry{ ${\cal T}(P\cdot M)$
  }
  \end{axis}
\end{tikzpicture}
\end{minipage}
\begin{minipage}{0.6\textwidth}
Thus sub-sampling becomes a pre-processing of $M$; for Poisson sampling it can be shown that the pre-processing channel
$
P= \begin{bmatrix}
1 & 0\\
1{-}\gamma &\gamma 
\end{bmatrix}
$, so that the privacy profile for \Alg{Subsample-poisson-2124} is ${\cal T}(P\cdot M)$.  Moreover, by \Thm{Refinements-1523}(3), any mechanism $C_{\epsilon, \delta}\sqsubseteq M'$ satisfies ${\cal T}(P\cdot C_{\epsilon, \delta})$-differential privacy, defined by the red line in the plot. Note that other methods of sub-sampling, such as sampling with replacement, can be modelled in a similar way, although the details of the pre-processing is different.
\end{minipage}
\caption{Privacy profile for \Alg{Subsample-poisson-2124}\label{Sub-sample-757}}
\end{figure}

\Fig{Sub-sample-757} compares the effect of pre-processing with $P$. If we know that $M$ satisfies $f_{\epsilon,\delta}$, then the red plot shows the overall privacy profile.

\section{Related Work}
$f$-DP was introduced by Dong et al.~\cite{10.1111/rssb.12454} as a way to obtain more accurate analyses of privacy, showcasing the benefits largely for the Gaussian perturbation mechanism. It is also finding success for auditing privacy in complex mechanisms \cite{10.5555/3620237.3620329}.  Much of the definitions from \Sec{f-dp-basics} are defined in that work. Dong et al.\ showed that for every trade-off function $f$ there exists a distribution $\mu$ such that $f={\bf T}(U,\mu)$, where $U$ is the uniform distribution over the (bounded) domain of $\mu$. Our \Thm{GC-1651}, and its extension to general trade-off functions shows that Dong's construction is actually unique up to equivalence under refinement.
Su \cite{annurev:/content/journals/10.1146/annurev-statistics-112723-034158} also provides more insight about the connection to Blackwell's distinguishability by hypothesis testing \cite{af8b16cdacdb-3d4e-8fd0-40e54a9cc18a}. What is interesting here, is that Blackwell  \cite{af8b16cdacdb-3d4e-8fd0-40e54a9cc18a} suggests that it does not seem possible to extend hypothesis testing using hockey-stick gain functions to more complex privacy scenarios that require the analysis of many secrets \cite{DBLP:conf/icalp/AlvimACP11,DBLP:journals/jcs/AlvimACDP15}. This means that channels ${\mathbb C}_2$ corresponding to a trade-off function represents a worst-case scenario generating the tight $f$-DP constraint.

Awan et al. \cite{awan2023canonicalnoisedistribuionsprivate} have looked at sampling mechanisms for implementing exact $f$-DP constraints, especially for symmetric trade-off functions. Their proposal for efficient sampling for general $f_{\epsilon, \delta}$, for example could potentially be extended via our canonical channel representation for symmetric $f$-DP mechanisms (\Thm{Can-rep-1220}).   

Privacy purification was proposed by Lin et al.~\cite{lin2025purifyingapproximatediferentialprivacy}; our \Alg{Purify-1511} is based on their algorithm, using Geometric noise rather than Laplace noise as the final perturbation. Our lower bound should also apply to Laplace perturbation by our \Thm{Refinements-1523} here, since the Laplace perturber refines the Geometric perturber \cite{10.1109LICS52264.2021.9470718}. 
Sub-sampling has also been investigated by Balle et al.\cite{Balle_Barthe_Gaboardi_2020} wrt.\ an $(\epsilon,\delta)$-DP definition and Wang et al. \cite{pmlr-v89-wang19b} for Renyi-DP. Our analysis provides a full privacy profile, similar to the analysis in Dong et al.~\cite{10.1111/rssb.12454}.

Quantitative Information Flow for analysing security leaks in programs was first proposed by Clark et al.~\cite{DBLP:journals/jcs/ClarkHM07,ClarkHM01} Malacaria \cite{Malacaria:10}, and Smith \cite{Smith:2009aa}; the g-leakage framework was introduced by Alvim et al.~\cite{Alvim:2012aa} for general quantitative information flow analysis, and McIver et al.~\cite{McIver:10} for programs. Many of these ideas have been extended and developed by Alvim et al.~
\cite{Alvim:20a}.
A quantitative information flow semantics for programs was given by Gibbons et al.~\cite{Gibbons_McIver_Morgan_Schrijvers_2020}. Other verification techniques for privacy have been given in \cite{10.1145/2933575.2934554}, \cite{10.1007/978-3-030-34175-6_1},\cite{10.1145/2492061} and \cite{6957126}; a light-weight language for the verification of traditional differential privacy has been developed by Zhang and Kifer \cite{10.1145/3093333.3009884}. These methods use the ``worst-case'' features of differential privacy, and the simplifications in the analyses that it brings.

\section{Discussion and Future work}
We have demonstrated the equivalence between distinguishability via hypothesis testing and the QIF operational model for an information-aware program semantics. This enables novel composition rules for $f$-DP that can be used in the analysis for implementations for privacy-enhancing mechanisms. The QIF model in terms of information channels is particularly effective for monotone operators (such as $\parallel, \leftindex_p\oplus$ and pre-processing). We have also provided new analyses of privacy purification and sub-sampling. 

In future work we would like to make this analysis easier to apply directly to the verification of programs, for example by combining extant tools for analysing programs for information leaks~\cite{Gibbons_McIver_Morgan_Schrijvers_2020} with our \Alg{channel-alg-1734} to create optimal channel specifications of inputs.

\newpage 
\bibliographystyle{plain}
\bibliography{bibliography}

\begin{thebibliography}{10}

\bibitem{Alvim:20a}
M.~Alvim, K.~Chatzikokolakis, A.K. McIver, C.C. Morgan, G.S. Smith, and
  C.~Palamidessi.
\newblock {\em The Science of Quantitative Information Flow}.
\newblock Information Security and Cryptography. 2020.

\bibitem{Alvim:2012aa}
M.~S. Alvim, K.~Chatzikokolakis, C.~Palamidessi, and G.S. Smith.
\newblock Measuring information leakage using generalized gain functions.
\newblock In {\em Proc.\ 25th IEEE Computer Security Foundations Symposium (CSF
  2012)}, pages 265--279, June 2012.

\bibitem{DBLP:journals/jcs/AlvimACDP15}
M{\'{a}}rio~S. Alvim, Miguel~E. Andr{\'{e}}s, Konstantinos Chatzikokolakis,
  Pierpaolo Degano, and Catuscia Palamidessi.
\newblock On the information leakage of differentially-private mechanisms.
\newblock {\em Journal of Computer Security}, 23(4):427--469, 2015.

\bibitem{DBLP:conf/icalp/AlvimACP11}
M{\'{a}}rio~S. Alvim, Miguel~E. Andr{\'{e}}s, Konstantinos Chatzikokolakis, and
  Catuscia Palamidessi.
\newblock On the relation between differential privacy and quantitative
  information flow.
\newblock In {\em Automata, Languages and Programming - 38th International
  Colloquium, {ICALP} 2011, Zurich, Switzerland, July 4-8, 2011, Proceedings,
  Part {II}}, pages 60--76, 2011.

\bibitem{awan2023canonicalnoisedistribuionsprivate}
Jordan Awan and Salil Vadhan.
\newblock Canonical noise distributions and private hypothesis tests, 2023.

\bibitem{Balle_Barthe_Gaboardi_2020}
Borja Balle, Gilles Barthe, and Marco Gaboardi.
\newblock Privacy profiles and amplification by subsampling.
\newblock 10, Jan. 2020.

\bibitem{6957126}
Gilles Barthe, Marco Gaboardi, Emilio~Jesus Gallego~Arias, Justin Hsu, Cesar
  Kunz, and Pierre-Yves Strub.
\newblock { Proving Differential Privacy in Hoare Logic }.
\newblock In {\em 2014 IEEE 27th Computer Security Foundations Symposium
  (CSF)}, pages 411--424. IEEE Computer Society, 2014.

\bibitem{10.1145/2933575.2934554}
Gilles Barthe, Marco Gaboardi, Benjamin Gr\'{e}goire, Justin Hsu, and
  Pierre-Yves Strub.
\newblock Proving differential privacy via probabilistic couplings.
\newblock LICS '16, page 749–758. Association for Computing Machinery, 2016.

\bibitem{10.1145/2492061}
Gilles Barthe, Boris K\"{o}pf, Federico Olmedo, and Santiago
  Zanella-B\'{e}guelin.
\newblock Probabilistic relational reasoning for differential privacy.
\newblock {\em ACM Trans. Program. Lang. Syst.}, 35(3), November 2013.

\bibitem{af8b16cdacdb-3d4e-8fd0-40e54a9cc18a}
David Blackwell.
\newblock Equivalent comparisons of experiments.
\newblock {\em The Annals of Mathematical Statistics}, 24(2):265--272, 1953.

\bibitem{ClarkHM01}
David Clark, Sebastian Hunt, and Pasquale Malacaria.
\newblock Quantitative analysis of the leakage of confidential data.
\newblock {\em Electr. Notes Theor. Comput. Sci.}, 59(3):238--251, 2001.

\bibitem{DBLP:journals/jcs/ClarkHM07}
David Clark, Sebastian Hunt, and Pasquale Malacaria.
\newblock A static analysis for quantifying information flow in a simple
  imperative language.
\newblock {\em J. Comput. Secur.}, 15(3):321--371, 2007.

\bibitem{10.1111/rssb.12454}
Jinshuo Dong, Aaron Roth, and Weijie~J. Su.
\newblock Gaussian differential privacy.
\newblock {\em Journal of the Royal Statistical Society Series B: Statistical
  Methodology}, 84(1):3--37, 02 2022.

\bibitem{Dwork:2006aa}
Cynthia Dwork.
\newblock Differential privacy.
\newblock In {\em Proc.\ 33rd International Colloquium on Automata, Languages,
  and Programming (ICALP 2006)}, pages 1--12, 2006.

\bibitem{10.1109LICS52264.2021.9470718}
Natasha Fernandes, Annabelle McIver, and Carroll Morgan.
\newblock The laplace mechanism has optimal utility for differential privacy
  over continuous queries.
\newblock In {\em Proceedings of the 36th Annual ACM/IEEE Symposium on Logic in
  Computer Science}, LICS '21. IEEE Press, 2021.

\bibitem{Gibbons_McIver_Morgan_Schrijvers_2020}
Jeremy Gibbons, Annabelle McIver, Carroll Morgan, and Tom Schrijvers.
\newblock {\em Quantitative Information Flow with Monads in Haskell}, page
  391–448.
\newblock Cambridge University Press, 2020.

\bibitem{NPL1933}
Neyman Jerzy and Pearson~Egon Sharpe.
\newblock On the problem of the most efficient tests of statistical hypotheses.
\newblock {\em Philosophical Transactions of the Royal Society of London},
  1933.

\bibitem{lin2025purifyingapproximatediferentialprivacy}
Yingyu Lin, Erchi Wang, Yi-An Ma, and Yu-Xiang Wang.
\newblock Purifying approximate differential privacy with randomized
  post-processing, 2025.

\bibitem{Malacaria:10}
P.~Malacaria.
\newblock Risk assessment of security threats for looping constructs.
\newblock {\em Journal of Computer Security}, 18(2):191--228, 2010.

\bibitem{McIver:10}
Annabelle McIver, Larissa Meinicke, and Carroll Morgan.
\newblock Compositional closure for {Bayes Risk} in probabilistic
  noninterference.
\newblock In {\em Proceedings of the 37th international colloquium conference
  on Automata, languages and programming: Part II}, volume 6199 of {\em
  ICALP'10}, pages 223--235, Berlin, Heidelberg, 2010.

\bibitem{McIver:12}
Annabelle McIver, Larissa Meinicke, and Carroll Morgan.
\newblock A {Kantorovich}-monadic powerdomain for information hiding, with
  probability and nondeterminism.
\newblock In {\em Proc. {LiCS} 2012}, 2012.

\bibitem{10.1007/978-3-030-34175-6_1}
Annabelle McIver and Carroll Morgan.
\newblock Proving that programs are differentially private.
\newblock In Anthony~Widjaja Lin, editor, {\em Programming Languages and
  Systems}, pages 3--18, Cham, 2019. Springer International Publishing.

\bibitem{McIver:2014aa}
Annabelle McIver, Carroll Morgan, Geoffrey Smith, Barbara Espinoza, and Larissa
  Meinicke.
\newblock Abstract channels and their robust information-leakage ordering.
\newblock In Mart{\'{\i}}n Abadi and Steve Kremer, editors, {\em Principles of
  Security and Trust - Third International Conference, {POST} 2014, Held as
  Part of the European Joint Conferences on Theory and Practice of Software,
  {ETAPS} 2014, Grenoble, France, April 5-13, 2014, Proceedings}, volume 8414
  of {\em Lecture Notes in Computer Science}, pages 83--102. Springer, 2014.

\bibitem{8049725}
Ilya Mironov.
\newblock { Rényi Differential Privacy }.
\newblock In {\em 2017 IEEE 30th Computer Security Foundations Symposium
  (CSF)}, pages 263--275. IEEE Computer Society, 2017.

\bibitem{10.5555/3620237.3620329}
Milad Nasr, Jamie Hayes, Thomas Steinke, Borja Balle, Florian Tram\`{e}r,
  Matthew Jagielski, Nicholas Carlini, and Andreas Terzis.
\newblock Tight auditing of differentially private machine learning.
\newblock In {\em Proceedings of the 32nd USENIX Conference on Security
  Symposium}, SEC '23, USA, 2023. USENIX Association.

\bibitem{Smith:2009aa}
Geoffrey Smith.
\newblock On the foundations of quantitative information flow.
\newblock In Luca de~Alfaro, editor, {\em Proc.\ 12th International Conference
  on Foundations of Software Science and Computational Structures (FoSSaCS
  '09)}, volume 5504 of {\em Lecture Notes in Computer Science}, pages
  288--302, 2009.

\bibitem{steinke2022compositiondifferentialprivacy}
Thomas Steinke.
\newblock Composition of differential privacy and  privacy amplification by
  subsampling, 2022.

\bibitem{annurev:/content/journals/10.1146/annurev-statistics-112723-034158}
Weijie~J. Su.
\newblock A statistical viewpoint on differential privacy: Hypothesis testing,
  representation, and blackwell and apos;s theorem.
\newblock {\em Annual Review of Statistics and Its Application}, 12(Volume 12,
  2025):157--175, 2025.

\bibitem{10.1142/S0218488502001648}
Latanya Sweeney.
\newblock k-anonymity: a model for protecting privacy.
\newblock {\em Int. J. Uncertain. Fuzziness Knowl.-Based Syst.},
  10(5):557–570, October 2002.

\bibitem{pmlr-v89-wang19b}
Yu-Xiang Wang, Borja Balle, and Shiva~Prasad Kasiviswanathan.
\newblock Subsampled renyi differential privacy and analytical moments
  accountant.
\newblock In Kamalika Chaudhuri and Masashi Sugiyama, editors, {\em Proceedings
  of the Twenty-Second International Conference on Artificial Intelligence and
  Statistics}, volume~89 of {\em Proceedings of Machine Learning Research},
  pages 1226--1235. PMLR, 16--18 Apr 2019.

\bibitem{10.1145/3093333.3009884}
Danfeng Zhang and Daniel Kifer.
\newblock Lightdp: towards automating differential privacy proofs.
\newblock {\em SIGPLAN Not.}, 52(1):888–901, January 2017.

\end{thebibliography}

\newpage 
\appendix
\section{Proofs for \Sec{f-DP-QIF-Sec}}

\subsection{Proof for \Thm{QIF+HT-1213}}

\begin{enumerate}
    \item Fix $\alpha$. We set ${\cal T}(C)(\alpha)=p\leq q= {\cal T}(M)(\alpha)$. Wlog we assume that $p{\leq 1{-}p}$,  and $q{\leq 1{-}q}$, where
    \[
   C^\alpha = \begin{bmatrix}
    1{-}\alpha & \alpha\\
    p & 1{-}p
    \end{bmatrix}~~,~~
    M^\alpha = \begin{bmatrix}
    1{-}\alpha & \alpha\\
    q & 1{-}q
    \end{bmatrix} 
    \]
 We observe that
 \[
 \begin{bmatrix}
    1{-}\alpha & \alpha\\
    p & 1{-}p
    \end{bmatrix}
    \begin{bmatrix}
    a & 1{-}a\\
    b & 1{-}b
    \end{bmatrix} = \begin{bmatrix}
    1{-}\alpha & \alpha\\
    q & 1{-}q
    \end{bmatrix}~,
 \]
 where $b=(q{-}p)/(1{-}\alpha)(1{-}p)-p\alpha)$, and $a=((1{-}\alpha)-\alpha b)/\alpha$. The conditions above imply that $0\leq a, b\leq 1$ so that the refinement witness above is well-defined. 

\item Assume that $M^\alpha \sqsubset C^\alpha$. This means that with $p={\cal T}(M)(\alpha)$, and $q={\cal M}(C)(\alpha)$, we may depict $p, 1{-}p$ as the orange points on the horizontal in \Fig{HS-ref-805} and $q, 1{-}q$ as the blue points. From this we see that for any $h$ with $h < (1{-}p)/\alpha$, we must have by the construction that $V_{\underline{h}}[u\triangleright M]> V_{\underline{h}}[u\triangleright C]$, as shown. Now all we do is pick $h$ to be maximal such that $\textit{err}_C(h)\leq \alpha$. If we have equality, then we have immediately that
\[
V_{\underline{h}}[u \triangleright C]= V_{\underline{h}}[u \triangleright C^\alpha] < V_{\underline{h}}[u \triangleright M^\alpha] \leq V_{\underline{h}}[u \triangleright M]~,
\]
with the first equality following from \Def{err-1828}, and the last inequality since $M\sqsubseteq M^\alpha$.  
If $\textit{err}_C(h) {<} \alpha$, this can only happen if there is some $i^*$ such that $\sum_{i>i^*}p_i {<} \alpha {<} \sum_{i\geq i^*}p_i$. In this case we set $h=q_{i^*}/p_{i^*}$, and argue as above. 

\item Suppose that for all $h$, we have $V_{\underline{h}}[u \triangleright C] \leq V_{\underline{h}}[u \triangleright M]$; from this we deduce from (2) above that $M^\alpha \sqsubseteq C^\alpha$ for all $\alpha$. This implies that $M\sqsubseteq M^\alpha \sqsubseteq C^\alpha$, for all $\alpha$. This latter property implies that $M \sqsubseteq C$. This is because $M^\alpha \sqsubseteq C^\alpha$ for all $\alpha$ implies $M\sqsubseteq \min_\alpha C^\alpha = C$, which we prove in Appendix \Sec{App-B} is well defined and is equal to $C$.

The opposite direction follows immediately from refinement of channels.
\end{enumerate}

\section{Proofs from \Sec{Galois-sec}}\label{App-B}


\subsection{Proof for \Lem{wd-1650}}
Given any trade-off function $f\in {\mathbb F}$,  ${\cal C}(f)$ in \Def{glb-1642} is a well-defined channel in ${\mathbb C}_2$.

\begin{proof} 
The proof takes the following steps.
\begin{enumerate}
\item We show that $A \min B$ is well-defined for $A,B$ $2{\times}2$ channels. (Propositions~(\ref{Construction-AminR},\ref{General-min}).)

\item Next we provide a direct construction for $\min_{i\in {\cal I}} C_i$ is finite, and each $C_i$ is a $2\times 2$  (Proposition~(\ref{Finit MIN 1209}) and \Alg{MIN-1139});

\item We extend to the general case by appealing to the continuity properties of refinement relation $\sqsubseteq$~ (Prop.~\ref{GLB 1012}).
\end{enumerate}

\begin{proposition}\label{Construction-AminR}
Let channels $A$ and $R$ 
\[
A= \begin{bmatrix}
a & 1{-}a\\
b& 1{-}b
\end{bmatrix}
\quad \text{and}\quad
R= \begin{bmatrix}
1{-}r & r\\
1{-}s& s
\end{bmatrix}
\]
be incomparable under refinement. We assume wlog $b < a$ and $r < s$. Then
\[
A \min R = \begin{bmatrix}
  a & 1{-}a{-}r & r\\
  b & 1{-}b{-}s & s
\end{bmatrix}~
\]
is refined by both $A$ and $R$.
\end{proposition}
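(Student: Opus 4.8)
The plan is to prove three things, of which the statement explicitly asks only for the second: (a) the displayed $2{\times}3$ matrix, call it $D$, is a genuine channel; (b) $D$ is refined by both $A$ and $R$; (c) $D$ is the \emph{greatest} such lower bound in $({\mathbb C}_2,\sqsubseteq)$, which is what the equality $A\min R=D$ records.

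Part (b) is pure bookkeeping once (a) is known. Take $W_A$ to be the deterministic $3{\times}2$ channel that keeps the first column of $D$ and merges its last two columns, and $W_R$ the one that merges the first two columns of $D$ and keeps the last. Then $D{\cdot}W_A=A$ and $D{\cdot}W_R=R$, because the relevant column sums telescope: $(1{-}a{-}r){+}r=1{-}a$, $(1{-}b{-}s){+}s=1{-}b$, $a{+}(1{-}a{-}r)=1{-}r$, $b{+}(1{-}b{-}s)=1{-}s$. So by \Def{ref-1814}, $D\sqsubseteq A$ and $D\sqsubseteq R$; this uses nothing beyond (a) and in particular not incomparability.

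Part (a) — showing $1{-}a{-}r\ge 0$ and $1{-}b{-}s\ge 0$ — is the heart of the argument, and it is exactly here that incomparability is needed (dropping it lets the formula produce negative entries, e.g.\ when one of $A,R$ is the identity channel). In the Barycentric picture under the uniform prior, $A$ has posteriors (written as first-component probabilities) $A_{\mathrm{lo}}=\frac{1-a}{2-a-b}<\frac12<\frac{a}{a+b}=A_{\mathrm{hi}}$ and $R$ has $R_{\mathrm{lo}}=\frac{r}{r+s}<\frac12<\frac{1-r}{2-r-s}=R_{\mathrm{hi}}$. By property~(IV), incomparability means neither of the intervals $[A_{\mathrm{lo}},A_{\mathrm{hi}}]$, $[R_{\mathrm{lo}},R_{\mathrm{hi}}]$ contains the other; after the stated wlog — which may involve relabelling the two secrets and/or reordering columns, operations that preserve $\sqsubseteq$ and the normalisations $b{<}a$, $r{<}s$ — we may assume $R_{\mathrm{lo}}\le A_{\mathrm{lo}}$ and $R_{\mathrm{hi}}\le A_{\mathrm{hi}}$. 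Cross-multiplying, these become $r(1{-}b)\le s(1{-}a)$ and $b(1{-}r)\le a(1{-}s)$. For $r>0$ the first gives $b\ge 1-\frac{s(1-a)}{r}$ and the second gives $b\le\frac{a(1-s)}{1-r}$, so $1-\frac{s(1-a)}{r}\le\frac{a(1-s)}{1-r}$; clearing denominators by $r(1{-}r)>0$, the cubic terms cancel and this collapses to $(s{-}r)(a{+}r{-}1)\le 0$, and since $s{>}r$ we get $a{+}r\le 1$, i.e.\ $1{-}a{-}r\ge 0$ (the case $r{=}0$ is immediate). Finally, were $b{+}s>1$, substituting $b>1{-}s$ into $b(1{-}r)\le a(1{-}s)$ would force $1{-}r<a$, contradicting $a{+}r\le 1$; hence $1{-}b{-}s\ge 0$ as well, and $D$ is a well-defined channel.

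For part (c) I would move to the convex-order description: under the uniform prior a two-row channel is a distribution on $[0,1]$ with mean $\frac12$, and $C\sqsubseteq C'$ iff $t\mapsto\int_0^tF_C$ lies pointwise below $t\mapsto\int_0^tF_{C'}$ (with equality at $t{=}1$). One then checks that the primitive of the CDF of $D$ is exactly the pointwise \emph{maximum} of those of $A$ and $R$ — a maximum of two convex curves, hence itself convex and realisable by a channel — so for any common lower bound $E$, from $\int_0^tF_A,\int_0^tF_R\le\int_0^tF_E$ we get $\int_0^tF_D=\max(\int_0^tF_A,\int_0^tF_R)\le\int_0^tF_E$, i.e.\ $E\sqsubseteq D$. (This step could also be deferred to the companion proposition on the general $\min$.) The main obstacle is unambiguously part (a): the non-negativity of the middle column of $D$ is not visible from the formula and genuinely depends on incomparability, via the short two-variable elimination between the simplified posterior inequalities above.
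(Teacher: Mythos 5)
Your proof is correct and follows essentially the same route as the paper's: both reduce incomparability to non-containment of the posterior intervals in the Barycentric picture under the uniform prior, and both extract the non-negativity of the middle column from the resulting cross-multiplied inequalities (the paper sandwiches $1{-}b{-}s$ between two multiples of $2{-}a{-}b{-}r{-}s$ to conclude that quantity is non-negative, whereas you eliminate $b$ to get $a{+}r\le 1$ directly and then obtain $b{+}s\le 1$ by contradiction --- the same algebra in a different order). The column-merging witnesses in your part (b) and the greatest-lower-bound discussion in your part (c) also match the paper's treatment in substance, the latter being handled there by a direct convex-combination argument rather than your convex-order/integrated-CDF sketch, but both defer the full generality to the subsequent propositions.
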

\begin{proof}
Since $N =2$, it suffices to focus on the second coordinate of the inners. For $A$ and $R$ to be incomparable under refinement we must have the inners of neither $A$ or $R$ is fully embedded in the convex hull of the other. We first consider the case where the centre of mass $m$ is in the middle and discuss later that the results hold under a general centre of mass. That is, we first consider \[\frac{b}{a+b} \leq \frac{1-s}{2-r-s} < m= \frac{1}{2} <\frac{1-b}{2-a-b} \leq \frac{s}{r+s}\]
From $\frac{1-b}{2-a-b} \leq \frac{s}{r+s}$ we conclude
\begin{align}
1-b &\leq \frac{s}{r+s}(2-a-b) \Rightarrow\\
1-b-s&\leq \frac{s}{r+s}(2-a-b)-s = \frac{s}{r+s}(2-a-b-r-s)
\end{align}
From $\frac{b}{a+b} \leq \frac{1-s}{2-r-s} $ we conclude
\begin{align}
1-s &\geq \frac{b}{a+b}(2-r-s) \Rightarrow\\
1-s-b&\geq \frac{b}{a+b}(2-r-s)-b = \frac{b}{a+b}(2-r-s-a-b)
\end{align}
Hence
\[\frac{b}{a+b}(2-r-s-a-b)\leq 1-b-s \leq \frac{s}{r+s}(2-a-b-r-s)\]
Since $\frac{b}{a+b} \leq \frac{s}{r+s}$, for the inequalities above to be true we must have $(2-a-b-r-s) \geq 0$, which implies $1-b-s \geq 0$ and $\frac{b}{a+b} \leq \frac{1-b-s}{2-r-s-a-b} \leq \frac{s}{r+s}$. Similarly, $1-a-r \geq 0$. Therefore, the channel 
\[
A \min R = \begin{bmatrix}
  a & 1{-}a{-}r & r\\
  b & 1{-}b{-}s & s
\end{bmatrix}~
\]
is well defined. It can be verified that pushing the uniform distribution into $A \min R$ will result in outers $\frac{1}{2}(a+b)$, $\frac{1}{2}(2-a-b-r-s)$ and $\frac{1}{2}(r+s)$ and give the centre of mass at $m = 1/2$.

This is shown in the figure below:
\bigskip

\begin{tikzpicture}[x=10cm, y=1cm]
  \draw[->] (0,0) -- (1.1,0) node[right] {$\mathbb{R}$};

  \coordinate (x0) at (0, 0);
  \coordinate (x1) at (0.2, 0);
  \coordinate (x2) at (0.35, 0);
  \coordinate (x3) at (0.5, 0);
  \coordinate (x4) at (0.6, 0);
    \coordinate (x42) at (0.75, 0);
  \coordinate (x5) at (0.9, 0);
    \coordinate (x6) at (1, 0);

  \foreach \x in {x0,x1,x2,x3,x4,x42,x5,x6} {
    \fill (\x) circle (0.01);
  }

  \node[below=4pt] at (x0) {$0$};
  \node[below=4pt] at (x1) {$\frac{b}{a+b}$};
  \node[below=4pt] at (x2) {$\frac{1-s}{2-r-s}$};
  \node[below=4pt] at (x3) {$\frac{1}{2}$};
  \node[below=4pt] at (x4) {$\frac{1-b}{2-a-b}$};
  \node[below=4pt] at (x42) {$\frac{1-b-s}{2-a-b-r-s}$};
  \node[below=4pt] at (x5) {$\frac{s}{r+s}$};
    \node[below=4pt] at (x6) {$1$};
\end{tikzpicture}
\bigskip

For the case of a general prior, the second coordinates of the inners on the barycentric line satisfy
\[\frac{\bar{\pi_1}b}{\pi_1a+\bar{\pi_1}b} \leq \frac{\bar{\pi_1}(1-s)}{\pi_1(1-r)+\bar{\pi_1}(1-s)} < \bar{\pi_1} < \frac{\bar{\pi_1}(1-b)}{\pi_1(1-a)+\bar{\pi_1}(1-b)} \leq \frac{\bar{\pi_1}s}{\pi_1r+\bar{\pi_1}s},\]
where $\bar{\pi_1} = 1-\pi_1$.
Using similar arguments as above, it can be verified that 
\[
A \min R = \begin{bmatrix}
  a ~~& 1{-}a{-}r ~~& r\\
  b ~~& 1{-}b{-}s ~~& s
\end{bmatrix}~
\]
will result in outers $\pi_1a+\bar{\pi_1}b$, $\pi_1(1-a-r)+\bar{\pi_1}(1-b-s)$ and $\pi_1r+\bar{\pi_1}s$ and give (second coordinate of) the centre of mass at $m = \bar{\pi_1}$.

Notice that this is the greatest lower bound, for if $M\sqsubseteq A$ and $M\sqsubseteq R$, then observe first that we can assume that $M$ has three columns. This is because for any $\alpha, \alpha'$ we can prove by direct calculation that $M \sqsubseteq M^\alpha \min M^{\alpha'}$, so we select $\alpha$ so that $M^\alpha\sqsubseteq A$ and $\alpha'$ such that $M^{\alpha'}\sqsubseteq R$. The result now follows since by the refinement assumption there must be a convex sum of the three columns of $M$ that are equal to $\begin{bmatrix} a\\ b\end{bmatrix}$, and similarly for $\begin{bmatrix} r\\ s\end{bmatrix}$. Since by the calculations above, the sum of those two columns is less than $\begin{bmatrix} 1\\ 1\end{bmatrix}$
this constitutes a well-defined refinement, which must be below $A\min R$, since the maximal refinement amongst two row channels of the form
\[
A \min R = \begin{bmatrix}
  a ~~& p \dots p' ~~& r\\
  b ~~& q \dots q' ~~& s
\end{bmatrix}~
\]
is $A \min R$, since the sum of the inner columns in the diagram is equal to $\begin{bmatrix}
  1-a - r\\
  1-b-s
\end{bmatrix}$.
\end{proof}

We now extend this fact to partial channels where all rows sum up to no more than 1.

\begin{proposition}\label{General-min}
Let partial channels $A$ and $R$ 
\[
A= \begin{bmatrix}
a & e{-}a\\
b& f{-}b
\end{bmatrix}
\quad \text{and}\quad
R= \begin{bmatrix}
e{-}r & r\\
f{-}s& s
\end{bmatrix}
\]
be incomparable under refinement. We assume wlog $b < a$ and $r < s$. Then
\[
A \min R = \begin{bmatrix}
  a ~~& e{-}a{-}r ~~& r\\
  b ~~& f{-}b{-}s ~~& s
\end{bmatrix}~
\]
is refined by both $A$ and $R$.
\end{proposition}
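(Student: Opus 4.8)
The plan is to reduce Proposition~\ref{General-min} to the already-established Proposition~\ref{Construction-AminR} by a scaling/normalisation argument. Observe that the two partial channels $A$ and $R$ share the same row-sums: row $0$ of each sums to $e$, and row $1$ of each sums to $f$. So $A$ and $R$ are (scaled) sub-distributions of the same ``shape''. First I would renormalise: divide the first row of each matrix by $e$ and the second row by $f$, obtaining genuine $2{\times}2$ channels
\[
\widehat{A} = \begin{bmatrix} a/e & 1{-}a/e\\ b/f & 1{-}b/f\end{bmatrix},
\qquad
\widehat{R} = \begin{bmatrix} 1{-}r/e & r/e\\ 1{-}s/f & s/f\end{bmatrix},
\]
which is exactly the setting of Proposition~\ref{Construction-AminR} (with $a,b,r,s$ there instantiated by $a/e, b/f, r/e, s/f$; the hypotheses $b<a$, $r<s$ transfer directly, and incomparability under refinement is preserved by this positive diagonal rescaling since it does not change the conditional/posterior structure).

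Next I would invoke Proposition~\ref{Construction-AminR} to get that
\[
\widehat{A}\min\widehat{R} = \begin{bmatrix} a/e & 1{-}a/e{-}r/e & r/e\\ b/f & 1{-}b/f{-}s/f & s/f\end{bmatrix}
\]
is well defined (all entries in $[0,1]$, in particular $1{-}a/e{-}r/e\geq 0$ and $1{-}b/f{-}s/f\geq 0$, i.e.\ $e\geq a{+}r$ and $f\geq b{+}s$) and is the greatest lower bound of $\widehat{A}$ and $\widehat{R}$ in $(\mathbb{C}_2,\sqsubseteq)$. Then I would rescale back: multiplying the first row by $e$ and the second by $f$ recovers precisely the claimed matrix $A\min R$ as displayed, and the entrywise non-negativity just established shows it is a well-defined partial channel. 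For the refinement claims $A\sqsubseteq$-above $A\min R$ (and likewise for $R$): the witnesses $W$ constructed in Proposition~\ref{Construction-AminR} for $\widehat{A}=(\widehat A\min\widehat R)\cdot W$ are $3{\times}2$ channels that depend only on the ratios, so the same $W$ witnesses $A = (A\min R)\cdot W$ after the row-rescaling, since row-rescaling by a diagonal matrix on the left commutes with right-multiplication by $W$. The glb property transfers the same way: any lower bound $M$ of $A$ and $R$ rescales to a lower bound of $\widehat A$ and $\widehat R$, hence refines $\widehat A\min\widehat R$, hence (rescaling back) $M$ refines $A\min R$.

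The main obstacle I anticipate is bookkeeping around the notion of refinement for \emph{partial} channels (sub-stochastic matrices), which the paper uses implicitly in the ${\cal C}(f)$ construction but has not spelled out carefully: one must check that the witness $W$ is still an honest (fully stochastic) channel and that the equation $A\min R = A\cdot W$ holds with the row masses tracked correctly, rather than a merely scaled version of it. Concretely, the row-sum constraint ``both partial channels have row-$0$ mass $e$ and row-$1$ mass $f$'' is what makes the rescaling legitimate and what guarantees the combined middle column $\begin{bmatrix} e{-}a{-}r\\ f{-}b{-}s\end{bmatrix}$ has non-negative entries; once that invariant is made explicit, everything else is the routine diagonal-rescaling argument sketched above, mirroring the proof of Proposition~\ref{Construction-AminR} line for line.
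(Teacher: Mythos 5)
Your reduction is correct, and it is a genuinely different route from the paper's. The paper reproves the result from scratch: it redoes the barycentric chain of inequalities directly on the partial channels, with the posterior of $A$'s first column at $\tfrac{b}{a+b}$, that of $R$'s second at $\tfrac{s}{r+s}$, and a shifted centre of mass, and extracts $e{-}a{-}r\geq 0$, $f{-}b{-}s\geq 0$ from the resulting sandwich exactly as in Proposition~\ref{Construction-AminR}. You instead factor each partial channel as $D\widehat{A}$ with $D=\mathrm{diag}(e,f)$ and observe that, because the matrix definition of refinement ($C\cdot W=C'$ with $W$ stochastic) preserves row sums and $D$ is invertible, refinement, incomparability, the witnesses, and the glb property all transfer verbatim between the partial channels and their row-normalisations; this avoids duplicating the computation and keeps the barycentric machinery (which Proposition~\ref{Construction-AminR} borrows from the hyper-distribution characterisation for \emph{proper} channels) confined to the setting where it was actually established. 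Two caveats. First, your claim that the hypotheses $b<a$, $r<s$ ``transfer directly'' to $\widehat{A},\widehat{R}$ is not literally true: $b<a$ does not imply $b/f<a/e$ unless $e=f$; the condition that actually matters is that $A$'s first column sits left of the centroid, i.e.\ $be<af$ (equivalently $\tfrac{b}{a+b}<\tfrac{f}{e+f}$), and that is what should be taken as the ``wlog'' normalisation — though the paper's own statement and proof are loose on exactly the same point (its centre of mass $m=\tfrac{f}{2}$ should read $\tfrac{f}{e+f}$). Second, note that row-normalisation moves the posteriors on the barycentric line, so the geometric pictures of $A$ and $\widehat{A}$ differ; your argument is safe only because you transfer everything through the algebraic definition of refinement rather than through the hyper semantics, and it is worth saying so explicitly.
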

Note that even though the partial channels are incomparable, the corresponding rows in both channels sum up to the same value, $e$ and $f$, respectively. 
\begin{proof}
The proof follows  along the same lines as the previous proposition. We only consider uniform prior for simplicity. Since the channels are incomparable, we must have:

\[\frac{b}{a+b} \leq \frac{f-s}{e+f-r-s} < m= \frac{f}{2} <\frac{f-b}{e+f-a-b} \leq \frac{s}{r+s}\]
Hence
\[\frac{b}{a+b}(e+f-r-s-a-b)\leq f-b-s \leq \frac{s}{r+s}(e+f-a-b-r-s)\]
Since $\frac{b}{a+b} \leq \frac{s}{r+s}$, for the inequalities above to be true we must have $(e+f-a-b-r-s) \geq 0$, which implies $f-b-s \geq 0$ and $\frac{b}{a+b} \leq \frac{f-b-s}{e+f-r-s-a-b} \leq \frac{s}{r+s}$. Similarly, $e-a-r \geq 0$. Therefore, the partial channel 
\[
A \min R = \begin{bmatrix}
  a ~~& e{-}a{-}r ~~& r\\
  b ~~& f{-}b{-}s ~~& s
\end{bmatrix}~
\]
is well defined. 
\end{proof}

\end{proof}


\begin{definition}\label{d1732}
Given two $2\times 2$ channels $C,D$, where
\[
C= \begin{bmatrix}
a & a'\\
b & b'
\end{bmatrix}
~~~~D = \begin{bmatrix}
c & c'\\
d & d'
\end{bmatrix}
\]
where all rows sum to no more than $1$, and they have the same centroid, i.e.\  $a{+}a'= c{+}c'$, and $b{+}b'= d{+}d'$. 

For such (partial) channels We say that $C \lessdot D$ if
\[
\min\{a/(a+b), (a')/(a'{+}b') \} < \min\{c/(c+d), (c')/(c'{+}d') \}~.
\]
We write \textit{MinCol(C)}for the corresponding minimum column in $C$, i.e. 
\[
\textit{MinCol(C)} = \begin{bmatrix} a \\b\end{bmatrix}~~\textit{if} ~~\min\{a/(a+b), (a')/(a'{+}b') \}= a/(a+b)~,~\textit{otherwise}~~\begin{bmatrix} a' \\b'\end{bmatrix}~.
\] 
\end{definition}

\begin{proposition}\label{Finit MIN 1209}
Let $M_1\dots M_n$ be $2{\times}2$ channels. Then $\min_{1{\leq}i{\leq   n}}M_i$ is well-defined. 
\begin{proof}
We argue by induction on $n$, that the greatest lower bound of a set of $n$ $2{\times}2$ matrices all with the same centroid have collectively a greatest lower bound. If $n{=}2$ then the result follows from QIF fact 2.

Wlog assume that $M_1$ is the minimal matrix wrt.\ $\lessdot$ and that $\textit{MinCol}(M_1) = \begin{bmatrix} a \\b\end{bmatrix}$. Now form the partial channels
$M_1\min M_2, \dots M_1 \min M_n$, 
which are all well-defined. Notice that by construction 
$\textit{MinCol}(M_1 \min M_i)= \textit{MinCol}(M_1)$, so that we may assume, for each $i$
\[
M_1\min M_i = 
\begin{bmatrix}
a & 1{-}a{-} c'_i & c_i'\\
b & 1{-}b{-}d'_i & d'_i
\end{bmatrix}~,
\]
and in fact $\textit{MinCol}(M_1)$ 
becomes s column in the overall glb.

Now form the set of $n-1$ $2\times 2'$ matrices:
\[
M_i'= 
\begin{bmatrix}
 1{-}a{-} c'_i & c_i'\\
 1{-}b{-}d'_i & d'_i
\end{bmatrix}~~,~~~ 2\leq i \leq n~.
\]
Observe that the $M'_i$ all have the same centroid which is $v-\textit{MinCol}(M_1)$, where $v$ was the former centrid. Observe also that the $M'_i$ have two columns. Hence by the induction hypothesis the greatest lower bound $M'$ exists, with centroid $\begin{bmatrix} 1{-}a \\ 1{-}b \end{bmatrix}$.  Hence 
\[
\min_{1\leq i \leq n} M_i =
\begin{bmatrix}
\begin{bmatrix}
a \\
b
\end{bmatrix}~~ | ~~ M'
\end{bmatrix}
\]
\end{proof}
\end{proposition}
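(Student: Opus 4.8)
The plan is to prove this by induction on the number $n$ of channels, at each step peeling off one \emph{extreme} column common to all pairwise meets and reducing to a greatest-lower-bound problem for $n{-}1$ smaller channels. To make the induction self-contained I would establish the slightly more general claim that \emph{any finite family of partial two-row channels with a common centroid $v=\sum_j C_{-,j}$ has a greatest lower bound in $(\mathbb{C}_2,\sqsubseteq)$}; the statement we want is the full-channel instance $v=(1,1)^{\!\top}$, and Propositions~\ref{Construction-AminR} and~\ref{General-min} supply the two-element base case. It is also convenient to normalise first: since $N\sqsubseteq M_j\sqsubseteq M_i$ makes the constraint $N\sqsubseteq M_i$ redundant, we may delete any $M_i$ that is refined by another member and so assume WLOG that the family is a $\sqsubseteq$-antichain; this only decreases $n$ and leaves the greatest lower bound unchanged.

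For the inductive step I would re-index so that $M_1$ attains the smallest posterior ratio among all columns of all the $M_i$, and set $u:=\textit{MinCol}(M_1)$. Since the family is an antichain, $M_1$ is incomparable with each $M_i$, and because $u$ realises the globally minimal ratio the construction of Proposition~\ref{General-min} makes $u$ the minimal column of every pairwise meet $P_i:=M_1\min M_i$; thus $P_i=\begin{bmatrix} u\mid M_i'\end{bmatrix}$ where the $M_i'$ share the centroid $v-u$. Applying the induction hypothesis to $\{M_i'\}_{i=2}^n$ gives $M':=\min_{i=2}^n M_i'$, and I would claim $\min_{i=1}^n M_i=\begin{bmatrix} u\mid M'\end{bmatrix}$. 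That this is a lower bound is routine: adjoining a fixed column is monotone for $\sqsubseteq$, so $\begin{bmatrix} u\mid M'\end{bmatrix}\sqsubseteq\begin{bmatrix} u\mid M_i'\end{bmatrix}=P_i\sqsubseteq M_i$ for $i\ge2$, and $P_i\sqsubseteq M_1$ as well.

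The step I expect to be the main obstacle is showing that $\begin{bmatrix} u\mid M'\end{bmatrix}$ is the \emph{greatest} lower bound. Given any common lower bound $N$, the two-element case first yields $N\sqsubseteq P_i=\begin{bmatrix} u\mid M_i'\end{bmatrix}$ for every $i\ge2$, and the crux is to ``strip the common column $u$'' \emph{uniformly in $i$}: to extract from the refinement witnesses $N\cdot W_i=P_i$ a single sub-channel of $N$ that refines every $M_i'$ simultaneously, so that the induction hypothesis gives $N\sqsubseteq\begin{bmatrix} u\mid M'\end{bmatrix}$. I would set this up in the barycentric picture of \S\ref{QIFSec}, where a two-row channel is a sub-probability measure on $[0,1]$, refinement is the convex (mean-preserving refinement) order, and $u$ is the leftmost atom shared by all the $P_i$; the key point is that $u$'s position is the same for every $i$, so the part of $N$ that must be consumed to produce that leftmost atom is common to all the witnesses, leaving a canonical remainder of $N$ below every $M_i'$. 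Pinning this down — and verifying that the remainder is a genuine partial channel with centroid $v-u$ — is the delicate convex-geometry point, and I would expect it to run along the lines of the ``greatest lower bound'' argument inside the proof of Proposition~\ref{Construction-AminR}, possibly with an auxiliary induction on the number of columns of $N$; re-attaching the fixed column $u$ at the end is order-theoretically harmless, exactly as in Propositions~\ref{Construction-AminR}--\ref{General-min}.
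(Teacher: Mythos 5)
Your proposal follows essentially the same route as the paper's own proof: induction on $n$ over families of partial two-row channels with a common centroid, with the base case supplied by Propositions~\ref{Construction-AminR} and~\ref{General-min}, then peeling off the globally minimal-ratio column $u=\textit{MinCol}(M_1)$ from every pairwise meet $M_1\min M_i$ and recursing on the stripped remainders. If anything you are more explicit than the paper at the one delicate point — showing that a common lower bound $N$ refines $\begin{bmatrix}u\mid M'\end{bmatrix}$ by stripping $u$ uniformly from the witnesses — which the paper simply asserts with ``$\textit{MinCol}(M_1)$ becomes a column in the overall glb''; your antichain normalisation is a harmless extra.
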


\begin{algorithm}
    \caption{Computing the Greatest lower bound}
    \label{MIN-1139}
    \begin{algorithmic}
 %
\State {\bf Assumes}  $M_1 \dots M_N$ are $2\times 2$ matrices with a common centroid.\\
\State {\bf input} $MM$\Comment{List of matrices}
 \State $n= 0$;
 \State  $G=\phi$
 \While{n>0} \Comment{{\bf invariant:} All matrices in $MM$ have a common centroid}
 \State $M$= smallest matrix in $MM$ wrt. $\lessdot$\Comment{Definition~\ref{d1732}}
 \State $JJ= [M\min M_i ~|~ 1{\leq}i{\leq}n,  M_i \neq M]$ \Comment{QIF fact 2}
 \State $G= G \cup MinCol(M)$ \Comment{QIF fact 2; becomes column for overall glb}
 \State $MM= [JJ[i]\setminus MinCol(M) ~|~ 1{\leq}i{\leq}n{-}1]$ \\ \Comment{Prop.~\ref{Finit MIN 1209}: channels have common centroid}
 \State $n++$
 \EndWhile
 \State {\bf Output} $G$
    \end{algorithmic}
\end{algorithm}

\begin{corollary}
Let $M\in \mathbb{\cal C}_2$ be finite. Then ${\cal C}({\cal T}(M))=M$.
\begin{proof}
Observe that ${\cal T}(M)$ is piecewise linear. This means that the specialisation of \Alg{MIN-1139} to \Alg{channel-alg-1734} shows directly that the same columns of $M$ are recovered.

We illustrate the simplification by demonstrating the first few steps of \Alg{MIN-1139}.

\[\label{M-chan}
M ~~=~~ \begin{bmatrix}
p_0 & p_1 & \dots & p_{n-1} & p_n\\
q_0 & q_1 & \dots & q_{n-1} & q_n
\end{bmatrix}~, ~~~\textit{where}~~~ q_i/p_i~\textit{increasing}
\]
We can write down the facet points, $\alpha_i$ and the corresponding ${\cal T}(M)(\alpha_i)$:

\[
\alpha_0= p_n,~ {\cal T}(M)(\alpha_0)=1{-}q_n,~~\alpha_1= p_n{+}p_{n-1},~ {\cal T}(M)(\alpha_1)=1{-}(q_n{+}q_{n-1})~\dots
\]

Thus,
\[
M^{\alpha_0} \min M^{\alpha_1} =
\begin{bmatrix} 1{-}(p_n{+}p_{n-1}) & p_{n-1} & p_n\\
1{-}(q_n{+}q_{n-1}) & q_{n-1} & q_n
\end{bmatrix}~.
\]

\[
 (M^{\alpha_0} \min M^{\alpha_2}) =\begin{bmatrix} 1{-}(p_n{+}p_{n-1}{+}p_{n-2}) & p_{n-1}{+}p_{n-2} & p_n\\
1{-}(q_n{+}q_{n-1}{+}q_{n-2}) & q_{n-1}{+}q_{n-2} & q_n\\
\end{bmatrix}
\]
we can now compute $M^{\alpha_0} \min M^{\alpha_2}$, as:
\[
 (M^{\alpha_0} \min M^{\alpha_2})\min (M^{\alpha_0} \min M^{\alpha_1})=
\begin{bmatrix} 1{-}(p_n{+}p_{n-1}{+}p_{n-2}) & p_{n-2} & p_{n-1} & p_n\\
1{-}(q_n{+}q_{n-1}{+}q_{n-2}) & q_{n-2}&q_{n-1} & q_n\\
\end{bmatrix}~.
\]
Thus \Alg{MIN-1139} computes $M$ again from the facet points.

Finally we observe that for any $\alpha$ that lies between two facet points $\alpha_k$ and $\alpha_{k+1}$, we must have that $M^{\alpha}$ refines $M^{\alpha_k}\min M^{\alpha_{k+1}}$, which follows since $\alpha= \alpha_k+\lambda_{n-k-1}$ for some $0<\lambda<1$, thus:

\[
M^\alpha = \begin{bmatrix} 1{-}\alpha_{k+1} {+} (1{-}\lambda)p_{n-k-1} & \alpha_k
+ \lambda p_{n-k-1}\\
{\cal T}(M)(\alpha_{k+1}) {+} (1{-}\lambda)q_{n-k-1} & 1- {\cal T}(M)(\alpha_k) 
+ \lambda q_{n-k-1}
\end{bmatrix}
\]
whereas, since $\alpha_{k+1}=\alpha_k+p_{{n-k-1}}$,
\[
M^{\alpha_k}\min M^{\alpha_{k+1}} =
\begin{bmatrix}
1{-}\alpha_{k+1} & p_{n-k-1} & \alpha_{k}\\
{\cal T}(M)(\alpha_{k+1}) & q_{n-k-1} & 1{-}{\cal T}(M)(\alpha_k)
\end{bmatrix}~.
\]
We can now use the refinement witness
\[
W= \begin{bmatrix}
1 & 0\\
1-\lambda & \lambda\\
0 & 1
\end{bmatrix}
\]
to show that $(M^{\alpha_k}\min M^{\alpha_{k+1}})\cdot W = M^{\alpha}$, as required.

\end{proof}
\end{corollary}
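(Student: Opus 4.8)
The plan is to reduce the infinitary greatest lower bound ${\cal C}({\cal T}(M)) = \min_{\alpha\in[0,1]} M^\alpha$ of \Def{glb-1642} to a \emph{finite} one, taken over the facet points of ${\cal T}(M)$, and then to evaluate that finite $\min$ column by column and recognise the columns of $M$. First I would normalise $M$: by QIF fact (I) reorder its columns so that the likelihood ratios $q_i/p_i$ are non-decreasing (merging columns of equal ratio and discarding zero columns, none of which alters the $\sqsubseteq$-class or ${\cal T}(M)$). By the Neyman--Pearson Lemma (\Lem{NM-1511}) together with the error function \Def{err-1828}, the most powerful test at significance level $\alpha_k := \sum_{i\ge n-k} p_i$ is deterministic on columns with power $\sum_{i\ge n-k} q_i$, so ${\cal T}(M)(\alpha_k) = 1 - \sum_{i\ge n-k} q_i$; and on each sub-interval $[\alpha_k,\alpha_{k+1}]$ the optimal test merely randomises on a single column, so ${\cal T}(M)$ is linear there. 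Hence ${\cal T}(M)\in{\mathbb F}^{PL}$ with facet points $\alpha_0 < \cdots < \alpha_n$.

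Second I would show that intermediate significance levels contribute nothing to the $\min$: for $\alpha_k\le\alpha\le\alpha_{k+1}$ one has $M^{\alpha_k}\min M^{\alpha_{k+1}}\sqsubseteq M^\alpha$. By Proposition~\ref{Construction-AminR} the $2{\times}2$ greatest lower bound is
\[
M^{\alpha_k}\min M^{\alpha_{k+1}} ~=~ \begin{bmatrix} 1{-}\alpha_{k+1} & p_{n-k-1} & \alpha_k\\ {\cal T}(M)(\alpha_{k+1}) & q_{n-k-1} & 1{-}{\cal T}(M)(\alpha_k)\end{bmatrix},
\]
and writing $\alpha = \alpha_k + \lambda\, p_{n-k-1}$ with $0\le\lambda\le1$, the post-processing witness $W$ with rows $(1,0)$, $(1{-}\lambda,\lambda)$, $(0,1)$ satisfies $(M^{\alpha_k}\min M^{\alpha_{k+1}})\cdot W = M^\alpha$ by a direct computation (using $\alpha_{k+1} = \alpha_k + p_{n-k-1}$ and the linearity of ${\cal T}(M)$ on the facet interval). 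Consequently $\min_{\alpha\in[0,1]} M^\alpha = \min_{0\le k\le n} M^{\alpha_k}$, a finite greatest lower bound that exists by Proposition~\ref{Finit MIN 1209} and is computed by \Alg{MIN-1139} --- here specialising to \Alg{channel-alg-1734}, since all the $M^{\alpha_k}$ share the centroid of the uniform prior.

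Third I would evaluate $\min_{0\le k\le n} M^{\alpha_k}$ by following the induction underlying \Alg{MIN-1139}, taking $M^{\alpha_0}$ (which one checks is $\lessdot$-minimal) as pivot: forming the pairwise lower bounds $M^{\alpha_0}\min M^{\alpha_k}$ pins the column $[p_n;q_n]$ as a common end column and leaves an aggregated column $[\sum_{1\le j\le k}p_{n-j};\sum_{1\le j\le k}q_{n-j}]$, so that taking greatest lower bounds of these partial channels in turn peels off one column $[p_{n-j};q_{n-j}]$ of $M$ at each step (the residual partial channels keeping a common centroid, so the induction hypothesis applies). Once all $n{+}1$ facet points have been processed, exactly the columns of $M$ have been recovered, giving ${\cal C}({\cal T}(M)) = M$.

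The main obstacle is precisely this last step: tracking the iterated $2{\times}2$ greatest-lower-bound construction and verifying it disassembles \emph{exactly} into the columns $[p_{n-j};q_{n-j}]$ of $M$ --- in particular that at each stage the extracted column is the right one and the residuals keep a common centroid so that Proposition~\ref{Finit MIN 1209} applies. By contrast the reduction in the second step is the routine matrix identity above, and the piecewise-linearity of ${\cal T}(M)$ in the first step is immediate from Neyman--Pearson.
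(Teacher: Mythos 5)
Your proposal is correct and follows essentially the same route as the paper's own proof: establish piecewise linearity of ${\cal T}(M)$ with facet points $\alpha_k=\sum_{i\ge n-k}p_i$, discharge the intermediate significance levels via the very same witness $W$ with rows $(1,0)$, $(1{-}\lambda,\lambda)$, $(0,1)$, and then evaluate the finite greatest lower bound over the facet points by the iterated pairwise $\min$ of \Alg{MIN-1139}, recognising the columns of $M$. The step you flag as the main obstacle --- verifying that the iteration peels off exactly the columns $[p_{n-j};q_{n-j}]$ --- is also the step the paper treats only by worked illustration of the first few stages rather than a full induction, so your account is at the same level of detail.
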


To generalise these results in ${\mathbb C}_2$, we must consider limits of directed sets within a compact set. We use abstract channels for this \cite{McIver:2014aa}, which is a generalisation of ${\mathbb C}_2$. The set of abstract channels is where we find the full correspondence between ${\mathbb F}$ and abstract channels. In the main paper we restrict to piecewise-linear trade-off functions, for ease of exposition.

\begin{proposition}\label{GLB 1012}
If ${\cal I}$ is an index set and $A_i$ is a channel with two columns then $\min_{i\in {\cal I}}A_i$ is well-defined.

\begin{proof}
$\min_{i\in {\cal I}} \Delta_i = \min_{{S}\subseteq {\cal I}} \min_{i\in S} \Delta_i$, where $S$ is any finite subset if ${\cal I}$. 

This follows since the set of $\min_{i\in S}\Delta_i$ for all finite $S$ exists by Prop.~\ref{Finit MIN 1209}  form a meet semi-Lattice, and refinement relation $\sqsubseteq$ is continuous for limits \cite{McIver:12}.
\end{proof}
\end{proposition}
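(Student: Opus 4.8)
The plan is to realise the infinite meet as a directed limit of finite meets, each of which is already known to exist. First I would reduce to the finite case: for a finite $S \subseteq \mathcal{I}$, write $M_S := \min_{i \in S} A_i$. Every proper $2{\times}2$ channel sends the uniform prior $u$ to a hyper-distribution whose centroid is $u$ itself, since the marginal-weighted average of the posteriors always returns the prior; so the subfamily $\{A_i : i \in S\}$ automatically meets the common-centroid hypothesis of Proposition~\ref{Finit MIN 1209}, and $M_S$ is therefore a well-defined greatest lower bound of $\{A_i : i \in S\}$.

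Next I would observe that $\{M_S : S \subseteq \mathcal{I}\ \text{finite}\}$ is downward directed under refinement: given finite $S, S'$, the channel $M_{S \cup S'}$ is a lower bound of every $A_i$ with $i \in S \cup S'$, hence in particular of $\{A_i : i \in S\}$, so the glb property of $M_S$ gives $M_{S \cup S'} \sqsubseteq M_S$, and symmetrically $M_{S \cup S'} \sqsubseteq M_{S'}$. I would then invoke the completion result for refinement: on abstract channels over a two-element secret space, $\sqsubseteq$ is a partial order on a compact space in which every downward-directed family has an infimum, and $\sqsubseteq$ is continuous with respect to such limits~\cite{McIver:2014aa,McIver:12}. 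Applying this to $\{M_S\}$ yields a (possibly infinite-column) abstract channel $A_\infty := \min_S M_S$. Finally I would check $A_\infty = \min_{i \in \mathcal{I}} A_i$: it is a lower bound because $A_\infty \sqsubseteq M_{\{i\}} = A_i$ for each $i$, and it is the greatest one because any $L$ with $L \sqsubseteq A_i$ for all $i$ is a lower bound of $\{A_i : i \in S\}$ for every finite $S$, hence $L \sqsubseteq M_S$ for all such $S$, hence $L \sqsubseteq A_\infty$ by the definition of the directed infimum.

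The main obstacle is guaranteeing that this directed infimum exists. The finite meets $M_S$ live in $\mathbb{C}_2$, but their limit will in general have infinitely many columns (it carries one column per ``facet point'' of the associated trade-off function, a set that may be a continuum), so one must leave $\mathbb{C}_2$ and work in its completion, the space of abstract channels. Making this precise without citation would require recalling the metric (Cauchy) completion of channels from \cite{McIver:2014aa}, showing that the space of abstract channels on a two-element input is compact, and checking that $\sqsubseteq$ is closed under the relevant directed limits; that verification is the technical heart, and it is exactly what \cite{McIver:2014aa,McIver:12} provide, which is why I would cite it rather than reprove it here.
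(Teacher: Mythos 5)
Your proposal follows essentially the same route as the paper's own proof: reduce to finite meets via Proposition~\ref{Finit MIN 1209}, observe that these form a downward-directed family (the paper's ``meet semi-lattice''), and pass to the directed infimum in the space of abstract channels using the compactness/continuity results of \cite{McIver:2014aa,McIver:12}. Your version is correct and in fact supplies details the paper elides --- the automatic satisfaction of the common-centroid hypothesis under the uniform prior, and the explicit verification that the directed limit is the greatest lower bound of the original family.
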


\section{QIF Facts for computing $\min$}
From \cite{Alvim:20a} Lemma 12.2, we know that within a state space of size $N$, a hyper distribution with $N$ or fewer independent inners can be refined to any hyper lying within its convex closure that has the same centre of mass. For $N = 2$ secrets, this becomes very simple. The inners can be represented on the barycentric interval $[0,1]$, say by the probability they assign to the second secret. Let two independent inners of a hyper be represented by two distinct $p,q$ on the barycentric interval $[0,1]$ with the centre of mass $m$, such that $p < m < q$. They are refined by any hyper with the same centre of mass such that $p \leq p' < m < q'\leq q$. This relationship does not depend on $m$. It suffices to consider the special case where $m = 1/2$, where the centre of mass of inners is balanced. This can be achieved by pushing the uniform distribution through the $2\times 2$ channel.

\begin{proposition}
WLOG, assume that $b <a, b' < a'$. Then 
\[
A = \begin{bmatrix}
a & 1{-}a\\
b & 1{-}b
\end{bmatrix} \sqsubseteq
\begin{bmatrix}
a' & 1{-}a'\\
b' & 1{-}b'
\end{bmatrix} = A' ~,~~~\textit{iff}~~~~ \frac{a'}{b'} \leq \frac{a}{b}~~\textit{and}~~\frac{1-a'}{1-b'} \geq \frac{1-a}{1-b}~.
\]
\end{proposition}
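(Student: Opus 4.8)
The plan is to prove the biconditional by characterising refinement between two $2{\times}2$ channels in terms of their Barycentric (hyper-distribution) representations, exactly as set up in the QIF facts paragraph just above the statement. Both $A$ and $A'$ have a uniform-pushforward hyper whose two inners lie on the barycentric interval $[0,1]$; by Fact (IV) in the preliminaries, $A \sqsubseteq A'$ holds if and only if the two posteriors of $[u{\triangleright}A']$ lie in the convex hull of the two posteriors of $[u{\triangleright}A]$ \emph{and} they share the same centre of mass (which is automatic here, namely $\tfrac12(b+b')$ in the second coordinate — wait, more precisely the centroid of $A$ is $(\tfrac{a+b}{2}, \tfrac{(1-a)+(1-b)}{2})$ and likewise for $A'$, and since $a+(1-a)=a'+(1-a')=1$ the centroids automatically coincide at the uniform-weighted average of the rows). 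So the whole statement reduces to: the interval spanned by the two inners of $A'$ is contained in the interval spanned by the two inners of $A$.

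Concretely, I would compute the posteriors. Pushing $u=(\tfrac12,\tfrac12)$ through $A$, the column-$0$ marginal is $\tfrac{a+b}{2}$ with posterior (probability on the second secret) $\tfrac{b}{a+b}$, and the column-$1$ marginal is $\tfrac{(1-a)+(1-b)}{2}$ with posterior $\tfrac{1-b}{(1-a)+(1-b)}$. Since $b<a$, we have $\tfrac{b}{a+b} < \tfrac12 < \tfrac{1-b}{(1-a)+(1-b)}$, so the inners of $A$ are the pair $\left(\tfrac{b}{a+b},\ \tfrac{1-b}{2-a-b}\right)$, and similarly those of $A'$ are $\left(\tfrac{b'}{a'+b'},\ \tfrac{1-b'}{2-a'-b'}\right)$. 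Containment of the $A'$-interval in the $A$-interval is then precisely the two inequalities
\[
\frac{b}{a+b} \leq \frac{b'}{a'+b'} \qquad\text{and}\qquad \frac{1-b'}{2-a'-b'} \leq \frac{1-b}{2-a-b}~.
\]
The final step is the routine algebra showing each of these is equivalent to the claimed ratio condition. For the first: $\tfrac{b}{a+b}\leq\tfrac{b'}{a'+b'}$ cross-multiplies (both denominators positive) to $b(a'+b')\leq b'(a+b)$, i.e.\ $ba' \leq ab'$, i.e.\ $\tfrac{a'}{b'}\leq\tfrac{a}{b}$ (dividing by $bb'>0$). For the second: $\tfrac{1-b'}{2-a'-b'}\leq\tfrac{1-b}{2-a-b}$ cross-multiplies to $(1-b')(2-a-b)\leq(1-b)(2-a'-b')$; noting $2-a-b = (1-a)+(1-b)$ and $2-a'-b'=(1-a')+(1-b')$, this rearranges to $(1-b')(1-a) \leq (1-b)(1-a')$, i.e.\ $\tfrac{1-a}{1-b}\leq\tfrac{1-a'}{1-b'}$, exactly as stated.

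The only genuine subtlety — and the step I would be most careful about — is justifying that interval-containment is both necessary and sufficient for refinement, rather than merely sufficient. Sufficiency is immediate from the averaging/post-processing witness (Fact (III)); necessity uses Fact (IV), which applies precisely because $A$ has exactly two posteriors, so any refinement $A'$ must have all its posteriors inside their convex hull. I would also note the degenerate cases up front (e.g.\ $b=0$ or $a=1$, where some ratio is $0$ or the denominator $2-a-b$ could vanish): when $a+b=0$ or $(1-a)+(1-b)=0$ the corresponding column of $A$ is zero and can be dropped, and the ratio conditions are read in $[0,\infty]$ with the usual conventions, matching the convention $h\in[0,\infty]$ already used in the Neyman--Pearson statement; I would handle these by the same limiting/convention argument and not dwell on them.
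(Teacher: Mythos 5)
Your proof is correct and follows essentially the same route as the paper's: push the uniform prior through both channels, reduce refinement to containment of the barycentric interval of $A'$'s inners within that of $A$'s (with the shared centre of mass $\tfrac12$), and cross-multiply to get the two ratio conditions. Your explicit attention to the necessity direction and to degenerate denominators is slightly more careful than the paper's own write-up, but the substance is identical.
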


\begin{proof}
We push uniform input distribution $\pi = [1/2, 1/2]^T$ through $A$ and $A'$. In the barycentric representation, this results in the second coordinate of the inners for $A$ to be $p = \frac{b}{a+b}$ and $q = \frac{1-b}{2-a-b}$ and for those of the inners of $A'$ to be $p' = \frac{b'}{a'+b'}$ and $q' = \frac{1-b'}{2-a'-b'}$. For $p \leq p' < \frac{1}{2} < q'\leq q$ we must have
\[
\frac{b}{a+b} \leq \frac{b'}{a'+b'}\quad\Leftrightarrow\quad \frac{a'}{b'} \leq \frac{a}{b}\]
and
\[
 \frac{1-b}{2-a-b} \geq \frac{1-b'}{2-a'-b'} \quad\Leftrightarrow\quad \frac{1-a'}{1-b'} \geq\frac{1-a}{1-b}
\]
This is shown in the figure below:
\bigskip

\begin{tikzpicture}[x=10cm, y=1cm]
  \draw[->] (0,0) -- (1.1,0) node[right] {$\mathbb{R}$};

  \coordinate (x0) at (0, 0);
  \coordinate (x1) at (0.2, 0);
  \coordinate (x2) at (0.35, 0);
  \coordinate (x3) at (0.5, 0);
  \coordinate (x4) at (0.65, 0);
  \coordinate (x5) at (0.85, 0);
    \coordinate (x6) at (1, 0);

  \foreach \x in {x0,x1,x2,x3,x4,x5,x6} {
    \fill (\x) circle (0.01);
  }

  \node[below=4pt] at (x0) {$0$};
  \node[below=4pt] at (x1) {$\frac{b}{a+b}$};
  \node[below=4pt] at (x2) {$\frac{b'}{a'+b'}$};
  \node[below=4pt] at (x3) {$\frac{1}{2}$};
  \node[below=4pt] at (x4) {$\frac{1-b'}{2-a'-b'}$};
  \node[below=4pt] at (x5) {$\frac{1-b}{2-a-b}$};
    \node[below=4pt] at (x6) {$1$};
\end{tikzpicture}
\bigskip
\end{proof}
We remark the relation is purely a relationship between channel matrix values and is independent of the prior that was pushed through. It is easy to verify that even if the inners are generated by a general prior $\pi = [\pi_1, 1-\pi_1]$, the conditions in the proposition guarantee that $p=\frac{(1-\pi_1)b}{\pi_1a+(1-\pi_1)b} \leq p'=\frac{(1-\pi_1)b'}{\pi_1a'+(1-\pi_1)b'} < m = 1-\pi_1 < \frac{(1-\pi_1)(1-b')}{\pi_1(1-a')+(1-\pi_1)(1-b')} =q' \leq \frac{(1-\pi_1)(1-b)}{\pi_1(1-a)+(1-\pi_1)(1-b)}=q$.

\newpage


\section{Illustration of privacy profiles for compositions}

\begin{figure}
\centering
\includegraphics[width=0.6\textwidth]{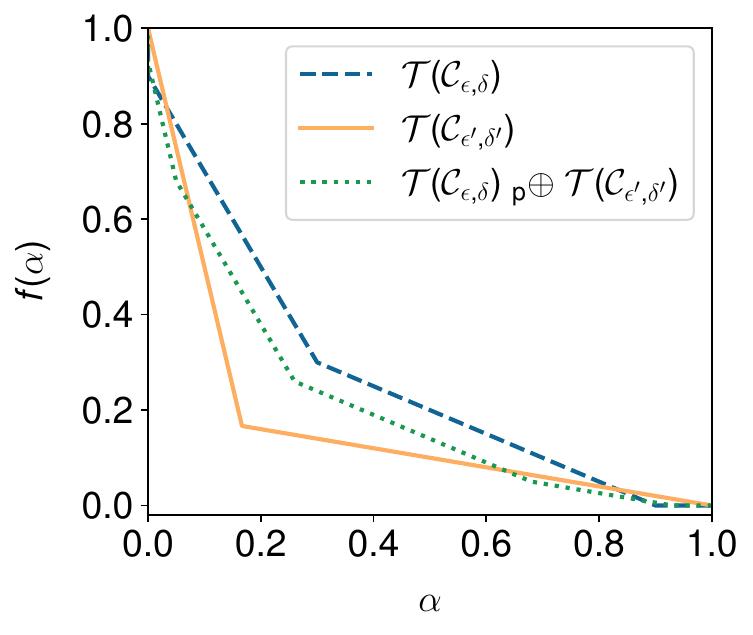}
\caption{Graph illustrating visible choice.}\label{Vis choice}
\end{figure}

\section{Detailed calculations for \Sec{Applications-main}}

\subsection{Analysis of Purification, \Alg{Purify-1511}}
We now provide a privacy analysis of \Alg{Purify-1511}. Recall that the two main operations of \Alg{Purify-1511} are 1) a hidden probabilistic choice between the true mechanism output and a uniform sample on $\mathcal{Y}$, followed by adding a sample from  Geometric distribution $G_{\epsilon'}$. That is,
\begin{align}\label{eq:purificationpart1}
(y \leftarrow M(x)) \leftindex_r\boxplus (y \leftarrow U(\mathcal{Y})) ~;~ z \leftarrow y + G_{\epsilon'}(0) 
\end{align}

We now study two different scenarios. The first one is where the output range of $M$ is the same as that of the uniform distribution, $\mathcal{Y}$. The second scenario is where the output of $M$ is not fully covered by $\mathcal{Y}$. In many practical cases this happens when the mechanism output is continuous over an unbounded region. 

Since according to \Cor{cor:eps:delta:refined:by:M} for any $(\epsilon,\delta)$-DP mechanism $\mathcal{M}$, we have $C_{\epsilon, \delta} = {\cal C}(f_{\epsilon,\delta}) \sqsubseteq M$, we study these two scenarios on $C_{\epsilon, \delta}$. 

 Consider the hidden choice channel
\[D = C_{\epsilon, \delta}~\leftindex_r\boxplus ~U(\{1,2,3,4\})
\]

It can be verified that under the condition
\[
e^{\epsilon} > \frac{1}{1-2\delta}+\frac{(1-r)(1-\delta)\delta}{0.25r}
\]
$D$ can be written in increasing order of the ratio $q/p$ as follows where $a = (1-\delta)/(1+e^{\epsilon})$:
\[
D ~~=~~ 
\begin{bmatrix}
 0.25r {+} (1{-}r)ae^{\epsilon} ~~ &~~ 0.25r{+}(1{-}r)\delta~~& ~~0.25r ~~& 0.25r {+} (1{-}r)a \\
~~ 0.25r {+} (1{-}r)a ~~&~~0.25r ~~& ~~ 0.25r{+}(1{-}r)\delta~~& 0.25r {+} (1{-}r)ae^{\epsilon} 
\end{bmatrix}.
\]
We first observe that the effective $\delta' = 0 $. That is, for the case where the range of $\mathcal{M}$ is also $\mathcal{Y}$, hidden choice with uniform mechanism is sufficient to purify $C_{\epsilon, \delta}$.

Let $r = 0.25$, $\delta = 0.1$ and $e^\epsilon = 3$. We get
\begin{align}\label{D:124}
D ~~=~~ 
\begin{bmatrix}
 0.56875 ~~ &~~ 0.1375~~& ~~0.0625 ~~& 0.23125 \\
0.23125 ~~&0.0625 ~~& ~~ 0.1375~~& 0.56875 
\end{bmatrix}.
\end{align}
The $f$-differential privacy profile of $C_{\epsilon, \delta}$ and $D$ are shown in the figure below, verifying higher privacy of $D$ and purification of $C_{\epsilon, \delta}$.

\begin{tikzpicture}
  \begin{axis}[
    width=9cm,
    axis lines=middle,
    axis line style={-}, 
    xmin=-0.05, xmax=1.05,
    ymin=-0.05, ymax=1.05,
    xtick={0,0.25,0.5,0.75,1},
    ytick={0,0.25,0.5,0.75,1},
    grid=both,
    xlabel={$\alpha$},
    ylabel={$\beta$},
    xlabel style={
      at={(axis description cs:0.5,0)},
      anchor=north,
      yshift=-6pt
    },
    ylabel style={
      at={(axis description cs:0,0.5)},
      anchor=east,
      xshift=-8pt
    },
    enlargelimits=false,
    clip=false,
    legend style={
      at={(0.97,0.97)}, 
      anchor=north east,
      draw=none,
      fill=none
    }
  ]

      \addplot[thick, mark=*, mark options={fill=white}] coordinates {
      (0,0.9)
      (0.225,0.225)
      (0.9,0)
    };
    \addlegendentry{original $C_{\epsilon,\delta}$}
    \addplot[ thick, dashed, mark=*] coordinates {
      (0,1)
      (0.23125,0.43125)
      (0.29375,0.29375)
      (0.43125,0.23125)
      (1,0)
    };
    \addlegendentry{Hidden choice $D$}


 
    \addplot[semithick, dashdotted, mark=triangle*, mark options={fill=white}] coordinates {
      (0,1)
      (0.3333,0.3333)
      (1,0)
    };\addlegendentry{random response with $\log 2$}
    \addplot[semithick, dotted, mark=triangle*, mark options={fill=white}] coordinates {
      (0,1)
      (0.2453,0.5453)
      (0.3750,0.3750)
      (0.5453,0.2453)
      (1,0)
    };
    \addlegendentry{$D$ Post-processed as $DG$}

        \addplot[semithick, dashdotted, mark=triangle*, mark options={fill=white}, color = red] coordinates {
      (0,1)
      (0.1958,0.6708)
      (0.3916,0.3916)
      (0.6708,0.1958)
      (1,0)
    };
    \addlegendentry{$C_{\epsilon, \delta}G$}
  \end{axis}
\end{tikzpicture}

We next perform addition of a truncated Geometric mechanism $G_{\epsilon'}$ to the output of the hidden choice procedure. This can be achieved by post-processing $D$ as $Z = DG_{\epsilon'}$, where $G_{\epsilon'}$ is the matrix for a truncated Geometric mechanism on 4 secrets. Choosing  $\epsilon' = \log 2$, we get\footnote{$G_{\epsilon'}$ is specified below.}
\[
Z ~~=~~ 
\begin{bmatrix}
 0.4547 ~~ &~~ 0.1703~~& ~~0.1297 ~~& 0.2453 \\
0.2453 ~~&0.1297 ~~& ~~ 0.1703~~& 0.4547 
\end{bmatrix}.
\]
This is plotted in the same figure. For comparison, we have also plotted the $f$-differential privacy profile of a random response mechanism with $\epsilon' = \log_2$ and that of $C_{\epsilon, \delta}G$. It is observed that the output of \Alg{Purify-1511} is more private than the corresponding random response with the same $\epsilon'$ as the Geometric mechanism, but less private than $C_{\epsilon, \delta}G$.

For the second scenario, let us consider a visible probabilistic choice between the original $C_{\epsilon, \delta_1}$ and a mechanism that reveals fully the secrets with probability $\delta_2$. That is, 
\[
C'_{\epsilon, \delta} = \begin{bmatrix} 1&0\\0&1
\end{bmatrix}\leftindex_{\delta_2} \oplus C_{\epsilon, \delta_1}
\]
where $\delta = \delta_2+(1-\delta_2)\delta_1$.
This can represent the case where the range of $M$ is unbounded and hence some '$\delta$' terms remain unaddressed by the hidden choice between $M$ and the uniform mechanism over a bounded region. We get
\begin{align}
D'&= \left(\begin{bmatrix} 1&0\\0&1
\end{bmatrix}\leftindex_{\delta_2} \oplus C_{\epsilon, \delta_1}\right)\leftindex_r' \boxplus ~U(\{3,4,5,6\}) 
\\&=
\begin{bmatrix} 1&0\\0&1
\end{bmatrix} \leftindex_{r'\delta_2} \oplus ( C_{\epsilon, \delta_1}\quad\leftindex_{\frac{r'(1-\delta_2)}{r'(1-\delta_2)+(1-r')}} \boxplus \quad U(\{3,4,5,6\})
\end{align}
where the last equality follows from \Thm{composition-results-1310} part 5. Let us set $\delta_2 = 0.05$, $\delta_1 = 0.1$, $\epsilon = \log 3$ and $r' = 0.25974$. These values ensure that $\frac{r'(1-\delta_2)}{r'(1-\delta_2)+(1-r')} = 0.25$ and hence,
\[
D'= \begin{bmatrix} 1&0\\0&1
\end{bmatrix}\leftindex_{\delta_2} \oplus D
\]
where $D$ was given in \eqref{D:124}.
Post-processing $D'$ with a truncated Geometric mechanism on 6 secrets and $\epsilon' = \log 2$ gives \footnote{Matrix $G_{\epsilon'}$ is specified below.}
\[
Z' ~~=~~ D' G_{\epsilon'} =  
\begin{bmatrix}
 0.2493 ~~ &~~ 0.2243~~& ~~0.1660 ~~& 0.1253~~&~~0.1176~~&~~0.1176 \\
0.1176 ~~&0.1176 ~~& ~~ 0.1253~~& 0.1660~~&~~0.2242~~&0.2493 
\end{bmatrix}.
\]

\begin{tikzpicture}
  \begin{axis}[
    width=9cm,
    axis lines=middle,
    axis line style={-}, 
    xmin=-0.05, xmax=1.05,
    ymin=-0.05, ymax=1.05,
    xtick={0,0.25,0.5,0.75,1},
    ytick={0,0.25,0.5,0.75,1},
    grid=both,
    xlabel={$\alpha$},
    ylabel={$\beta$},
    xlabel style={
      at={(axis description cs:0.5,0)},
      anchor=north,
      yshift=-6pt
    },
    ylabel style={
      at={(axis description cs:0,0.5)},
      anchor=east,
      xshift=-8pt
    },
    enlargelimits=false,
    clip=false,
    legend style={
      at={(0.97,0.97)}, 
      anchor=north east,
      draw=none,
      fill=none
    }
  ]

      \addplot[thick, mark=*, mark options={fill=white}] coordinates {
      (0,0.855)
      (0.2137,0.2137)
      (0.855,0)
    };
    \addlegendentry{original $C_{\epsilon,\delta}$}
    \addplot[ thick, dashed, mark=*] coordinates {
      (0,0.95)
      (0.2197,0.4097)
      (0.2791,0.2791)
      (0.4097,0.2197)
      (0.95,0)
    };
    \addlegendentry{$D'$: non-covering $U$}
    \addplot[semithick, dashdotted, mark=triangle*, mark options={fill=white}] coordinates {
      (0,1)
      (0.3333,0.3333)
      (1,0)
    };\addlegendentry{random response with $\log 2$}
    \addplot[semithick, dotted, mark=triangle*, mark options={fill=white}] coordinates {
      (0,1)
      (0.1176,0.7507)
      (0.2352,0.5265)
      (0.3605,0.3605)
      (0.5265,0.2352)
      (0.7507,0.1176)
      (1,0)
    };
    \addlegendentry{Post-processing $Z' = D'G$}

  \end{axis}
\end{tikzpicture}

\subsubsection{Definition of Geometric post-processors}
The following matrices specify geometric mechanism for $\epsilon' = 
log 2$ on 4 and 6 secrets, respectively.
\[
G_{\epsilon'} ~~=~~ 
\begin{bmatrix}
 \frac{2}{3} ~~ &~~ \frac{1}{6}~~& ~~\frac{1}{12} ~~& \frac{1}{12} \\\\
\frac{1}{3} ~~ &~~ \frac{1}{3}~~& ~~\frac{1}{6} ~~& \frac{1}{6} \\\\
\frac{1}{6} ~~ &~~ \frac{1}{6}~~& ~~\frac{1}{3} ~~& \frac{1}{3}\\\\
\frac{1}{12} ~~ &~~ \frac{1}{12}~~& ~~\frac{1}{6} ~~& \frac{2}{3}
\end{bmatrix}.
\]

\[
G_{\epsilon'} ~~=~~ 
\begin{bmatrix}
 \frac{2}{3} ~~ &~~ \frac{1}{6}~~& ~~\frac{1}{12} ~~& \frac{1}{24} ~~&\frac{1}{48}~~&\frac{1}{48}\\\\
\frac{1}{3} ~~ &~~ \frac{1}{3}~~& ~~\frac{1}{6} ~~& \frac{1}{12} ~~& \frac{1}{24} ~~& \frac{1}{24}\\\\
\frac{1}{6} ~~ &~~ \frac{1}{6}~~& ~~\frac{1}{3} ~~& \frac{1}{6}~~& \frac{1}{12}~~& \frac{1}{12}\\\\
\frac{1}{12} ~~ &~~ \frac{1}{12}~~& ~~\frac{1}{6} ~~& \frac{1}{3}~~& \frac{1}{6}~~& \frac{1}{6}\\\\
\frac{1}{24} ~~ &~~ \frac{1}{24}~~& ~~\frac{1}{12} ~~& \frac{1}{6}~~& \frac{1}{3}~~& \frac{1}{3}\\\\
\frac{1}{48} ~~ &~~ \frac{1}{48}~~& ~~\frac{1}{24} ~~& \frac{1}{12}~~& \frac{1}{6}~~& \frac{2}{3}
\end{bmatrix}.
\]

\subsection{Sub-sampling}

Here, we detail the calculations for the analysis of sub-sampling.

Let us assume dataset $D$ has $n$ elements denoted by $m_{1},\cdots, m_n$ and dataset $D'$ has exactly the same elements as $D$ plus one more element denoted by $m_{n+1}$. We can characterise the input-output process of Poisson sub-sampling where each element is independently sampled with probability $\gamma$ as follows. If the input is a subset of $D$, say $X$, the output is the same subset $X$  with probability $1-\gamma$ and $X \cup \{m_{n+1}\}$ with probability $\gamma$.

\begin{theorem}
For Poisson sub-sampling the pre-processing matrix is given by
\[
P = \begin{bmatrix}
1 & 0\\
1-\gamma & \gamma
\end{bmatrix}
%
\]
\end{theorem}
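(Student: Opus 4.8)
The plan is to recover $P$ by conditioning the Poisson sampler on the portion of its randomness shared by the two adjacent datasets, namely the independent Bernoulli$(\gamma)$ draws for the common records $m_1,\dots,m_n$. Fix the adjacent pair $D_0 = D$ and $D_1 = D' = D \cup \{m_{n+1}\}$ (by symmetry any adjacent pair can be relabelled this way), and let $\mathcal{S}$ denote the channel computed by \Alg{Subsample-poisson-2124}, i.e.\ Poisson sampling at rate $\gamma$ followed by $M$. First I would generate the sample in two stages: draw a subset $X\subseteq\{m_1,\dots,m_n\}$, each $m_i$ included independently with probability $\gamma$, obtaining some distribution $\mu$ over subsets of $D$; this $\mu$ is the \emph{same} whether the input is $D_0$ or $D_1$, precisely because the per-record draws are independent. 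Then, on input $D_0$ the produced sample is exactly $X$, while on input $D_1$ it is $X$ with probability $1-\gamma$ (record $m_{n+1}$ not drawn) and $X\cup\{m_{n+1}\}$ with probability $\gamma$ (record $m_{n+1}$ drawn) — this is the input–output characterisation recorded just above.

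Next I would package this as a factorisation through a two-element intermediate, taken to be the indicator of whether $m_{n+1}$ was drawn. By the previous paragraph the transition from $\{D_0,D_1\}$ to this indicator is exactly
\[
P ~=~ \begin{bmatrix} 1 & 0\\ 1-\gamma & \gamma\end{bmatrix},
\]
the first row recording that from $D_0$ the indicator is always $0$, the second that from $D_1$ it is $1$ with probability $\gamma$. Conditioned on a shared outcome $X$, what remains is the two-row channel $N_X$ with rows $M(X)$ and $M(X\cup\{m_{n+1}\})$, and the channel realised by $\mathcal{S}$ on inputs $(D_0,D_1)$, still conditioned on $X$, is precisely the matrix product $P\cdot N_X$: its $D_0$-row is $M(X)$ and its $D_1$-row is $(1-\gamma)M(X)+\gamma M(X\cup\{m_{n+1}\})$, matching the Bernoulli mixture above.

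Finally I would remove the conditioning. Since $\mu$ is common to both inputs, the channel of $\mathcal{S}$ on $(D_0,D_1)$ is $\mathbb{E}_{X\sim\mu}[P\cdot N_X]$, and because $P$ is a fixed matrix this equals $P\cdot \mathbb{E}_{X\sim\mu}[N_X] = P\cdot N$, where $N$ is the two-row channel with rows $\mathbb{E}_{X\sim\mu}M(X)$ and $\mathbb{E}_{X\sim\mu}M(X\cup\{m_{n+1}\})$ — that is, $N$ is $M$ evaluated on the adjacent pair (a $\gamma$-subsample of $D$, the same subsample together with $m_{n+1}$). This exhibits sub-sampling-then-$M$ as the pre-composition $P\cdot M$, so its privacy profile is $\mathcal{T}(P\cdot M)$, and by \Thm{Refinements-1523}(3) the bound transfers to every $M$ refined by a canonical $C_{\epsilon,\delta}$.

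I expect the main obstacle to be the bookkeeping around the shared randomness $X$: one must verify carefully that $\mu$ is genuinely independent of the input choice — this is exactly where independence of the per-record Bernoulli draws is used, and it is what distinguishes Poisson sampling from fixed-size schemes, whose pre-processing matrix would differ — and that pulling the fixed matrix $P$ out of the expectation is legitimate, which is just linearity of matrix multiplication but requires the two columns of $P$ to be matched consistently to the two rows of every $N_X$.
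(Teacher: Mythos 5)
Your proposal is correct and follows essentially the same route as the paper, which justifies $P$ by the same input--output characterisation of Poisson sampling (on input $X$ the sample is $X$ with probability $1-\gamma$ and $X\cup\{m_{n+1}\}$ with probability $\gamma$) and simply reads the matrix off from it. You merely make explicit the bookkeeping the paper leaves implicit --- the factorisation through the indicator of whether $m_{n+1}$ is drawn, and pulling the fixed matrix $P$ out of the expectation over the shared subset $X$ --- which is a welcome tightening but not a different argument.
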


Because we have shown that pro-processing preserves refinement, the above result can be used to argue that we can understand how the privacy profile changes of a mechanism changes by performing the sub-sampling on the worst $(\epsilon,\delta)$-DP channel $C_{\epsilon, \delta}$.

\begin{example}
Now let us apply this pre-processing to the worst-case $(\epsilon,\delta)$-DP channel there is, $C_{\epsilon,\delta}$. We have

\begin{align*}
P. C_{\epsilon,\delta} = \begin{bmatrix}
 \delta &a e^{\epsilon} &a & 0\\(1-\gamma)\delta\quad&
 (1-\gamma)a e^{\epsilon}+\gamma a\quad&(1-\gamma)a+\gamma a e^{\epsilon}\quad&\gamma\delta
\end{bmatrix}
\end{align*}
where $a = (1-\delta)/(1+e^{\epsilon})$. 
We compute $(\alpha,\beta)$ pairs at the following facet points

\begin{itemize}
\item \textbf{Point $A^s$:} $\alpha = 0$, $f(\alpha) = 1-\gamma\delta$.
\item \textbf{Point $B^s$:} $\alpha = a$, $f(\alpha) = (1-\gamma)\delta+(1-\gamma)ae^{\epsilon}+\gamma a $
\item \textbf{Point $C^s$:} $\alpha = a + a e^\epsilon = (1-\delta)$, $f(\alpha) = (1-\gamma)\delta$

\item \textbf{Point $D^s$:} $\alpha = 1$, $f(\alpha) = 0$
\end{itemize}

If we compare this with the critical points of the original $C_{\epsilon,\delta}$ mechanism, do we see the improvement in privacy profile (larger $\beta$ for same $\alpha$)?

\begin{itemize}
\item $\alpha = 0$, $f(\alpha) = 1-\delta < 1-\gamma\delta$
\item $\alpha = a = (1-\delta)/(1+e^{\epsilon})$, $f(\alpha) = a = (1-\delta)/(1+e^{\epsilon})$
\item $\alpha =  (1-\delta)$,  $f(\alpha) = 0 < (1-\gamma)\delta$
\end{itemize}
Therefore, the privacy profile of sampling is better than $C_{\epsilon,\delta}$ at extreme points $\alpha = 0$ and $\alpha = 1-\delta$ and $\alpha = 1$. It remains to show that at the remaining critical value $\alpha = a$, we have  
\begin{align*}
(1-\gamma)\delta+(1-\gamma)ae^{\epsilon}+\gamma a  -a =(1-\gamma)\delta + a (1-\gamma)(e^{\epsilon}-1) \geq 0
\end{align*}
Therefore, Poisson sampling improves $f$-privacy as expected. The following example visualizes this for $\epsilon = 1$, $\delta = 0.1$ and $\gamma = 0.2$.

\begin{tikzpicture}
  \begin{axis}[
    width=10cm,
    axis equal image,
    axis lines=left,
    xlabel={$\alpha$},
    ylabel={$f(\alpha)$},
    xmin=0, xmax=1,
    ymin=0, ymax=1,
    grid=both,
    legend pos=north east,
    xtick={0,0.25,0.5,0.75,1},
    ytick={0,0.25,0.5,0.75,1}
  ]

  \addplot[
    color=blue,
    thick,
    mark=*,
    mark options={fill=blue}
  ]
  coordinates {
    (0, 0.9)
    (0.24198, 0.24198)
    (0.9, 0)
  };
  \addlegendentry{$C_{\epsilon,\delta}$, $\epsilon = 1$, $\delta = 0.1$.}

  \addplot[
    color=red,
    thick,
    mark=square*,
    mark options={fill=red}
  ]
  coordinates {
    (0, 0.98)
    (0.24198, 0.654)
    (0.9, 0.08)
    (1, 0)
  };
  \addlegendentry{Sampled $C_{\epsilon,\delta}$, $\gamma = 0.2$}

  \end{axis}
\end{tikzpicture}

We note that the privacy profile of the sub-sampled mechanism is not symmetric. Symmetry comes from considering that the roles of $D,D'$ above is symmetric and in practice, it could that be $D$ is the dataset with extra element $m_{n+1}$. 
\end{example}

\end{document}